\documentclass[letterpaper]{article} %
\usepackage[draft]{aaai2027}  %
\usepackage[hyphens]{url}  %
\usepackage{graphicx} %
\usepackage{natbib}  %
\usepackage{caption} %
\usepackage{algorithm}
\usepackage{algorithmic}

\usepackage{newfloat}
\usepackage{listings}
\DeclareCaptionStyle{ruled}{labelfont=normalfont,labelsep=colon,strut=off} %
\floatstyle{ruled}
\newfloat{listing}{tb}{lst}{}
\floatname{listing}{Listing}

\usepackage{booktabs}

\newif\ifshort
\shortfalse

\newif\iftwincoversolved
\twincoversolvedfalse

\newif\ifCutLemmaLMRMIpolytree
\CutLemmaLMRMIpolytreefalse

\usepackage{amssymb}
\usepackage{amsfonts}
\usepackage{mathtools}
\usepackage{amsthm}
\usepackage{stmaryrd}
\usepackage{thm-restate}
\usepackage{hyperref}
\hypersetup{
    psdextra,
    colorlinks=true,
    citecolor=green!40!black,
    linkcolor=blue,
    urlcolor=blue!80!black
}
\usepackage{cleveref}
\usepackage{pict2e}
\usepackage{tikz}
\usetikzlibrary{arrows.meta,positioning}
\usepackage{xcolor}
\usepackage{comment}
\usepackage{xspace}
\usepackage{nicefrac}
\usepackage[inline]{enumitem}

\definecolor{blue}{HTML}{004488}
\definecolor{red}{HTML}{EE99AA}

\newcommand{\probName}[1]{\textsc{#1}\xspace}
\def\ListMI{\probName{Max List Majority Illusion}}
\def\qMI{\probName{$q$-Majority Illusion}}
\newcommand{\agents}{N}

\newcommand{\N}{\mathbb{N}}
\newcommand{\Z}{\mathbb{Z}}
\newcommand{\myvec}[1]{\mathbf{#1}}
\newcommand{\myveccomp}[1]{#1}
\newcommand{\mymatrix}[1]{\mathbf{#1}}

\newcommand{\Oh}[1]{{\mathcal{O}\left(#1\right)}}

\NewDocumentCommand{\cc}{ O{} O{} m }{\mbox{%
    \expandafter\ifx\expandafter\relax\detokenize{#2}\relax\else{#2-}\fi%
    \textsf{#3}%
    \expandafter\ifx\expandafter\relax\detokenize{#1}\relax\else{-#1}\fi%
    }\xspace}

\newcommand{\NPc}{\cc[complete]{NP}}

\newcommand{\FPT}{\cc{FPT}}
\newcommand{\XP}{\cc{XP}}

\newcommand{\Wh}[1][1]{\cc[hard]{W[#1]}}
\newcommand{\Whness}[1][1]{\cc[hardness]{W[#1]}}

\newtheorem{theorem}{Theorem}
\newtheorem{lemma}{Lemma}
\newtheorem{definition}{Definition}

\newtheorem{claim}{Claim}
\Crefname{claim}{Claim}{Claims}
\newtheorem{corollary}{Corollary}
\newenvironment{claimproof}[1]{\noindent\emph{Proof.}\hspace{0.15cm}#1}{\hfill$\blacktriangleleft$\medskip}

\newcommand{\proofsubparagraph}[1]{\medskip{{\emph{#1.}}}\hspace{0.25cm}}

\newcommand{\linkproof}[1]{%
    $\star$%
}

\newcommand{\opentriangle}{%
  \raisebox{0.2pt}{\makebox[0.77778em]{%
    \setlength{\unitlength}{0.6em}%
    \linethickness{0.4pt}\roundjoin
    \begin{picture}(1,1)
    \polygon(0,0)(1,0)(1,1)
    \end{picture}%
  }}%
}\newenvironment{proofsketch}{\par
  \pushQED{\hfill \opentriangle}%
  \normalfont \topsep 12pt\relax
  \trivlist
  \item[\hskip\labelsep
        {\itshape Proof sketch.}]\ignorespaces
}{%
  \popQED\endtrivlist
}

\def\shortred{\boldsymbol{r}}
\def\red{\ensuremath{\mathsf{red}}}
\def\shortblue{\boldsymbol{b}}
\def\blue{\ensuremath{\mathsf{blue}}}
\def\RedNo{\#{\shortred}}
\def\BlueNo{\#{\shortblue}}
\def\rblab{f}
\def\mapvc{g}
\newcommand{\ol}[1]{\overline{#1}}
\newcommand{\insubtree}[1]{\agents^{#1}}
\def\optcol{f}
\newcommand\umi[1][\rblab]{\mu_{#1}}
\newcommand\mrn[1][\rblab]{\hat\mu_{#1}}
\newcommand\vset[1]{{#1}}

\def\neighb{\mathcal{N}}
\def\root{\rho}
\DeclareMathOperator{\parent}{parent}
\DeclareMathOperator{\children}{children}

\newcommand{\DP}{\operatorname{DP}}

\def\td{\operatorname{td}}

\def\dcg{\operatorname{dcg}}
\def\fvs{\operatorname{fvs}}
\def\fes{\operatorname{fes}}
\def\dds{\operatorname{dds}}
\def\ddp{\operatorname{ddp}}
\def\parent{\operatorname{parent}}
\def\vi{\operatorname{vi}}
\def\vc{\operatorname{vc}}
\def\maxcc{\operatorname{cc}^{\max}}
\DeclareMathOperator{\cdn}{cdn}

\def\Aodd{N_{\mathsf{odd}}}
\def\Aeven{N_{\mathsf{even}}}
\def\ssize{s^\star}
\def\ssum{\myvec{s}_{\Sigma}}
\def\calT{\mathcal{T}}
\def\calS{\mathcal{S}}
\def\calP{\mathcal{P}}
\def\Podd{\calP_{\mathsf{odd}}}
\def\Peven{\calP_{\mathsf{even}}}
\def\opt{\textup{opt}}

\makeatletter
\providecommand*{\cupdot}{%
  \mathbin{%
    \mathpalette\@cupdot{}%
  }%
}
\newcommand*{\@cupdot}[2]{%
  \ooalign{%
    $\m@th#1\cup$\cr
    \sbox0{$#1\cup$}%
    \dimen@=\ht0 %
    \sbox0{$\m@th#1\cdot$}%
    \advance\dimen@ by -\ht0 %
    \dimen@=.5\dimen@
    \hidewidth\raise\dimen@\box0\hidewidth
  }%
}

\providecommand*{\bigcupdot}{%
  \mathop{%
    \vphantom{\bigcup}%
    \mathpalette\@bigcupdot{}%
  }%
}
\newcommand*{\@bigcupdot}[2]{%
  \ooalign{%
    $\m@th#1\bigcup$\cr
    \sbox0{$#1\bigcup$}%
    \dimen@=\ht0 %
    \advance\dimen@ by -\dp0 %
    \sbox0{\scalebox{2}{$\m@th#1\cdot$}}%
    \advance\dimen@ by -\ht0 %
    \dimen@=.5\dimen@
    \hidewidth\raise\dimen@\box0\hidewidth
  }%
}
\makeatother

\newcommand{\Yes}{\emph{yes}\xspace}
\newcommand{\No}{\emph{no}\xspace}
\def\labeling{labeling}

\newcommand{\TODOnote}{{\bfseries\textcolor{red}{TODO}}}

\makeatletter
\let\oldparagraph=\paragraph
\renewcommand\paragraph[1]{\oldparagraph{#1.}}
\makeatother

\title{What Makes Majority Illusion Easy to Detect?}
\author {
    Šimon Schierreich\textsuperscript{\rm 1},
    Ildikó Schlotter\textsuperscript{\rm 2,3}
}
\affiliations {
    \textsuperscript{\rm 1}Czech Technical University in Prague\\
    \textsuperscript{\rm 2}ELTE Centre for Economic and Regional Studies\\
    \textsuperscript{\rm 3}Budapest University of Technology and Economics\\
    schiesim@fit.cvut.cz, schlotter.ildiko@krtk.elte.hu
}

\begin{document}

\maketitle

\begin{abstract}
    Majority illusion is an undesirable phenomenon in social networks in which agents incorrectly perceive a minority opinion as dominant. This can severely distort collective behavior and decision-making. We study the fundamental question of \emph{detecting} whether a social network allows for a majority illusion. Formally, in the \probName{$q$-Majority Illusion} problem, we ask whether there exists a binary labeling of agents in which at least a~$q$-fraction of agents have the majority of neighbors with the minority label. We investigate how various structural properties of the underlying social network influence the tractability of this question, and provide a detailed map of its computational complexity.
\end{abstract}

\section{Introduction}

It is well-known in social sciences and behavioral economics literature that people tend to adapt their preferences to preferences of their ``neighbors''~\cite{Asch1951,Banerjee1992,CialdiniG2004}. Nowadays, this effect is drastically amplified by social networks. According to a recent survey by \citet{UNESCO2023}, over 56\% of people worldwide consider social media as their primary source of information. This means that their opinions, beliefs, and worldviews are mostly formed on social media.  However, social networks are well known to suffer from various undesirable phenomena, from echo chambers~\cite{GarimellaMGM2018,CinelliMGQS2021}, through polarization~\cite{MuscoMT2018}, misinformation spread~\cite{VosoughiRA2018}, to fake accounts~\cite{FerraraVDMF2016,YangVHM2020}, and many others~\cite{WoolleyH2018}.

An important example of such phenomena is the so-called \emph{majority illusion}. Intuitively, an agent is under majority illusion if they wrongly perceive the minority viewpoint as prevailing in society. More formally, \citet{LermanYW2016} suggest representing the social network as an undirected graph~$G$ where vertices correspond to agents and there is an edge between two agents if they have some social relationship---they are, e.g., friends, colleagues, or neighbors. Moreover, there is a labeling $f$ that assigns each agent one of two possible viewpoints, say \blue{} and \red. Assume that the majority of agents are labeled \blue. Then an agent~$a$ is under majority illusion if the strict majority of their neighbors in $G$ are labeled \red. See \Cref{fig:example} for an instance where \emph{every} agent is under majority illusion.%

\begin{figure}
    \centering
    \begin{tikzpicture}[every node/.style={draw,circle,inner sep=4pt}]
        \node[fill=red] (v0) at ( 0,0.95) {};
        \node[fill=red] (v1) at ( 0,   0) {};
        \node[fill=red] (v2) at ( 1, 1.5) {};
        \node[fill=red] (v3) at (-1, 1.5) {};

        \node[fill=blue] (v4) at (   2,   0) {};
        \node[fill=blue] (v5) at (   3, 1.5) {};
        \node[fill=blue] (v6) at (-  3, 1.5) {};
        \node[fill=blue] (v7) at (-  2,   0) {};
        \node[fill=blue] (v8) at (-2.5,0.75) {};
        \node[fill=blue] (v9) at ( 2.5,0.75) {};

        \draw[thick] (v1) -- (v2) -- (v3) -- (v1);
        \draw[thick] (v1) edge (v4) edge (v7);
        \draw[thick] (v2) edge (v5) edge (v9);
        \draw[thick] (v3) edge (v6) edge (v8);
        \draw[thick] (v0) edge (v1) edge (v2) edge (v3);
    \end{tikzpicture}
    \caption{An example of a social network where all agents are under majority illusion. There are $4$ red (light) agents and $6$ blue (dark) agents, so blue is the majority label. Every blue agent has exactly one neighbor, which is red. Every red agent has exactly three red neighbors and at most two blue neighbors. Thus, each agent incorrectly perceives the majority label to be red.}
    \label{fig:example}
\end{figure}

Building on the work of \citet{LermanYW2016}, who mainly focused on computational simulations to observe how common majority illusion is in random graphs, \citet{GrandiKLRT2025} initiated the formal study of the computational complexity of detecting majority illusion in social networks. In the \probName{$q$-Majority Illusion} problem, we are given a social network, and the task is to decide whether a labeling~$f$ exists such that at least a $q$-fraction of agents are under majority illusion. In particular, they showed that detecting majority illusion is computationally intractable. Later, \citet{VenemaLosCG2025} extended the work of both \citet{LermanYW2016} and \citet{GrandiKLRT2025} by studying, both theoretically and experimentally, which graph classes guarantee that at least a~$q$-fraction of agents are under majority 
illusion.

We now describe the results of \citet{GrandiKLRT2025} in more detail, as our work directly builds on theirs. They showed that \qMI is \NPc in general, even if the underlying social network is bipartite, of maximum degree~$6$, or planar. In contrast, they proved that the problem is in \XP when parameterized by the treewidth of $G$, and provided \FPT algorithms for parameterization by the vertex cover number and neighborhood diversity. While these results give us a basic understanding of structural restrictions that render the problem (in)tractable, they do not resolve the crucial question: %
    \emph{What structural restrictions make the majority illusion easy to detect?}
And indeed, \citet{GrandiKLRT2025} specifically call for ``another good parametrization.''

\subsection{Our Contribution}

\newcommand{\tabcite}[2][T]{\scriptsize [#1\ref{#2}]}
\newcommand{\tabcitepair}[2]{\scriptsize [T\ref{#1},T\ref{#2}]}

\begin{figure}
    \centering
    \definecolor{myGreen}{HTML}{A2F49B}
    \definecolor{myOrange}{HTML}{FEC44F}
    \definecolor{myYellow}{HTML}{FFF7BC}

    \begin{tikzpicture}[
    node distance=1.2cm and 1.4cm,
    every node/.style={
        draw,
        rounded corners=3pt,
        minimum width=1.1cm,
        minimum height=0.7cm,
        font=\small\ttfamily,
        line width=0.8pt
    },
    FPT/.style={fill=myGreen, draw=myGreen!80!black},
    Wh/.style={fill=myOrange, draw=myOrange!80!black},
    XP/.style={fill=myYellow, draw=myYellow!80!black},
    improved/.style={dashed, line width=1.5pt,draw=black},
    result/.style={line width=1.5pt,draw=black},
    arr/.style={-{Stealth[scale=0.8]}, thick, gray!70}
]
        
        \node[FPT,improved] (vc) at (0,1.7) {vc\,\tabcite[C]{cor:MI:vc:runtime}};
        \node[FPT] (mln) at (-2.9,1.7) {$\phantom{I}$ mln $\phantom{I}$};
        \node[FPT,result] (cdn) at (1.7,1.7) {cdn\,\tabcite{thm:MI:FPT:cdn}};
        \node[FPT] (d2cq) at (3.3,1.7) {$\phantom{I}$dcq$\phantom{I}$};

        \node[FPT,result] (fes) at (-3.8,-.6) {fes\,\tabcite{thm:MI:FPT:fes}};
        \node[XP,Wh,result] (ddp) at (-2,-0.6) {ddp\,\tabcite{thm:MI:W1hard:ddp}};
        \node[FPT,result] (3pvc) at (-0,0.5) {3pvc};
        \node[XP] (tc) at (1.85,-0.6) {tc};
        \node[FPT] (nd) at (3.4,-0.6) {nd};

        \node[XP,Wh,result] (dds) at (-0.3,-0.6) 
        {dds\,\tabcite{thm:MI:W1hard:dds}};

        \node[XP,Wh,result] (4pvc) at (-0.7,-1.85) {4pvc};
        \node[FPT,result] (vi) at (0.7,-1.85) 
        {vi\,\tabcite{thm:MI:FPT:vi}};

        \node[XP,Wh,result] (fvs) at (-2.5,-2.5) {fvs};
        \node[XP,Wh,result] (td) at (0,-2.9) {td\,\tabcite{thm:MI:W1hard:td}};
        \node[XP,Wh,result] (d2cl) at (2.5,-2.5) {dcg\,\tabcitepair{thm:MI:XP:dcg}{thm:MI:W1hard:dcg} };

        \node[XP,Wh] (tw) at (0,-4) {tw};

        \draw[->] (vc) -- (3pvc);
        \draw[->,bend left=20] (3pvc) to (vi);
        \draw[->] (3pvc) -- (dds);
        \draw[->] (dds) -- (4pvc);
        \draw[->] (dds) -- (fvs);
        \draw[->] (3pvc) -- (ddp);
        \draw[->] (vc) -- (tc);
        \draw[->] (vc) -- (nd);
        \draw[->] (mln) -- (fes);
        \draw[->] (mln) -- (ddp);
        \draw[->] (d2cq) -- (tc);
        \draw[->] (d2cq) -- (nd);
        \draw[->] (cdn) -- (tc);
        
        \draw[->] (fes) -- (fvs);
        \draw[->] (ddp) -- (fvs);
        \draw[->] (4pvc) -- (td);
        \draw[->] (vi) -- (td);
        \draw[->] (tc) -- (d2cl);

        \draw[->] (fvs) -- (tw);
        \draw[->] (td) -- (tw);
    \end{tikzpicture}
    \caption{An overview of our results. \probName{$q$-Majority Illusion} is in \FPT for parameters shown in green, while it is \Wh and in \XP for orange ones. A yellow background indicates parameters that are only known to be in \XP without matching \Wh{}ness. An arrow from parameter~$p$ to parameter~$p'$ means that $p'$ is bounded by a function of~$p$, and thus fixed-parameter tractability for~$p'$ implies the same for~$p$. Solid borders indicate results proved in this paper; dashed borders indicate improved running times over previous algorithms. All parameters are formally defined in \Cref{sec:prelim}.} %
    \label{fig:summary}
\end{figure}

We address the challenge posed by \citet{GrandiKLRT2025} and investigate \emph{what} makes majority illusion tractable. To do so, we systematically study the problem from the perspective of structural graph parameters, significantly improving our understanding of the tractability landscape; see \Cref{fig:summary} for a basic overview of our results. Such approach is very common in computational social choice and social networks analysis; see, e.g., the survey by~\citet{BredereckCW2013}.

We begin with parameterization by vertex integrity. This parameter lies between vertex cover number, which yields an \FPT algorithm, and treewidth, which is known to be in \XP. Informally, vertex integrity measures the number of vertices one must remove to obtain disjoint components of bounded size. This parameter is expected to be bounded for social networks containing a group of ``super-stars'' while the remaining agents form small and independent communities. We show that this parameterization also admits an \FPT algorithm by formulating our problem as an $N$-fold integer linear program---a technique that may be of independent interest. Our algorithm also yields an exponential speedup for parameterization by vertex cover number over the approach of \citet{GrandiKLRT2025}.

Next, we turn our attention to parameters based on small edge- or vertex-distance to some graph class for which computing majority illusion is polynomial-time solvable. First, we show that if the social network has small edge-distance from being a tree (i.e., has small feedback edge set number), then an \FPT algorithm is possible. However, perhaps surprisingly, this approach does not generalize to graphs with small vertex-distance to a tree (i.e., the feedback vertex set). In fact, we show even stronger intractability results. Namely, both the vertex-distance to disjoint paths and the vertex-distance to disjoint stars parameterizations are \Wh. These two results also resolve the probably most important open question left by \citet{GrandiKLRT2025} about the parameterization by treewidth: the above results imply that for this parameterization, an \FPT algorithm is, under standard complexity-theoretic assumptions, unlikely.

Finally, we turn our attention to dense social networks, i.e., networks that are highly connected. Intuitively, one could expect that for such networks, the detection of majority illusion will be much easier from the computational complexity perspective, as the agents have knowledge about the labels of a significant fraction of the other agents. And indeed, \citeauthor{GrandiKLRT2025}~\shortcite{GrandiKLRT2025} showed that if the network is of bounded neighborhood diversity, majority illusion detection is fixed-parameter tractable. We complement their results by showing that for the parameterization by the vertex distance to disjoint cliques, the problem is in \XP and, at the same time, is \Wh. We complement this result with an \FPT algorithm for social networks of bounded \emph{edge} distance to disjoint cliques.

\subsection{Related Work}

Apart from works directly related to the detection of majority illusion described above~\cite{LermanYW2016,GrandiKLRT2025,VenemaLosCG2025}, there are several papers studying the elimination of majority illusion. In this line of research, initiated by \citet{GrandiKLRT2025}, we are given a social network together with a majority-illusion labeling, and the goal is to alter a small number of agents~\cite{FioravantesLLSSW2025}, edges~\cite{GrandiKLRT2025,DippelTNRV2025}, or labels~\cite{JanaR2026} so that the majority illusion is completely eliminated. All of these works use the framework of parameterized complexity to characterize social networks that allow for efficient illusion elimination. However, we note that illusion elimination is substantially different from the detection problem we study in this work, so the results do not carry over to our setting.

Majority illusion is also closely related to models of social influence in social networks~\cite{KempeKT2015,Grandi2017,DoucetteTHLC2019,AulettaFG2020}. In these works, agents have some initial belief, and, in discrete steps, their beliefs update based on the majority belief in their neighborhood. Many works study the computational complexity of deciding how to initially label the social network so that the spread of the preferred opinion is maximized~\cite{KempeKT2015,BredereckE2017,WilderV2018,FaliszewskiGKT2018,BredereckJK2020,CastiglioniFGL2021,Schierreich2023,KnopSS2026}.
Clearly, if one can find a labeling such that each agent is under majority illusion, then in one step, all agents hold the minority opinion.

\section{Preliminaries}
\label{sec:prelim}
Let $G$ be a social network represented as an undirected graph over a set~$\agents$ of agents.
For an agent~$a \in \agents$, let $\neighb_G(a)$ denote the set of its neighbors in~$G$; we let $\deg_G(a)=|\neighb_G(a)|$. 
If $G$ is clear from the context, we may drop the subscript. 
Additionally, we let $\neighb_S(a)=S \cap \neighb(a)$ for any subset~$S$ of agents.
We say that two agents $a,a' \in \agents$ are \emph{twins}
if $\neighb(a) \setminus \{a'\}=\neighb(a') \setminus \{a\}$.

In this paper, a \emph{\labeling} of a set~$A \subseteq N$ of agents is a mapping from~$A$ to $\{\red,\blue\}$, classifying each agent in~$A$ as either red or blue; a labeling for~$G$ is one for~$\agents$. 
For a set~$A$ of agents and a \labeling~$
\rblab$ of~$G$, let $\RedNo_\rblab(A)$  and $\BlueNo_\rblab(A)$ denote the number of red and blue agents within~$A$, resp. 
We will usually assume that blue is the 
\emph{majority color in~$\rblab$}, that is, $\BlueNo_\rblab(\agents)>\RedNo_\rblab(\agents)$.
Some agent~$a$ has \emph{majority-red neighborhood} if $\RedNo_\rblab(\neighb(a))>\BlueNo_\rblab(\neighb(a))$, and 
$a$ is \emph{under majority illusion} if the majority color for~$\rblab$ is blue but $a$ has majority-red neighborhood (or vice versa).
For a set~$A \subseteq \agents$ of agents, we let 
$\umi(A)$ denote the number of agents under majority illusion in~$A$ under~$\rblab$, and 
$\mrn(A)$ the number of agents in~$A$ with a majority-red neighborhood. 

\subsection{Structural Graph Parameters}

Let us briefly summarize the key graph parameters that we use in our study to describe the structural properties of a social network. 

Given a graph~$G=(V,E)$ and a set $S$ of vertices (or edges), let $G-S$ denote the graph obtained from~$G$ by deleting all vertices (or edges, resp.) of~$S$.
The \emph{feedback vertex} (or \emph{edge}) \emph{set number} of~$G$, denoted by $\fvs(G)$ (or $\fes(G)$), is the minimum size of a set~$S$ of vertices (or edges, resp.) such that $G-S$ is acyclic. The \emph{distance to cluster graphs, to disjoint paths,} or \emph{to disjoint stars} of~$G$, denoted by $\dcg(G)$, $\ddp(G)$, or $\dds(G)$, is the minimum size of a vertex set~$S$ such that $G-S$ is the disjoint union of cliques, paths, or stars, respectively. The \emph{twin-cover} ($\operatorname{tc}$) 
of~$G$ is a variant of the distance to cluster graph where all vertices of $G-S$ are twins in $G$.
An edge equivalent of $\dcg(G)$ is \emph{cluster edge deletion number} $\cdn(G)$, which is the minimum number of edges we need to remove from $G$ to obtain a disjoint union of cliques.

The \emph{vertex cover number} of~$G$, denoted by~$\vc(G)$ is the minimum size of set~$S \subseteq V$ such that $G-S$ has no edges.
The \emph{vertex integrity} of~$G$, denoted by $\vi(G)$, is the value $\min_{S \subseteq \agents}|S|+\maxcc(G-S)$ where $\maxcc(G-S)$ is the size of a largest connected component in $G-S$; a \emph{$\vi$-set} of~$G$ is then a set~$S$ for which this expression takes the minimum.
The \emph{maximum leaf number} ($\operatorname{mln}$) of a graph~$G$ is the maximum number of leaves in a spanning tree of~$G$.
The \emph{distance to clique} %
of~$G$ is the minimum number of vertices whose deletion from~$G$ results in a single clique. 
The \emph{$\ell$-path vertex cover} %
of~$G$ for some integer~$\ell$ is the minimum number of vertices whose deletion from~$G$ results in a graph in which no simple path contains $\ell$ vertices.
The \emph{neighborhood diversity} %
of~$G$ is the minimum number of vertex sets into which the vertex set of~$G$ can be partitioned such that all vertices within each set are twins.

The \emph{span} of a rooted tree is the graph constructed from that tree by adding
an edge $\{u,v\}$ for each pair of vertices $u$ and~$v$ where $u$ is an ancestor of~$v$. The \emph{treedepth} of~$G$ is then the smallest height of a rooted tree~$T$ such that $G$ is a
subgraph of the span of~$T$.

A \emph{tree-decomposition} of~$G$ is a pair $(T,(B_t)_{t \in V(T)})$ where $T$ is a tree whose each node~$t$ is associated with a subset $B_t$ of vertices in~$G$, called a \emph{bag}, in a way that (i) each vertex of~$G$ appears in some bag, (ii) for each edge in~$G$ there is a bag containing both of its endpoints, and 
(iii) for each vertex~$v$ of~$G$, the set $\{t \in V(T): v \in B_t\}$ induces a subtree of~$T$.
The \emph{width} of a tree-composition is the size of its largest bag minus~$1$, 
and the \emph{treewidth} of~$G$ is the minimum width of any tree-decomposition of~$G$.

\subsection{Parameterized Complexity}

The framework of parameterized complexity allows us to examine computationally hard problems in a detailed fashion.
In a parameterized problem~$Q$, each input instance~$I$ is associated with an integer parameter~$k$, and an algorithm's running time is measured as a function of both~$|I|$ and~$k$. 
An algorithm for~$Q$ is \emph{fixed-parameter tractable} (\FPT) if it runs in~$f(k)|I|^{\Oh{1}}$ for some computable function~$f$. 
By contrast, an $\XP$ algorithm runs in~$|I|^{\Oh{f(k)}}$ time for some computable function~$f$. 
The class of parameterized problems admitting an \FPT (or \XP) algorithm is denoted by \FPT (or \XP, respectively). 
Showing that some problem is \emph{not} likely to be in \FPT can be done by proving its \Whness, via a parameterized reduction from some parameterized problem already known to be \Wh. 
For an introduction and more details on this framework, see e.g., the book~\cite{cygan2015parameterized}.

\subsection{$\boldsymbol{N}$-fold Integer Programming}

In an $N$-fold IP, the goal is to minimize a linear objective $f$ over a set of structured constraints. Formally, let $r,D\in\N$ and $s_i,t_i\in \N$ for every $i\in[D]$. An $N$-fold IP contains $d = \sum_{i\in[D]} t_i$ variables partitioned into $D$ \emph{bricks}; let vector $\myvec{x^{(i)}}$ denote be the $i$-th brick. Then, the constraints %
have the following form:
\begin{align}
    && \mymatrix{E_1}\myvec{x^{(1)}} + \cdots + \mymatrix{E_D}\myvec{x^{(D)}} &= \myvec{b}_0 && \label{def:Nfold:global} \\
    \forall i \in [D] && \myvec{A_i} \myvec{x^{(i)}} &= \myvec{b}_i && \label{def:Nfold:local} \\
    \forall i \in [D] && \boldsymbol{\ell}_i \leq \myvec{x^{(i)}} &\leq \myvec{u}_i \label{def:Nfold:box} &&
\end{align}
where $\mymatrix{E_i} \in \Z^{r\times t_i}$, $\mymatrix{A_i} \in \Z^{s_i\times t_i}$, $\myvec{b}_0 \in \Z^r$, $\myvec{b}_i \in \Z^{s_i}$, and $\myvec{\ell}_i,\myvec{u}_i\in\Z^{t_i}$ for each $i\in[D]$. We call constraints of type \eqref{def:Nfold:global} \emph{global} and constraints of type \eqref{def:Nfold:local} \emph{local}.

\section{Bounded Vertex Integrity}

We begin by showing the fixed-parameter tractability of \probName{$q$-Majority Illusion}  with respect to the vertex integrity of the social network. This parameter is indeed relevant for social networks, as there are examples where we have a set of ``influencers'' and the remaining agents can be partitioned into small and disjoint communities. Also note that this parameter has been used in several similar works~\cite{DvorakEGKO2017,BodlaenderHJOOZ2020,FioravantesGM2025,KnopSS2026}. The algorithm is based on an $N$-fold ILP formulation of the \probName{$q$-Majority Illusion} problem.

\begin{theorem}\label{thm:MI:FPT:vi}
    \probName{$q$-Majority Illusion} is in \FPT %
    when parameterized by the vertex integrity $\operatorname{vi}(G)$ of the social network~$G$.
\end{theorem}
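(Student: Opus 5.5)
We prove \Cref{thm:MI:FPT:vi} via an $N$-fold IP whose dimensions depend only on $k=\vi(G)$. First, we compute a $\vi$-set $S$ with $|S|+\maxcc(G-S)\le k$; this can be done in FPT time by known algorithms (e.g., branching), or approximately in a way that suffices. We guess the labeling $\rblab|_S$ of $S$, which gives $2^{|S|}\le 2^k$ possibilities. We also guess the majority color; by symmetry we assume it is blue, and we reject guesses that end up violating $\BlueNo(\agents)>\RedNo(\agents)$ through a global constraint. Every component $C$ of $G-S$ has at most $k$ vertices. Its \emph{type} is the isomorphism class of $G[C\cup S]$ with $S$ fixed pointwise, so the number of types $\tau$ is bounded by a function of $k$.

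For each component $C$ we create one brick. Its variables are binary indicators $x_{C,\lambda}$, one for each labeling $\lambda$ of $C$ (at most $2^k$ of them). A local constraint $\sum_\lambda x_{C,\lambda}=1$ forces exactly one labeling to be chosen. Once $\rblab|_S$ and $\lambda$ are fixed, every agent in $C$ has its whole neighborhood inside $C\cup S$. Hence the number $m(\tau,\lambda)$ of agents of $C$ with majority-red neighborhood is a precomputed constant, and so are the contributions $r(\tau,\lambda)=\RedNo_\lambda(C)$ and, for each $s\in S$, $\rho_s(\tau,\lambda)=\RedNo_\lambda(\neighb_C(s))-\BlueNo_\lambda(\neighb_C(s))$. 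This finishes the local part of the model.

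The global constraints handle the remaining agents and the counts.
\begin{itemize}
\item For each $s\in S$ we guess a bit $\beta_s$ saying whether $s$ has majority-red neighborhood; there are $2^{|S|}$ guesses. We then impose $\sum_C\sum_\lambda \rho_s x_{C,\lambda}+(\text{contribution of }\neighb_S(s))\ge 1$ if $\beta_s=1$, and $\le 0$ otherwise.
\item A constraint enforces that blue is the majority color: $\sum_C\sum_\lambda r\,x_{C,\lambda}+\RedNo(S)<|\agents|/2$.
\item A constraint requires the number of agents under illusion to be large enough: $\sum_C\sum_\lambda m\,x_{C,\lambda}+\sum_s\beta_s\ge q|\agents|$.
\end{itemize}
The inequalities become equalities by adding slack variables, which we place in one extra brick with box bounds. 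This yields $r=|S|+2$ global rows. Every coefficient is bounded by $k$, the brick size is at most $2^k$ plus constants, and the local rows are $O(1)$. The known $N$-fold IP algorithms run in time $(r s \Delta)^{O(r^2 s+ rs^2)}\cdot d\log d$, which is FPT in $k$. Summing over all $2^{O(k)}$ guesses keeps the algorithm in FPT.

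The main obstacle is keeping the $N$-fold dimensions bounded. With one brick per component, the global constraints must be expressible with coefficients $\Delta$ depending only on $k$, and this holds because each component touches only $|S|<k$ vertices of $S$ with bounded multiplicity. A second concern is that the slack brick must not break the structure. If needed, we can instead aggregate the components by type and use a plain ILP whose variables count how many components of type $\tau$ receive labeling $\lambda$. That ILP has a number of variables bounded in $k$, so Lenstra's algorithm applies, but the $N$-fold version gives better running times.

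Correctness follows in both directions. Any labeling of $G$ with blue as majority that puts at least $q|\agents|$ agents under illusion induces a feasible solution for the right guesses. Conversely, a feasible solution defines a labeling, and the constraints certify both the majority color and the illusion count.
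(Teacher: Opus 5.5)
Your proposal is correct and follows essentially the same route as the paper. You guess the labeling of the $\operatorname{vi}$-set and which of its agents are under illusion. You then solve an $N$-fold IP with one brick of configuration indicators per component, a local ``exactly one configuration'' constraint, and $|S|+2$ global constraints: blue majority, the illusion status of each modulator agent, and the illusion count. Your extra details (computing the $\operatorname{vi}$-set in \FPT time, slack variables in a separate brick, and the Lenstra-based fallback) are harmless additions that the paper does not spell out.
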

\begin{proof}
    Let $\mathcal{I}=(N,G,q)$ be an instance of \probName{$q$-Majority Illusion} and $M$ be a vi-set for~$G$ of size $\operatorname{vi}(G) = k$. Our algorithm can be divided into two parts. First, we guess how a hypothetical solution \labeling~$\rblab$ labels the agents of $M$, and which agents of $M$ are under illusion in $f$ (the latter can be skipped if $q=1$, as then $\umi(\agents)=|\agents|$).
    Then, for each such possibility, we create an $N$-fold ILP that verifies whether we can extend the guessed partial labeling of~$M$ so that the correct number of agents actually experience majority illusion.

    We now formally describe the verification phase of the algorithm. Let $\rblab_M$ be some \labeling\ of the vertices of $M$ and $M'\subseteq M$ be a set of agents of $M$ under majority illusion in a hypothetical solution $\rblab$ (in an actual solution, more agents of $M$ can be under majority illusion, but the program we construct checks this only for agents of~$M'$).
    
    Let $\mathcal{H}$ %
    be the set of connected components of $G-M$.
    By definition, each component is of size at most $k$; therefore, there are at most $2^k$ different labelings%
    for every component $H\in\mathcal{H}$. We call each such labeling a \emph{configuration} of~$H$, and we use $\mathcal{C}_H$ to denote the set of all possible configurations for a component $H$. Additionally, we denote by $\mu_{H,C}$ the number of agents in component~$H$ who are under majority illusion if $H$ is in (i.e., labeled by) configuration $C\in\mathcal{C}_H$, assuming that we use labeling $f_M$ for agents in~$M$ (recall that $f_M$ is known and fixed at this time). Furthermore, we use $\BlueNo_{H,C}$ to denote the number of blue agents and $\RedNo_{H,C}$ to denote the number of red agents, respectively, in~$H$ when it is in configuration~$C$. Finally, we extend the previous notion and use $\BlueNo_{H,C}(\neighb_H(v))$ ($\RedNo_{H,C}(\neighb_H(v))$) to denote the number of blue (red, respectively) neighbors of some agent $v\in M$ in~$H$ when $H$ is in configuration~$C$. 

    Then, our $N$-fold ILP contains a binary variable $x_{H,C}$ for every $H\in\mathcal{H}$ and every $C\in\mathcal{C}_H$. If this variable is set to~$1$ for some $H$ and $C$, then, in the corresponding solution, the component~$H$ is in configuration~$C$. The program contains four types of constraints.

    First, we have a single global constraint that ensures that blue is indeed the majority color: 
    \begin{equation}
        \sum_{H\in\mathcal{H}} \sum_{C\in\mathcal{C}_H} x_{H,C}\cdot \BlueNo_{H,C} > \frac{|N|}{2} - \BlueNo_{f_M}(M)\,.\label{eq:MI:FPT:vi:blueIsMajority}
    \end{equation}

    Second, we have $|M'|$-many constraints, one for each agent of $M'$, which ensure that all agents of $M'$ are indeed under majority illusion.
    \begin{multline}
        \forall v\in M'\colon \sum_{H\in\mathcal{H}}\sum_{C\in\mathcal{C}_H}  x_{H,C}\cdot \RedNo_{H,C}(\neighb_H(v))  \\ 
         >
        \frac{\deg_G(v)}{2} - \RedNo_{f_M}(\neighb_M(v))\,,        
    \label{eq:MI:FPT:vi:modulatorAgentsUnderIllusion}
    \end{multline}
    and for agents of $M\setminus M'$, we (for technical reasons) require that none of them is under majority illusion.
    \begin{multline}
        \forall v\in M\setminus M'\colon \sum_{H\in\mathcal{H}}\sum_{C\in\mathcal{C}_H}  x_{H,C}\cdot \RedNo_{H,C}(\neighb_H(v))  \\ 
         \leq
        \frac{\deg_G(v)}{2} - \RedNo_{f_M}(\neighb_M(v))\,,        
            \label{eq:MI:FPT:vi:modulatorAgentsWithoutIllusion}
    \end{multline}

    Next, we add a single global constraint that ensures that enough agents are under majority illusion.
    \begin{equation}
        \sum_{H\in\mathcal{H}}\sum_{C\in\mathcal{C}_H} x_{H,C}\cdot \mu_{H,C} \geq \lceil q\cdot|N| \rceil  - |M'|\,.\label{eq:MI:FPT:vi:enoughAgentsUnderIllusion}
    \end{equation}
    Notice that if we remove the previous constraint and instead maximize the left side of the inequality in the objective function, we can easily find an extension of $f_M$ with the largest number of agents under majority illusion. However, since we formulate $\mathcal{I}$ as a decision problem, we keep the constraint as is and use a constant objective function.

    Finally, we need to ensure that each component $H \in \mathcal{H}$ is in exactly one configuration. This can be secured by adding a single local constraint for each  $H\in \mathcal{H}$ of the following form.
    \begin{equation}
        \forall H\in\mathcal{H}\colon \sum_{C\in\mathcal{C}_H} x_{H,C} = 1\,.\label{eq:MI:FPT:vi:eachComponentOneConfig}
    \end{equation}

    Now, we prove that it is possible to extend the given labeling~$f_M$ to a solution $f$ for $\mathcal{I}$ such that agents of $M'$ are under majority illusion if and only if the above $N$-fold IP is feasible. We start with the left-to-right implication.

    \begin{claim}
    \label{clm:nfold1}
        Let $f$ be an extension of $f_M$ so that at least $q\cdot |N|$ agents, including agents of $M'$ and excluding agents of $M\setminus M'$, are under majority illusion. Then, the constructed $N$-fold ILP admits a feasible solution~$\myvec{x}$.
    \end{claim}
    \begin{claimproof}
        Without loss of generality, we assume that blue is the predominant color. If not, we can simply swap the color of each agent and obtain another solution. Now, for every component $H\in\mathcal{H}$, we set $x_{H,C} = 1$ if and only if $H$ is in configuration~$C$ according to $f$. Clearly, for every $H$, there exists exactly one $C$ such that $x_{H,C} = 1$, so constraints of type \eqref{eq:MI:FPT:vi:eachComponentOneConfig} are easily satisfied.

        Let $\BlueNo_f$ denote the number of blue agents according to $f$. By our assumption, $\BlueNo_f > |N|/2$. Moreover, since $f$ is an extension of $f_M$, it holds that $\BlueNo_f - \BlueNo_{f_M} > |N|/2 - \BlueNo_{f_M}$. Observe that the left part of the inequality corresponds to the number of blue agents in components $H_1,\ldots,H_\ell$, so we can write it as $\sum_{H\in \mathcal{H}} \BlueNo_{H,f}$, where $\BlueNo_{H,f}$ is the number of agents in~$H$ colored blue in~$f$. This is exactly 
        \[\sum_{H\in\mathcal{H}} x_{H,C}\cdot\BlueNo_{H,C} = \sum_{H\in\mathcal{H}}\sum_{C\in\mathcal{C}_H} x_{H,X} \cdot \BlueNo_{H,C},
        \] 
        since exactly one variable~$x_{H,C}$ for some fixed component~$H$---the one corresponding to the configuration of $H$ in $f$---is set to~$1$. That is, constraint \eqref{eq:MI:FPT:vi:blueIsMajority} is satisfied.

        Let $v\in M'$ be an agent. By our assumptions, $v$ must be under majority illusion in $f$. That is, the number of red agents in its neighborhood is greater than  $\deg_G(v)/2$. Since $f$ is an extension of $f_M$, the number of red neighbors of $v$ in the components $H_1,\ldots,H_\ell$ is greater than $\deg_G(v)/2 - \RedNo^v_{f_M}$. Using an argument similar to that in the previous paragraph, we obtain that constraint~\eqref{eq:MI:FPT:vi:modulatorAgentsUnderIllusion} is satisfied for $v$. Since we picked~$v$ arbitrarily, all these constraints are necessarily satisfied. Similarly, constraint~\eqref{eq:MI:FPT:vi:modulatorAgentsWithoutIllusion} is satisfied for all $v\in M\setminus M'$.

        Finally, since $f$ is an extension of $f_M$ with at least $q\cdot |N|$ agents under majority illusion, including the agents of $M'$ and excluding agents of $M\setminus M'$, there are at least $q\cdot|N|-|M'|$ agents within the components $H_1,\ldots,H_\ell$ under illusion. 
        The number of such agents is $\sum_{H\in\mathcal{H}}\sum_{C\in\mathcal{C}_H} x_{H,C}\cdot\mu_{H,C}$ which is an integer at least $q\cdot|N|-|M'|$;  therefore constraint \eqref{eq:MI:FPT:vi:enoughAgentsUnderIllusion} is satisfied as well.
    \end{claimproof}

    Now, we show that each feasible solution $\myvec{x}$ for the $N$-fold ILP corresponds to a solution extension $f$ of $f_M$.

    \begin{claim}
    \label{clm:nfold2}
        Let $\myvec{x}$ be a feasible solution for the $N$-fold ILP. Then, there is an extension of $f$ such that at least $q\cdot |N|$ agents, including agents of $M'$ and excluding agents of $M\setminus M'$, are under majority illusion.
    \end{claim}
    \begin{claimproof}
        Given a feasible solution $\myvec{x}$, we construct the extension~$f$ of~$f_M$ as follows. For every component $H\in\mathcal{H}$ and every vertex $v\in V(H)$, we set $f(v) = C(v)$, where $C\in\mathcal{C}_H$ is the unique configuration of $H$ such that $x_{H,C} = 1$. By constraint~\eqref{eq:MI:FPT:vi:eachComponentOneConfig}, this labeling $f$ is well defined, since for each $H \in \mathcal{H}$ there is exactly one variable $x_{H,C}$ set to~$1$. Now, the total number of agents colored blue is 
        \begin{align*}
        \BlueNo_f &= \BlueNo_{f_M} + \sum_{H\in\mathcal{H}} \BlueNo_{H,f} 
        \\
        &= \BlueNo_{f_M} + \sum_{H\in\mathcal{H}}\sum_{c\in\mathcal{C}_H} x_{H,C}\cdot \BlueNo_{H,C}  >|N|/2  
        \end{align*}
        where $\BlueNo_{H,f}$ is again the number of agents in~$H$ colored blue in~$f$, 
        and the inequality follows from 
        constraint~\eqref{eq:MI:FPT:vi:blueIsMajority}. That is, the blue color is indeed the majority color.

        What remains to verify is that the correct number of agents, including all agents of $M'$, is experiencing a majority illusion. Let $v\in M'$ be an agent. If $\sum_{H\in\mathcal{H}} \RedNo^v_{H,f} + \RedNo_{f_M}^v \leq \deg_G(v)/2$, then %
        constraint~\eqref{eq:MI:FPT:vi:modulatorAgentsUnderIllusion} is violated for $v$, which contradicts that $\myvec{x}$ is a feasible solution. That is, all agents in~$M'$ are indeed under the illusion in~$f$. By the same argument, no agent of $M\setminus M'$ is under majority illusion, as otherwise, constraint~\eqref{eq:MI:FPT:vi:modulatorAgentsWithoutIllusion} is violated. Finally, the number of agents in components of $\mathcal{H}$ under illusion can be expressed as $\sum_{H\in\mathcal{H}} \mu_{H,f}$, which is equal to $\sum_{H\in\mathcal{H}}\sum_{C\in\mathcal{C}_H} x_{H,C}\cdot\mu_{H,C}$. By constraint \eqref{eq:MI:FPT:vi:enoughAgentsUnderIllusion}, this is at least $q\cdot|N| - |M'|$. Additionally, all agents of $M'$ are under the illusion and, therefore, at least $q\cdot |N| - |M'| + |M'|$ are under majority illusion, which finishes the proof.
    \end{claimproof}

    Let us prove the correctness of the algorithm
    using \Cref{clm:nfold1,clm:nfold2}. %
    Our algorithm tries all possible labelings $f_M$ over~$M$ and, for each, all possible sets~$M'$; thus, if there is a solution $f$ for~$I$, then the algorithm checks its restriction to~$M$ and, by \Cref{clm:nfold1}, returns the correct answer. If there is no solution for~$I$, the $N$-fold ILP is not feasible for any combination of $f_M$ and $M'$, so it correctly returns \emph{no}.

    For running time, the $N$-fold ILP contains $k+2$ global constraints and one constraint that is local for every brick. The largest coefficient in the constraints is also $k$, so we can solve the previous ILP in time $2^{\Oh{k^2\log k}}\cdot n^\Oh{1}$ time, using the algorithm of \citet{EisenbrandHKKLO2025}.

    For the guessing part, there are $2^\Oh{k}$ different functions~$f_M$ and, for each such function, $2^\Oh{k}$ different sets~$M'$. As we construct the previous $N$-fold ILP for every pair of possible $f_M$ and $M'$, we obtain an algorithm running in $2^\Oh{k}\cdot 2^\Oh{k} \cdot 2^\Oh{k^2\log k}\cdot n^\Oh{1} \in 2^\Oh{k^2\log k}\cdot n^\Oh{1}$, which is indeed in \FPT. Therefore, the theorem follows.
\end{proof}

Using the previous algorithm, we also significantly improve the running time for parameterization by the vertex cover number. The best known algorithm so far, %
due to \citeauthor{GrandiKLRT2025}~\shortcite{GrandiKLRT2025}, runs in $2^{2^\Oh{\operatorname{vc}(G)}}|\agents|^\Oh{1}$ time, i.e., is doubly exponential in the parameter. As $\vi(G) \leq \vc(G)+1$, by employing the previous algorithm, we directly obtain the following.

\begin{corollary}
\label{cor:MI:vc:runtime}
    The \probName{$q$-Majority Illusion} problem 
    can be solved in $2^\Oh{\operatorname{vc}(G)^2\log \operatorname{vc}(G)}\cdot n^\Oh{1}$ time.
\end{corollary}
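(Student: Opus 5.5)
The corollary follows by reducing to \Cref{thm:MI:FPT:vi} via the inequality $\vi(G)\le \vc(G)+1$, and tracking the running time of that algorithm as a function of the vertex integrity.

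First I would establish the parameter relation. Let $S$ be a minimum vertex cover of $G$, so $|S|=\vc(G)$. Then $G-S$ has no edges, so every connected component of $G-S$ is a single vertex, and $\maxcc(G-S)\le 1$. By the definition of vertex integrity as a minimum over all deletion sets, $\vi(G)\le |S|+\maxcc(G-S)\le \vc(G)+1$.

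Second, I would invoke the algorithm from the proof of \Cref{thm:MI:FPT:vi}. Its running time is $2^{\Oh{k^2\log k}}\cdot n^{\Oh{1}}$, where $k=\vi(G)$; this counts the $2^{\Oh{k}}$ guesses of $f_M$ and $M'$ and the $N$-fold ILP solved with the algorithm of \citet{EisenbrandHKKLO2025}. The algorithm needs a vi-set. I would supply the vertex cover $S$ itself in that role, since it witnesses $|S|+\maxcc(G-S)\le \vc(G)+1$. The proof of \Cref{thm:MI:FPT:vi} only uses that $|M|\le k$ and that the components of $G-M$ have size at most $k$, so any set achieving the bound works, not just an optimal one.

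The set $S$ can be computed in $1.28^{\vc(G)}\cdot n^{\Oh{1}}$ time by a standard FPT vertex cover algorithm; even a naive $2^{\vc(G)}n^{\Oh{1}}$ branching is enough. With $k=\vc(G)+1$, the bound becomes $2^{\Oh{(\vc(G)+1)^2\log(\vc(G)+1)}}\cdot n^{\Oh{1}}$, which equals $2^{\Oh{\vc(G)^2\log\vc(G)}}\cdot n^{\Oh{1}}$. The case $\vc(G)\le 1$ is trivial and absorbed into the polynomial factor.

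The only point needing care is that the theorem's algorithm assumes a vi-set is given. Computing an optimal vi-set could cost more, but using the vertex cover as the modulator sidesteps this entirely. I expect no real obstacle beyond that observation.
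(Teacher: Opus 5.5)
Your proposal is correct and follows the paper's argument: the paper obtains the corollary directly from $\vi(G)\le\vc(G)+1$ and the $2^{\Oh{k^2\log k}}\cdot n^{\Oh{1}}$ bound of \Cref{thm:MI:FPT:vi}. Your extra step of using a computed vertex cover as the modulator, rather than an optimal vi-set, is a sound detail that the paper leaves implicit.
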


\section{Bounded Feedback Edge Set}

Recall that the feedback edge set (FES) is a set of edges whose removal turns the social network into a tree. This parameter is particularly appealing because, unlike other parameters we study in this work, it can be computed in polynomial time. FES found use in the design of practical algorithms in various AI applications~\cite{GruttemeierK2020,GanianK2021,GuoLN0N2022}. 

Our $\FPT$ algorithm for FES relies on a polynomial-time algorithm for the following auxiliary problem.

\begin{definition}
    The input of the \ListMI problem is a triple $(\agents,G,L)$ where $\agents$ is a set of agents, $G$ is a social network over~$\agents$, and $L$ is a function $L\colon\agents \to \{\{\blue\},\{\red\},\{\blue,\red\}\}$ %
    that satisfies $|\{v \in \agents : \blue\in L(v)\}| > |\agents|/2$. 
    The task is to compute the maximum number of agents under majority illusion over all possible labelings~$f$ where
    \begin{itemize}
        \item $f(v) \in L(v)$ for every $v\in \agents$, and 
        \item $\BlueNo_f(\agents) > \RedNo_f(\agents)$, i.e., blue is the majority color.
    \end{itemize}
\end{definition}

Observe that due to our condition on $L$, each instance admits at least one  labeling~$f$ satisfying the above conditions. We show that if $G$ is a tree, then we can compute the solution for it in polynomial time.

\begin{lemma}
\label{lem:LMRMI:poly:tree}
    If the social network $G$ is a tree, \ListMI is solvable in polynomial time.
\end{lemma}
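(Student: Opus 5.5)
We plan to solve \ListMI on a tree by bottom-up dynamic programming over the tree rooted at an arbitrary agent~$\root$. The only global constraint is that blue must be the majority color. We handle it by tracking the number of blue agents exactly, since it is an integer between $0$ and $|\agents|$.

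For a vertex~$v$, let $\insubtree{v}$ denote the agents in the subtree rooted at~$v$. We define
\[
\DP[v, c, c_p, \beta]
\]
as the maximum number of agents in $\insubtree{v}$ that have majority-red neighborhood. The maximum ranges over labelings~$f$ of $\insubtree{v}$ that respect~$L$, satisfy $f(v)=c$, and have exactly $\beta$ blue agents in $\insubtree{v}$. Here $c_p \in \{\red,\blue\}$ is the assumed color of $\parent(v)$. It matters only for deciding whether $v$ itself has majority-red neighborhood; for the root there is no parent, and we use a dummy value. Entries with no valid labeling are set to $-\infty$. Because blue is required to be the majority color, an agent is under majority illusion exactly when it has majority-red neighborhood. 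The answer is therefore $\max_{c,\ \beta > |\agents|/2} \DP[\root,c,\cdot,\beta]$. This set is nonempty by the assumption on~$L$.

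To compute $\DP[v,c,c_p,\beta]$ from the children $u_1,\dots,u_d$ of~$v$, we must choose for each child~$u_i$ a color $c_i$ and a blue count $\beta_i$. Whether $v$ has majority-red neighborhood depends only on $c_p$ and on the number $r$ of children colored red. We therefore run an inner knapsack-style DP over the children in order. After processing $u_1,\dots,u_j$, it stores, for each pair $(r_j, \beta'_j)$, the best value of
\[
\sum_{i\le j} \DP[u_i, c_i, c, \beta_i],
\]
where $r_j$ is the number of red children among $u_1,\dots,u_j$ and $\beta'_j=\sum_{i\le j}\beta_i$. Each child's table is queried with parent color $c$, because $v$ has color~$c$. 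After all $d$ children are processed, we add $1$ if $r + [c_p=\red] > \deg(v)/2$. We also add $[c=\blue]$ to the blue count, and we enforce $c\in L(v)$.

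Each inner DP has $\Oh{d\cdot|\agents|}$ states and $\Oh{|\agents|}$ transitions per state. The whole algorithm thus runs in $\Oh{|\agents|^3}$ time, or better with the standard subtree-size bound on merging. Correctness follows by induction on the height of the subtree. The key observation is that the neighborhood of~$v$ consists exactly of its children and its parent. Consequently, whether an agent in a child's subtree has majority-red neighborhood depends only on labels within that subtree together with $f(v)$, and this value is fixed by the index~$c$.

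The main obstacle is the coupling between $v$'s own majority-illusion status and its children's choices: one cannot optimize each child independently. This is resolved by indexing the inner DP by the red-children count $r$, which is polynomially bounded. All remaining bookkeeping is routine.
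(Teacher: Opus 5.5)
Your proposal is correct and follows essentially the same route as the paper: a bottom-up DP indexed by the agent's color, its parent's color, and the exact number of blue agents in its subtree. Children are combined by an inner knapsack-style DP over the number of red children and the accumulated blue count, and the answer is read off at the root over blue counts exceeding $|N|/2$. The differences are cosmetic: you handle the root with a dummy parent color rather than a separate final formula, and you fill all blue counts for a fixed $(v,c)$ in one inner pass, which gives a tighter running-time bound than the paper's stated $\mathcal{O}(|N|^5)$.
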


\ifCutLemmaLMRMIpolytree
\begin{proofsketch}
    \TODOnote{Write this.}
\end{proofsketch}

\else
\begin{proof}
    We use dynamic programming over the tree~$G$. Let $\agents$ be the set of agents in~$G$. Consider $G$ as a rooted tree with an arbitrarily chosen agent~$\root$ as its root. For each agent~$v \in \agents$, let $\insubtree{v}$ denote the set of agents in the subtree of~$G$ rooted at~$v$.
    Moreover, if $v \neq \root$, then we write $\parent(v)$ for the parent of~$v$, and let $\children(v)=\{u \in \agents: \parent(u)=v\}$.

    For each agent $v\in\agents\setminus\{\root\}$, we compute a table~$\DP_v$ containing values $\DP_v[c,c_p,B]$ where
    \begin{itemize}
        \item $c\in\{\red,\blue\}$ is the color of agent $v$,
        \item $c_p\in\{\red,\blue\}$ is the color of $p = \operatorname{parent}(v)$, and
        \item $B\in[|\agents|]$ is the number of agents in~$N^v$ colored $\blue$.
    \end{itemize}
    We call triplets of the form $(c,c_p,B)$ a \emph{signature}, and for each signature, the table stores the maximum number of agents in~$\insubtree{v}$ that are under majority illusion assuming that $v$ is of color $c$, $p$ is of color $c_p$, there are exactly $B$ blue agents in the subtree rooted at~$v$, and the color of each agent $u\in \insubtree{v}$ is in its corresponding set $L(u)$. If no such labeling exists, the table stores $-\infty$. If $\DP_v[c,c_p,B] \not= -\infty$, we say that the signature $(c,c_p,B)$ is \emph{valid} for~$v$.

    Once the dynamic programming table is correctly computed for all agents, we can compute the solution to our instance as follows. Let $u_1,\ldots,u_\gamma$ be an arbitrary but fixed ordering of the children of the root $\root$. We return 
    \begin{equation}
        \max_{c_\rho\in L(\root)}
        \max_{B\in[\lceil|\agents|/2 + 1\rceil, |\agents|]}
        \phantom{xx}
        \max_{\mathclap{\substack{
            \myvec{B}\in[B]^\gamma\\
            \myvec{c}\in\{\red,\blue\}^\gamma\\
            \sum_{i=1}^\gamma \myveccomp{B}_i   = B - \llbracket c_\rho = \blue\rrbracket
        }}}
            \iota(\myvec{c}) + \sum_{i=1}^\gamma \DP_{u_i}[ \myveccomp{c}_i, c_\rho, \myveccomp{B}_i ] \label{eq:LMRMI:poly:tree:finalComputation}
    \end{equation}
    where $\iota$ is a function that evaluates to $1$ if the strict majority of elements of the vector $\myvec{c}$ have the value $\red$, and to $0$ otherwise. Moreover, we use $\llbracket \cdot \rrbracket$ as the Iverson bracket; i.e., an expression that evaluates to $1$ if the condition inside the brackets is satisfied and to $0$ otherwise.
 
    \proofsubparagraph{Computation} Now, we formally show how to compute the dynamic programming table for each agent~$v$ and for every signature $(c,c_p,B)$. We distinguish two cases according to the type of agent~$v$ under consideration.
    
    \smallskip
    \noindent\emph{Leaf Agent.}\hspace{0.2cm} Let $v$ be a leaf agent. Every valid signature must satisfy that the color~$c$ is allowed for~$v$, i.e., $c \in L(v)$. Moreover, the number~$B$ of blue vertices in the subtree is one if and only if the color of~$c$ is blue. Otherwise, it must hold that $B=0$. Once we get rid of all clearly invalid signatures, we have that agent~$v$ is under majority illusion if and only if its parent is red. Formally, the computation is as follows:
    \begin{equation}
    \label{eq:fes:DP-leaf}
        \DP_v[c,c_p,B] = \begin{cases}
            -\infty & \text{if } c \not\in L(v) \text{ or }
            B > 1,\\
            -\infty & \text{if } c = \red \text{ and } B = 1,\\
            -\infty & \text{if } c = \blue \text{ and } B = 0,\\
            1       & \text{if } c_p = \red,\text{ and}\\
            0       & \text{otherwise.}
        \end{cases}
    \end{equation}

    \smallskip
    \noindent\emph{Internal Agent.}\hspace{0.2cm}
    If $\DP_v[c,c_p,B]$ is such that $c \not\in L(v)$, we set $\DP_v[c,c_p,B] = -\infty$. Otherwise, we find an optimal way how to label the children of $v$ and how to split the remaining blue agents between their respective subtrees. Let %
    $\children(v) = \{u_1,\ldots,u_\gamma\}$. The  computation is %
    as follows.
    \begin{equation}
    \label{eq:fes:DP-internal}
        \DP_v[c,c_p,B] = 
            \max\limits_{\mathclap{\substack{
                \myvec{B} \in [B]^{\gamma}\\
                \myvec{c} \in \{\blue,\red\}^{\gamma}\\
                \sum_{i=1}^\gamma \myveccomp{B}_i = B -  \llbracket c = \blue\rrbracket 
            }}}
                \iota(c_p,\myvec{c}) + \sum\limits_{i=1}^{\gamma} \DP_{u_i}[\myveccomp{c}_i,c,\myveccomp{B}_i]\,.
    \end{equation}

    \proofsubparagraph{Correctness}
    We show that, indeed, each dynamic programming table is computed correctly. %

    \begin{claim}\label{clm:LMTRI:DP-correct:ltr}
        Whenever $\DP_v[c,c_p,B] = w$ and $w \not= -\infty$, there exists a labeling $f\colon \insubtree{v}\to\{\blue,\red\}$ such that ${f(v) = c}$, $\BlueNo_f(\insubtree{v}) = B$, $f(u) \in L(u)$ for every $u\in \insubtree{v}$, and, assuming the color of $\operatorname{parent}(v)$ is $c_p$, the number of agents under majority illusion in~$\insubtree{v}$ is exactly $w$.
    \end{claim}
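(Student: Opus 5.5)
We argue by induction on the height of the subtree rooted at~$v$, following the two cases of the computation. A labeling witnessing the claim is built by taking the maximizing choice in the recurrence and gluing together witnesses for the children. Throughout, ``majority illusion'' for an agent $u\in\insubtree{v}$ means that $u$ has majority-red neighborhood; blue is the majority color globally. The neighborhood of~$u$ is $\children(u)\cup\{\parent(u)\}$, which lies inside $\insubtree{v}\cup\{\parent(v)\}$. Hence once $c_p$ is fixed, whether $u$ is under illusion is determined by the labeling restricted to $\insubtree{v}$.

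\emph{Base case ($v$ a leaf).} Suppose $\DP_v[c,c_p,B]=w\neq-\infty$. By \eqref{eq:fes:DP-leaf} this forces $c\in L(v)$ and $B=\llbracket c=\blue\rrbracket$. Define $f(v)=c$. Then $f(v)\in L(v)$ and $\BlueNo_f(\insubtree{v})=B$. The only neighbor of~$v$ is its parent, so $v$ is under illusion exactly when $c_p=\red$, and this equals $w\in\{0,1\}$.

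\emph{Inductive step ($v$ internal with children $u_1,\dots,u_\gamma$).} Suppose $w\neq-\infty$. Then $c\in L(v)$, and the maximum in \eqref{eq:fes:DP-internal} is attained by some $\myvec{B},\myvec{c}$ with $\sum_i \myveccomp{B}_i=B-\llbracket c=\blue\rrbracket$. Every summand $\DP_{u_i}[\myveccomp{c}_i,c,\myveccomp{B}_i]$ is finite, since otherwise the sum would be $-\infty$ and $w$ would be too. By the induction hypothesis, each $u_i$ has a labeling $f_i$ of $\insubtree{u_i}$ with
\begin{itemize}
\item $f_i(u_i)=\myveccomp{c}_i$,
\item $\myveccomp{B}_i$ blue agents,
\item all list constraints respected,
\item exactly $\DP_{u_i}[\myveccomp{c}_i,c,\myveccomp{B}_i]$ agents under illusion, given that $v$ has color~$c$.
\end{itemize}

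The subtrees $\insubtree{u_i}$ are pairwise disjoint, so we define $f$ on $\insubtree{v}$ as the union of the $f_i$ together with $f(v)=c$. We then check the requirements in turn.
\begin{itemize}
\item The list constraints hold, because each $f_i$ respects them and $c\in L(v)$.
\item The blue count is $\llbracket c=\blue\rrbracket+\sum_i\myveccomp{B}_i=B$.
\item For an agent $u\in\insubtree{u_i}$, its neighborhood lies in $\insubtree{u_i}\cup\{v\}$, and $f$ agrees with $f_i$ there while $v$ has color $c$. So $u$'s illusion status is the one counted by $f_i$.
\item Agent $v$ itself has neighbors $u_1,\dots,u_\gamma$ and its parent, colored $\myveccomp{c}_1,\dots,\myveccomp{c}_\gamma$ and $c_p$. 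It is therefore under illusion exactly when $\iota(c_p,\myvec{c})=1$.
\end{itemize}
Summing, the number of agents under illusion is exactly $w$.

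The main obstacle is the locality argument: we must make sure that no agent's illusion status depends on colors outside $\insubtree{v}\cup\{\parent(v)\}$, which is what justifies the gluing. This holds precisely because $G$ is a tree. We must also read $\iota(c_p,\myvec{c})$ as the majority test over the $\gamma+1$ neighbor colors.
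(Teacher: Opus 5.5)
Your proposal is correct and takes essentially the same route as the paper. Both argue by bottom-up induction, read the leaf case directly off the leaf recurrence, and in the internal case take the maximizing vectors $\myvec{c},\myvec{B}$, glue the children's witnesses together with $f(v)=c$ (these witnesses exist because every summand is finite), and account for $v$ itself via $\iota(c_p,\myvec{c})$. Your explicit locality remark, that each agent's neighborhood lies in $\insubtree{u_i}\cup\{v\}$ (or, for $v$, in its children plus its parent), is what the paper uses implicitly to justify the gluing.
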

    \begin{claimproof}
        We prove the claim by bottom-up induction over the agents in the tree. Let $v$ be a leaf agent that admits a valid signature, so that  $\DP_v[c,c_p,B] = w \neq -\infty$. By our computation in~\eqref{eq:fes:DP-leaf}, $w\in\{0,1\}$. We define~$f$ by setting $f(v) = c$ and $f(v_p)=c_p$. Clearly, $c\in L(v)$, as otherwise we would have $w = -\infty$ by the first condition in~\eqref{eq:fes:DP-leaf}.
        If $c = \red$, then $B = 0 = \BlueNo_f(\insubtree{v})$, and if $c=\blue$, then $B = 1 = \BlueNo_f(\insubtree{v})$. Finally, the number of agents in~$\insubtree{v}$ under the illusion with majority-red neighborhood, by $f(v_p)=c_p$ is $1$ if and only if $c_p = \red$. Thus, $f$ has the desired properties.

        Assume now that $v$ is an internal agent and the claim holds for all its children; let $u_1,\dots,u_\gamma$ denote them. Let $\myvec{c}$ and~$\myvec{B}$ be the vectors where the right-hand side of~\eqref{eq:fes:DP-internal} takes its maximum. Since $\DP_v(c,c_p,B)\neq - \infty$, we know that for each $i \in [\gamma]$, the signature $(\myveccomp{c}_i,c,\myveccomp{B}_i)$ is valid for~$u_i$.
        for each $i \in [\gamma]$, let $f_i$ be the labeling for $\insubtree{u_i} \cup \{v\}$ guaranteed by the inductive hypothesis. We define~$f$ by taking the union of $f_1,\dots,f_\gamma$ (note that $f_i(v_p)=c_p$ for each $i \in [\gamma]$, so $f$ is well-defined.
        Using again the inductive hypothesis, the number of blue agents under~$f$ in~$\insubtree{v}$ is exactly $(\sum_{i=1}^\gamma \myveccomp{B}_i)+\llbracket c=\blue \rrbracket$, and we also have $f(u) \in L(u)$ for all agents $u \in \insubtree{v}$, using also that $c \in L(v)$ holds as otherwise we would have $w = - \infty$.
        Finally, the number of agents under majority illusion in~$\insubtree{u_i}$ is 
        $\DP_{u_i}(\myveccomp{c}_i,c,\myveccomp{B}_i)$ for each $i \in [\gamma]$ by induction; hence, by the definition of~$\iota$, the right-hand side of~\eqref{eq:fes:DP-internal} correctly computes the number of agents with a majority-red neighborhood in~$\insubtree{v}$ under~$f$.
     \end{claimproof}

    Next, we show that the value stored for each signature is indeed the maximum possible number of agents under illusion.

    \begin{claim}\label{clm:LMTRI:DP-correct:rtl}
        If a labeling $f\colon \insubtree{v} \cup \{v_p\} \to\{\blue,\red\}$ exist such that $f(v) = c$, $f(v_p)=c_p$,  $\BlueNo_f(\insubtree{v}) = B$, and $f(u)\in L(u)$ for every $u\in \insubtree{v}$, then $\DP_v[c,c_p,B] \geq \mrn[f](\insubtree{v})$.
    \end{claim}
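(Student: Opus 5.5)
We prove \Cref{clm:LMTRI:DP-correct:rtl} by bottom-up induction over the rooted tree, mirroring the proof of \Cref{clm:LMTRI:DP-correct:ltr}. Fix a labeling $f$ of $\insubtree{v}\cup\{v_p\}$ with $f(v)=c$, $f(v_p)=c_p$, $\BlueNo_f(\insubtree{v})=B$ and $f(u)\in L(u)$ for all $u\in\insubtree{v}$. Here $\mrn[f](\insubtree{v})$ counts the agents of $\insubtree{v}$ with majority-red neighborhood. For every $u\in\insubtree{v}$, the whole neighborhood of $u$ lies in $\insubtree{v}\cup\{v_p\}$, so this count is determined by $f$ alone. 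The goal is to exhibit, for the signature $(c,c_p,B)$, a candidate in \eqref{eq:fes:DP-leaf} or \eqref{eq:fes:DP-internal} whose value is at least this count.

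\emph{Base case.} Let $v$ be a leaf. Since $c=f(v)\in L(v)$ and $B=\llbracket c=\blue\rrbracket$, none of the $-\infty$ cases of \eqref{eq:fes:DP-leaf} applies. The only neighbor of $v$ is $v_p$, so $v$ has majority-red neighborhood if and only if $c_p=\red$. This is exactly the value $1$ or $0$ that \eqref{eq:fes:DP-leaf} stores, so the inequality holds with equality.

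\emph{Inductive step.} Let $v$ be internal with children $u_1,\dots,u_\gamma$. For each $i\in[\gamma]$, set $\myveccomp{c}_i=f(u_i)$ and $\myveccomp{B}_i=\BlueNo_f(\insubtree{u_i})$, and let $f_i$ be the restriction of $f$ to $\insubtree{u_i}\cup\{v\}$. Each $f_i$ satisfies the hypotheses of the claim for $u_i$ with parent color $c$. By induction, $\DP_{u_i}[\myveccomp{c}_i,c,\myveccomp{B}_i]\ge\mrn[f_i](\insubtree{u_i})$, and in particular this entry is not $-\infty$. Since the subtrees $\insubtree{u_i}$ partition $\insubtree{v}\setminus\{v\}$, we get $\sum_i\myveccomp{B}_i=B-\llbracket c=\blue\rrbracket$ and $\myveccomp{B}_i\le B$. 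So $(\myvec{c},\myvec{B})$ is a feasible choice in the maximum of \eqref{eq:fes:DP-internal}; also $c\in L(v)$, so the entry is not forced to $-\infty$.

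The remaining step is to split $\mrn[f](\insubtree{v})$ correctly. The neighborhood of $v$ is $\{v_p,u_1,\dots,u_\gamma\}$, so $v$ contributes exactly $\iota(c_p,\myvec{c})$. Every agent in $\insubtree{u_i}$ has its neighborhood inside $\insubtree{u_i}\cup\{v\}$, where $f$ and $f_i$ agree, so its status under $f$ equals its status under $f_i$. This gives
\[\mrn[f](\insubtree{v})=\iota(c_p,\myvec{c})+\sum_{i=1}^\gamma\mrn[f_i](\insubtree{u_i})\le\iota(c_p,\myvec{c})+\sum_{i=1}^\gamma\DP_{u_i}[\myveccomp{c}_i,c,\myveccomp{B}_i]\le\DP_v[c,c_p,B].\]
The main care point is this locality argument: the majority-red status of each agent depends only on colors inside the relevant subtree plus the parent vertex. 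That is why the signature records $c_p$ and why the restrictions $f_i$ must include $v$. The rest is bookkeeping of the blue counts.
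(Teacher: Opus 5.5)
Your proof is correct and follows the paper's argument. It uses bottom-up induction, takes the leaf case directly from \eqref{eq:fes:DP-leaf}, and handles the internal case by plugging the children's colors and blue counts under $f$ into the maximum of \eqref{eq:fes:DP-internal} and applying the induction hypothesis to the restrictions $f_i$. You are slightly more careful than the paper: you check that this choice of vectors is feasible in the maximum, you note that the split of $\mrn[f](\insubtree{v})$ is an equality by locality, and you state the leaf condition correctly as the parent being red (the paper's text says blue, which is a typo).
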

    \begin{claimproof}
        Again, we use bottom-up induction over the agents of the tree. First, let $v$ be a leaf agent and $f$ be a labeling corresponding to a signature $(c,c_p,B)$. By the assumptions on $f$, it must hold that $\DP_v[c,c_p,B] \in \{0,1\}$. Moreover, $\DP_v[c,c_p,B] = 1$ if and only if $c_p = \blue$, showing that the claim is indeed satisfied for leaf agents.

        Now, let $v$ be an internal agent and assume that the claim holds for all children $u\in\children(v)= \{u_1,\dots,u_{\gamma}\}$. Let $f_i$ be a restriction of~$f$ to~$\insubtree{u} \cup \{v\}$
        for a child~$u_i$ of~$v$.
        Clearly, $f_i$ corresponds to a signature $(f(u),f(v),\BlueNo_f(\insubtree{u}))$ in $\DP_u$ and, by the induction hypothesis, \[\DP_u[f(u),f(v),\BlueNo_f(\insubtree{u})] \geq \mrn[f^u](\insubtree{u}).\] By~\eqref{eq:fes:DP-internal}, the value stored in $\DP_v[c,c_p,B]$ is at least 
        \begin{align*}
        \sum_{i=1}^{\gamma} &\DP_{u_i}[f(u_i),f(v),\BlueNo_f(\insubtree{u_i})] + \iota(c_p,(f(u_i))_{i\in[\gamma]}) \\
        & \geq \sum_{i=1}^{\gamma} \mrn[f_i](\insubtree{u_i}) + \iota(c_p,(f(u_i))_{i\in[\gamma]}) \geq \mrn[f](\insubtree{v}),
        \end{align*}
        as desired.
    \end{claimproof}
    
    Next, we show that the computation of the final output as formulated in \eqref{eq:LMRMI:poly:tree:finalComputation} is also correct, which leads to the correctness of our approach. %
    According to \Cref{clm:LMTRI:DP-correct:ltr,clm:LMTRI:DP-correct:rtl}, the dynamic programming table is correctly computed for all children $u\in\children(\root)$. Recall also that there is at least one labeling for~$G$ as required and thus an optimal such labeling~$f$ always exists. Note that \eqref{eq:LMRMI:poly:tree:finalComputation} checks the combination of $c = f(\root)$, $B = \BlueNo_f(\insubtree{\rho})$, $\myvec{B} = (\BlueNo_f(\insubtree{u}))_{u\in\children(\root)}$, $\myvec{c} = (f(u))_{u\in\children(\root)}$, and, as the dynamic tables are computed correctly, the algorithm returns value of at least $\mrn[f](\agents)$ which, as the majority color  by~$B=\RedNo_f(\agents) >|\agents|/2$ is blue, equals $\umi[f](\agents)$.
 
    \proofsubparagraph{Running Time}
    The computation in a leaf agent can be done in constant time by simply checking all conditions. To compute a cell's value for an internal agent (as well as the final result in \eqref{eq:LMRMI:poly:tree:finalComputation}), we use simple dynamic programming, which runs in $\Oh{|\agents|^3}$ time. In particular, the dynamic programming table used within an internal agent~$v$  is~$\texttt{T}[i,\rho,\beta]$, where~$i$ is the currently processed child of~$v$,~$\rho$ is the number of children in~$\{u_i,\ldots,u_\gamma\}$ colored red, and~$\beta$ is the number of agents of~$T_{u_i},\ldots,T_{u_\gamma}$ we need to color blue. The stored value is the maximum number of agents under majority illusion in subtrees~$T_{u_i},\ldots,T_{u_\gamma}$, assuming that~$v$ is of color~$c$, exactly~$\rho$ roots of these subtrees are colored red, and there are exactly~$\beta$ blue agents in these subtrees. The size of the table is at most~$n\cdot \gamma\cdot n$, and each cell can be computed in time~$\Oh{|T_{u_i}|}$ by a) trying to color~$u_i$ in color~$c_i\in L(u_i)$, b) guessing the number of blue agents~$B_i$ in the respective subtree~$T_{u_i}$, and c) looking up the optimal solution for~$i+1$ with the parameters~$\rho$ and~$\beta$ appropriately modified (this is already stored in~$\texttt{T}[i+1,\cdot,\cdot]$). Moreover, for each pair of~$c_i$ and~$B_i$, we already have the partial solution computed in table~$\DP_{u_i}$. Observe that overall, the computation can be done in~$\Oh{|\agents|^2\cdot \gamma \cdot |\agents|}$ time. However, as~$\gamma \in \Oh{|\agents|}$, the overall running time of the above approach is~$\Oh{|\agents|^4}$. With a more careful analysis based on the fact that~$\sum_{i\in[\gamma]} |T_i| = \Oh{|\agents|}$, we can implement the DP in~$\Oh{|\agents|^3}$ time. We have $\Oh{|\agents|}$ tables with $\Oh{|\agents|}$ signatures for each table, giving us the overall running time of $\Oh{|\agents|^5}$.
\end{proof}

\fi 
Finally, we are ready to give our algorithm for \qMI. Given a feedback edge set~$M$, the high-level idea is to guess the labeling of agents adjacent to edges of~$M$, then perform local replacement of these edges with small gadgets that turn the social network into a prelabeled tree, and find an optimal extension of such a labeling using the algorithm from \Cref{lem:LMRMI:poly:tree}.

\begin{theorem}
\label{thm:MI:FPT:fes}
    \probName{$q$-Majority Illusion} is in \FPT when parameterized by the feedback edge set number $\operatorname{fes}(G)$ of the social network $G$.
\end{theorem}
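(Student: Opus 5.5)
Let $M$ be a minimum feedback edge set of~$G$; it can be computed in polynomial time, and $|M| = k = \fes(G)$. Let $X$ be the set of endpoints of edges in~$M$, so $|X|\le 2k$. We may assume blue is the majority color, since otherwise we swap all colors. The plan has three steps. First, guess the labeling $f_X$ of the agents in~$X$, which gives $2^{\Oh{k}}$ choices. Second, turn $G-M$ into a tree $T$ whose list-labeled instance of \ListMI reproduces exactly the illusion status every agent has in~$G$. Third, call the algorithm of \Cref{lem:LMRMI:poly:tree} on~$T$ and accept if the optimum, after correction for auxiliary agents, reaches $\lceil q|\agents|\rceil$. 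If $G-M$ is a forest rather than a tree, we first connect its components through a fresh auxiliary path. Handling this path is the same kind of correction as the gadgets below.

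For the second step, consider an agent $x\in X$. Once $f_X$ is fixed, its neighbors across $M$ have known colors. Suppose $x$ has $r_x$ red and $b_x$ blue $M$-neighbors. In $T$ we replace these edges by pendant leaves attached to~$x$: $r_x$ leaves with list $\{\red\}$ and $b_x$ leaves with list $\{\blue\}$. This keeps $\deg(x)$ and the red/blue counts in $\neighb(x)$ exactly as in~$G$, and the majority-red status of the other tree agents does not change. Agents in $X$ get singleton lists $L(x)=\{f_X(x)\}$, and all other agents of $G-M$ get $\{\blue,\red\}$. The new leaves cause two distortions, both controlled by the guess. First, the global color count changes by $\sum_x b_x$ blue and $\sum_x r_x$ red, which are known numbers. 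Second, the number of illusioned leaves is also known, because a pendant leaf is under illusion iff its anchor~$x$ is red.

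Neither distortion fits the fixed threshold of \ListMI, which requires $\BlueNo>|\agents|/2$ of its own agent set. So I would add a balancing gadget: a pendant path or star of pre-colored leaves hung on some agent of~$X$. Its leaf counts are chosen so that "blue is a strict majority in~$T$" is equivalent to "blue is a strict majority in~$G$", that is, it offsets the red/blue surplus contributed by the replacement leaves. The number of illusioned gadget agents is again determined by $f_X$ and can be subtracted. A cleaner alternative is to observe that the DP of \Cref{lem:LMRMI:poly:tree} already tracks the exact blue count~$B$. We could then adapt only its final step, \eqref{eq:LMRMI:poly:tree:finalComputation}, to range over $B$ with $B - \sum_x b_x > |\agents|/2$, and to count illusion only over original agents. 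Leaves count as illusioned iff their parent is red, so subtracting their contribution is a fixed offset. I would go with this modification, since it avoids fiddly padding arithmetic.

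The main obstacle is getting this bookkeeping exactly right. The majority condition must be preserved strictly, and the illusion count must be computed only over the original agents in~$\agents$. Both distortions depend only on $f_X$, so they should be easy to correct once the accounting is set up carefully. Correctness then follows the pattern of \Cref{clm:nfold1,clm:nfold2}. A solution of~$G$ restricted to $X$ gives some guess $f_X$, and it extends to a feasible list labeling of~$T$ with the same number of illusioned original agents. Conversely, any labeling of~$T$ respecting the lists restricts to a labeling of~$G$ with identical neighborhoods counts. The running time is $2^{\Oh{k}}\cdot |\agents|^{\Oh{1}}$.
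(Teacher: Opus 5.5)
Your proposal is correct and follows the paper's proof: guess the colors of the endpoints of $M$, replace every $M$-edge by pre-colored pendant leaves so that each original agent keeps its exact red/blue neighbor counts, run the tree DP of \Cref{lem:LMRMI:poly:tree}, and subtract the fixed illusion count of the auxiliary agents; the only difference is that, instead of shifting the threshold in the DP's final step, the paper hangs one further leaf of the opposite color on each replacement leaf, so the $4|M|$ auxiliary agents are exactly balanced and the lemma is used as a black box. One caution: anything attached directly to an original agent, namely your first balancing option or your connecting path for a disconnected $G-M$, adds a neighbor to that agent and can flip its majority status, so either hang such gadgets on auxiliary agents as the paper does or, for a forest, merge the per-component DP tables by a knapsack over the blue count.
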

\begin{proof}
Let $(\agents,G,q)$ be our input instance. We assume w.l.o.g.\ that $G$ is connected.
    Let $M$ be a feedback edge set of~$G$ of size $\fes(G)$.
    Let $P$ denote the set of agents that are adjacent to some edge in~$M$; then $|P| \leq 2|M|$.
    Let $\optcol$ be a labeling of~$G$ with blue as the majority color that maximizes the number $\umi[f](\agents)$ of agents under illusion in~$G$.
    
    First, for each $v \in P$, we guess its  label in~$\optcol$. 
    Second, we construct a new social network~$T$ that will be a tree by deleting all edges of~$M$ in~$G$, and then adding for each $v \in P$ exactly $|\{\{v,u\} \in M:\optcol(u)=\blue\}|$ blue and exactly $|\{\{v,u\} \in M:\optcol(u)=\red\}|$ red pendant leaves. 
    Then, to each of these $2|M|$ newly added agents, we add a pendant leaf having the opposite color. This way, we have added exactly $2|M|$ red and exactly $2|M|$ blue agents to~$N$; let $A$ denote the set of these $4|M|$ auxiliary agents. 
    
    We set $L(v)=\{\optcol(v)\}$ for each $v \in P$, 
    set $L(a)=\{c\}$ for each $a \in A$ having color~$c$, and set $L(v)=\{\red,\blue\}$ for all agents $v \in N \setminus P$. 
    Notice that for any labeling~$f'$ of~$G$ that coincides with~$f$ on~$P$, each agent $v \in P$ is under majority illusion in~$G$ if and only if it is under majority illusion in the corresponding labeling~$f'_T$ of~$T$ that coincides with $f'$ on~$N$ and satisfies $f'_T(a) \in L(a)$ for each $a \in A$. Moreover, such a labeling~$f'$ has blue as its majority color if and only if $f'_T$ has blue as the majority color.

    Next, we solve the \ListMI problem on~$(N \cup A, T,L)$ using the algorithm from \Cref{lem:LMRMI:poly:tree}. Let~$\mu_T$ be the returned solution value.
    
    Finally, we compute the answer to our instance as follows. Let $\mu_A$ denote the number of agents in~$A$ under majority illusion under our labeling of~$P \cup A$. By our previous arguments
    and the correctness of \Cref{lem:LMRMI:poly:tree}, the maximum number of agents under majority illusion under any labeling with majority color blue that coincides with~$f$ on~$P$ is exactly $\mu_T-\mu_A$.
    Hence, we return \Yes if $\mu_T-\mu_A \geq q|\agents|$ and \No otherwise.
    
    The  algorithm runs in $2^\Oh{\operatorname{fes(G)}}\cdot |\agents|^\Oh{1}$ time.
    
\end{proof}

\section{Bounded Distance to Disjoint Stars}
\label{sec:dds}

Let us now show that \probName{$q$-Majority Illusion} is intractable even if the social network~$G$ is close to being a collection of disjoint stars. 
To show the \Whness of \probName{$1$-Majority Illusion} when parameterized by $\dds(G)$, we provide a reduction from the following variant of \probName{Multidimensional Subset Sum}:  given an integer~$d$ and a set~$\calS$ of $n$ vectors in~$\mathbb{N}^d$, 
the task is to decide if there exists a subset~$\calS' \subseteq \calS$ with $\sum_{\myvec{s} \in \calS'}\myvec{s}=\sum_{\myvec{s} \in \calS \setminus \calS'} \myvec{s}$ and $|\calS'|=|\calS|/2$.
We call this the \probName{Multidimensional Halfset Sum} problem.

\begin{restatable}%
{lemma}{prophalfsetsum}
\label{lem:halfset-sum}
    \probName{Multidimensional Halfset Sum}  encoded in unary is \Wh with respect to the dimension.
\end{restatable}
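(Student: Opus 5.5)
We reduce from \probName{Multidimensional Subset Sum} in unary, which is \Wh with respect to the dimension~$d$: given vectors $\calS=\{\myvec{s}_1,\dots,\myvec{s}_n\}\subseteq\mathbb{N}^d$ and a target $\myvec{t}\in\mathbb{N}^d$, all encoded in unary, decide whether some $\calS'\subseteq\calS$ satisfies $\sum_{\myvec{s}\in\calS'}\myvec{s}=\myvec{t}$. We take this hardness as known from the literature. A common intermediate step is to first enforce a fixed cardinality, so we will also work with the variant where $|\calS'|=k$ is prescribed. This variant reduces to the plain one by adding one extra coordinate that is~$1$ in every vector and setting the target in that coordinate to~$k$. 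The dimension grows only by one and the encoding stays unary, so hardness is preserved.

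Next we convert a target-sum instance $(\calS,\myvec{t},k)$ into a halfset instance. Let $\ssum=\sum_{\myvec{s}\in\calS}\myvec{s}$. We add padding vectors so that the required equation becomes $\sum_{\calS'}=\sum_{\calS\setminus\calS'}$, and we add a fresh coordinate to force which side each padding vector lands on.

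The new set is built as follows. Every original vector gets the new coordinate~$0$. We add one vector $\myvec{a}$ with value $\ssum-2\myvec{t}+\myvec{c}$ in the old coordinates and a large value $H$ in the new coordinate. We add one vector $\myvec{b}$ with value $\myvec{c}$ in the old coordinates and $H$ in the new coordinate. Here $\myvec{c}$ is a constant vector chosen large enough, say $\myvec{c}=2\myvec{t}$, that all entries are nonnegative. Finally, we add zero vectors in order to balance cardinalities.

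Because $\myvec{a}$ and $\myvec{b}$ are the only vectors with a nonzero new coordinate, they must lie on opposite sides of any balanced split. Suppose $\myvec{a}$ is on the side of~$\calS'$. The equation in the old coordinates then reads $X+\ssum-2\myvec{t}+\myvec{c}=(\ssum-X)+\myvec{c}$, where $X=\sum_{\calS'\cap\calS}\myvec{s}$. This simplifies exactly to $X=\myvec{t}$. If instead $\myvec{b}$ is on the side of~$\calS'$, the complement of the original part sums to~$\myvec{t}$. For the cardinality constraint we add $z$ zero vectors so that the total is even. A halfset partition then exists iff there is a subset summing to $\myvec{t}$ of size $k$ together with suitably many zeros, where the zeros are freely distributable. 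We choose $z$ and use the fixed size~$k$, together with its complement size $n-k$, so that the counts $k+1+z_1=(n-k)+1+z_2$ with $z_1+z_2=z$ are solvable; taking $z=n$ works.

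The dimension is $d+2$, every number remains polynomial in unary, and the reduction is polynomial. Hence this is a parameterized reduction, which proves the lemma.

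The main obstacle is bookkeeping: making sure the cardinality condition $|\calS'|=|\calS|/2$ does not over-constrain the reduction. Zero vectors make the cardinality condition flexible but never affect the sums, so the delicate part is only verifying the even total and the admissible split counts. The forced-opposite-sides argument for $\myvec{a}$ and $\myvec{b}$ is the key structural trick and should be checked carefully.
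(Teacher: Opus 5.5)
Your core construction is correct, but the opening detour is argued in the wrong direction and should be dropped. Adding a coordinate of $1$s with target value $k$ maps a fixed-cardinality instance to a plain \probName{Multidimensional Subset Sum} instance. That shows the plain problem is at least as hard as the fixed-size one, not the converse. So, as written, your chain starts from a problem whose \Whness you have not established.

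This costs you nothing, because your halfset construction never uses $k$. With $z=n$ zero vectors, a subset $X$ with $\sum X=\myvec{t}$ of \emph{any} size gives two sides of size $n+1$ each: $X\cup\{\myvec{a}\}$ plus $n-|X|$ zeros, and the remaining original vectors plus $\myvec{b}$ plus $|X|$ zeros. The backward direction only needs $\myvec{a}$ and $\myvec{b}$ to be separated, and $H=1$ already forces that. Applied directly to a plain instance, your reduction has dimension $d+1$ and is a valid parameterized reduction.

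The paper also reduces directly from the plain problem, and its balancing vectors are the same as yours. Its $(\ssum-\myvec{t},n^2)$ and $(\myvec{t},n^2)$ are your $\myvec{a}=(\ssum,H)$ and $\myvec{b}=(2\myvec{t},H)$ shifted by $-\myvec{t}$ in the old coordinates. A common shift is irrelevant once the two vectors lie on opposite sides.

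The difference is the padding. Instead of zero vectors, the paper appends the index $i$ to each $\myvec{s}_i$ and adds a partner $\myvec{s}'_i=(0,\dots,0,i)$. In the forward direction, the side containing $(\ssum-\myvec{t},n^2)$ takes exactly one of $\myvec{s}_i,\myvec{s}'_i$ for every $i$, which balances the cardinality and the last coordinate at once. Separation of the two heavy vectors then relies on $n^2$ dominating the index sums.

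Your version keeps the separating coordinate apart from the padding, so verification is more transparent: no comparison of the scales $n^2$ and $\binom{n}{2}$ is needed. The paper's partners avoid introducing $n$ identical zero vectors, which matters only if $\calS$ must be a genuine set rather than a multiset.
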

\begin{proof}
    We give a reduction from \probName{Multidimensional Subset Sum}; let $I=(d,\calS,\myvec{t})$ of with $\calS=\{\myvec{s}_1,\dots,\myvec{s}_n\}$ be our input instance. We first append the value~$i$ to each vector~$\myvec{s}_i$ (increasing the dimension by one) and add $n$ new vectors~$\myvec{s}'_1,\dots,\myvec{s}'_n$ where $\myvec{s}'_i=(0,\dots,0,i) \in \mathbf{N}^{d+1}$ for each $i \in [n]$.
    We further add the two vectors $(\myvec{t},n^2)$ and $(\ssum-\myvec{t},n^2)$ where $\ssum=\sum_{\myvec{s} \in \calS}$.
    Note that the total sum of the vectors in the constructed instance $(d+1,\calS')$ of \probName{Multidimensional Halfset Sum} adds up to $(2\ssum,2n^2+2\binom{n}{2})$.
    It is straightforward to check that a solution~$X$ for~$I$ implies that there are $n+1$ vectors in $\calS' $ that sum up to $(\ssum,n^2+\binom{n}{2})$ can be extended to a solution in the constructed instance by adding those vectors $\myvec{s}'_i$ for which $\myvec{s}_i \notin X$ together with the vector $(\ssum-\myvec{t},n^2)$. Conversely, any solution to the constructed instance partitions the vectors in~$\calS'$ into two sets of size~$n+1$ with one of them, say~$X'$, containing the vector $(\ssum-\myvec{t},n^2)$. Then the remaining $n$ vectors in~$X'$ must add up to~$(\myvec{t},\binom{n}{2})$ which yields that $\{\myvec{s}_i:(\myvec{s}_i,i) \in X'\}$ is a solution for~$I$. 
\end{proof}

Now, we are ready to prove the main result of this section.

\begin{restatable}%
{theorem}{thmMIWhardfvs}
    \label{thm:MI:W1hard:dds}
    \probName{$q$-Majority Illusion} is \Wh  when parameterized by the distance to disjoint stars $\dds(G)$ of the social network~$G$.
\end{restatable}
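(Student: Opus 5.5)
We reduce from \probName{Multidimensional Halfset Sum} encoded in unary, which is \Wh with respect to the dimension~$d$ by \Cref{lem:halfset-sum}. The target is \probName{$1$-Majority Illusion}, so every agent must be under illusion. Let $(d,\calS)$ with $\calS=\{\myvec{s}_1,\dots,\myvec{s}_n\}$ be the input, with $n$ even. We build a network consisting of a modulator~$M$ of size $\Oh{d}$ together with a collection of disjoint stars, one star per vector. In the intended solution, the label of the centre of star~$i$ encodes whether $\myvec{s}_i\in\calS'$.

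\textbf{Construction.} For each coordinate $j\in[d]$, $M$ contains a constant number of ``checker'' agents $x_j$ (possibly $x_j,x'_j$). The star for $\myvec{s}_i$ has a centre~$c_i$, and the number of its leaves is a polynomial in the unary values of~$\myvec{s}_i$. Each leaf of star~$i$ is attached to exactly one checker, and exactly $(\myvec{s}_i)_j$ leaves are attached to checker~$x_j$. Every leaf has degree~$2$, so its illusion condition forces both its neighbours to be red. To get the desired flexibility, we instead attach the vector-carrying vertices (centres, or a second layer) so that each carrier's label is free and copies the binary choice of its star, while the leaves act as forcing gadgets. Concretely, I would make the leaves of star~$i$ the carriers. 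A leaf of degree~$1$ is under illusion iff its centre is red. Any leaf whose only neighbour is the centre forces $c_i$ to be red, so those pendant leaves are free to take either colour. We fix their colour relation to the centre with a checker gadget: leaves of star~$i$ are adjacent to~$c_i$ and to one checker~$x_j$. We then tune degrees so that in any all-illusion labeling, all leaves of star~$i$ receive the same colour $\sigma_i$. Afterwards, checker~$x_j$ sees exactly $\sum_{i:\sigma_i=\red}(\myvec{s}_i)_j$ red neighbours from the stars.

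\textbf{Balancing.} We add fixed-colour padding gadgets inside~$M$ (large balanced cliques or twin groups, each of bounded count) so that three requirements hold. First, $x_j$ is under illusion iff this red count is at least $\tfrac12(\ssum)_j$. Second, a paired checker $x'_j$, attached symmetrically to the complementary ``blue-side'' copies, is under illusion iff the blue count is also at least half. Together these force equality in every coordinate. Third, an extra coordinate with all entries~$1$, or the global condition that blue is the strict majority with carefully counted padding, enforces $|\calS'|=n/2$. Since the instance is unary, all degrees are polynomial. The modulator consists of the checkers plus $\Oh{d}$ padding hubs; removing them leaves disjoint stars (with padding realised as stars), so $\dds(G)=\Oh{d}$.

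\textbf{Correctness.} A halfset solution gives a labeling in which every agent is under illusion; this is a direct verification. For the converse, take any labeling with all agents under illusion. The degree-$1$/degree-$2$ constraints force the structure inside each star, so each star encodes one bit. The checker inequalities in both directions then give exact equality, and the majority count gives the cardinality condition.

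\textbf{Main obstacle.} The hardest step is designing the star gadget so that the only freedom left by the illusion constraints is a single binary choice per star. Constraints from degree-$1$ leaves tend to over-force everything to red. At the same time, blue must remain the global majority, so blue padding agents must themselves be under illusion, i.e., have red-majority neighbourhoods. My first attempt would model padding on \Cref{fig:example}: a red core with blue pendant leaves, where each red core agent has more red than blue neighbours. I would use such cores as hubs in~$M$ to supply red neighbours and fix the thresholds of the checkers exactly.
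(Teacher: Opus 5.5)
\textbf{There is a genuine gap: the one-bit-per-star gadget is never built, and with $q=1$ it cannot exist.} Your overall architecture is close to what is needed: one star per vector, leaves wired to coordinate checkers, a complementary checker $x'_j$ per coordinate, and a tight red budget from the majority condition. But the step that carries the reduction, the star gadget encoding one bit, is only announced (``tune degrees so that all leaves of star~$i$ receive the same colour'').

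\emph{Why the sketched mechanism fails.} A leaf's label does not enter its own illusion condition. It enters only the conditions of $c_i$, of its checker, and the global count. Every illusion condition is monotone: more red neighbours never hurt. The only upper bound on redness is the single aggregate majority constraint. In particular, the centre's condition is a \emph{lower} threshold on its number of red neighbours, and leaves of star~$i$ at the same checker are even twins. So nothing local can produce the dichotomy ``all leaves of star~$i$ red or all blue.''

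\emph{Why $q=1$ removes the bit entirely.} With $q=1$, every centre $c_i$ must itself be under illusion, so the ``all blue'' option is infeasible. The only escape is other red neighbours of $c_i$: at most $|M|=O(d)$ hubs, or red pendant leaves outnumbering the carriers. The latter decouples the carriers from the star altogether. Either way, every star's threshold must be met, and there is no selected/unselected distinction. Note that the theorem does not claim hardness for $q=1$, and the paper does not establish it for this parameter. Also, padding cannot be large cliques inside $M$; it must consist of star components to keep $\operatorname{dds}(G)=O(d)$.

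\textbf{The missing idea: give up $q=1$.} Let the bit be \emph{whether $c_i$ is under illusion}, enforced by counting rather than by a local gadget.

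\emph{The paper's construction.}
\begin{itemize}
\item The threshold is $q|N|=|N|-n/2$, so exactly $n/2$ agents may fail.
\item Let $\beta=\max_j\sum_i s_i[j]$. Star~$i$ has $2d\beta+1$ leaves: $d\beta+1$ pendant ones, $s_i[j]$ leaves also adjacent to $m_j$, and $\beta-s_i[j]$ leaves also adjacent to $\bar m_j$.
\item The padding is $\alpha$ disjoint edges; the $j$-th edge gives one neighbour to each of $m_j$ and $\bar m_j$.
\end{itemize}

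\emph{The paper's argument.}
\begin{itemize}
\item Take a labeling maximising the number of illusioned agents. Exchange arguments show that all centres, modulators and padding agents are red.
\item With $\alpha$ chosen so the red budget is tight, exactly $\frac n2(d\beta+1)$ leaves are red.
\item Each illusioned centre needs $d\beta+1$ red leaves, and at least $n/2$ centres must be illusioned. Hence exactly $n/2$ are, and all red leaves lie in their stars.
\item No uniformity inside a star is needed. Illusion at $m_j$ gives $\sum_{i\in S'}s_i[j]\ge t[j]$. Illusion at $\bar m_j$ gives $\sum_{i\in S'}(\beta-s_i[j])\ge \frac n2\beta-t[j]$. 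So equality holds in every coordinate, and $|S'|=n/2$ comes for free.
\end{itemize}
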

\begin{proof}
    We give a reduction from \probName{Multidimensional Halfset Sum}, which is \Wh as shown in \Cref{lem:halfset-sum}, using similar ideas as in the reduction proving \Cref{thm:MI:W1hard:dcg}.
    
    Let $I=(d,\calS)$ be our input with $\calS=\{\myvec{s}_1,\dots,\myvec{s}_n\}$; we will use the notation $\ssum=\sum_{i \in [n]}\calS$.
    We may clearly assume that $n \geq 2$, and also that $d\geq 2$, as otherwise we can just append an additional coordinate with value~$1$ to each vector in~$\calS$.
    We may further assume that $\ssum[j] \geq 3$ for each $j \in [d]$, 
    as otherwise we can multiply all vectors by three; this ensures the required condition unless $\ssum[j]=0$ in which case we can  ignore the $j^{\text{th}}$ coordinates.
    
    We create an instance $J=(N,G,q)$ of \probName{$q$-Majority Illusion} as follows.

    \proofsubparagraph{Construction} 
    Let $\myvec{t}=\ssum/2$ denote our target vector. We define 
    $\beta=\max_{j \in [d]} \ssum[j]$
    and $\alpha=\frac{n}{2}(d\beta+1)-n-d-1$.
    
    First, for each $i \in [\alpha]$, we create a pair $(a_i,a'_i)$ of agents connected by an edge in~$G$; let $A=\{a_1,\dots,a_d, a'_1,\dots,a'_\alpha\}$. 
    Next, for each $i \in [n]$ we create a star with \emph{center agent}~$c_i$ and a set~$L_i$ of $2d\beta+1$ leaves with 
    \[
    L_i=L'_i \cup \bigcup_{j \in [d]} (L_i^j \cup \ol{L}_i^j) \cup \{\ell_i^0\} 
    \] 
    where $|L'_i|=d\beta$, and 
    $|L_i^j|=\myvec{s}_i[j]$ 
    and $|\ol{L}_i^j|=\beta-\myvec{s}_i[j]$ 
    for each $j\in [d]$.
    Let $C$ denote the set of all central agents.
    Additionally, we create a set~$M=\{m_1,\dots,m_d,\ol{m}_1,\dots,\ol{m}_d\}$ of \emph{modulator agents}. 
    For each $j \in [d]$ we connect agent~$m_j$ to all agents in~$\bigcup_{i \in [n]} L_i^j$ as well as to $a_j$, and we connect $\ol{m}_j$ to all agents in~$\bigcup_{i \in [n]} \ol{L}_i^j$ as well as to~$a'_j$.
    This finishes our definition of the social network~$G$ over agent set~$N$; note that the number of agents is $|N|=2\alpha+n (2d\beta+2)+2d$.
    We set the value of~$q$ such that $qN=|N|-n/2$. 

    Removing all modulator agents from~$G$ leaves a collection of disjoint stars; hence, $\dds(G) \leq |M|=2d$, so the presented reduction is a parameterized one. 
    We now show that $I$ is a yes-instance of \probName{Multidimensional Halfset Sum} if and only if $J$ is a yes-instance of \probName{$q$-Majority Illusion}.

    \proofsubparagraph{Correctness}
    Let us first assume that there exists a red---blue labeling~$\rblab$ for~$G$ with majority color blue where at least $q|N|$ agents are under majority illusion; let us choose $\rblab$ such that it maximizes the number of agents under majority illusion and, subject to that, maximizes the number or red agents in~$A$.
    We can assume that 
    \begin{equation}
        \label{eq:totalreds-fvs}
        \begin{split}
        \RedNo_\rblab(N)&=\frac{|N|}{2}-1=
        \alpha+n(d\beta+1)+d-1\\
        &=
        2\alpha+\frac{n}{2}(d\beta+1)+n+2d,
        \end{split}
    \end{equation}    
    as otherwise we can simply re-label some blue vertices as red as long as we keep blue the majority color.
  
    First note that each center vertex is red under~$\rblab$. Indeed, since all agents except for center and modulator agents have at most two neighbors in~$G$, by $\RedNo_\rblab(N)>n+d$ we know that if some center agent~$c_i$ is not red, then we can modify~$\rblab$ by re-labeling~$c_i$ as red and, if necessary, setting some agent in~$N \setminus (C \cup M)$ as blue; this way, the number of agents under majority illusion in~$L_i$ increases by at least $|L'_i \cup \{\ell_i^0\}|=d\beta+1 \geq 3$ while at most two agents may cease being under illusion; a contradiction to our choice of~$\rblab$.
    We can apply an analogous reasoning for modulator agents: since all center agents are red, re-labeling some modulator agent~$m_j$, $j\in [d]$ as red increases the number of agents under majority illusion by $\sum_{i \in [n]} |L_i^j|=\ssum[j]$; similarly, 
    re-labeling agent~$\ol{m}_j$ as red increases this number by $\sum_{i \in [n]}|\ol{L}_i^j| = n\beta-\ssum[j]$, both of which values are at least~$3$ by our initial assumptions; hence, the above arguments show that all modulator agents are red. 
    
    We next show that each agent in~$A$ is red.
    On the one hand, re-labeling a blue agent in~$A$ as red increases the number of agents under majority illusion in~$A$ by exactly one (recall that all modulator agents are red). On the other hand, some center agent~$c_i$  must be under majority illusion by ${q|N|>|N|-|C|}$; it follows that $c_i$ must have a red neighbor~$\ell$ in~$L_i$ not adjacent to any modulator agent. Hence, we can re-label~$\ell$ as blue (destroying the illusion only for~$c_i$) to compensate for re-labeling some agent in~$A$ as red, without decreasing the number of agents under majority illusion. This contradicts our choice  of~$\rblab$, proving that all agents in~$A$ are red.

    Notice now that since all agents in~$M \cup C \cup A$ are red, from~\eqref{eq:totalreds-fvs} we get 
    \begin{equation}
    \label{eq:red-leaves-fvs}
        \RedNo_\rblab(L)=\frac{n}{2}(d\beta+1). 
    \end{equation}
    Since each center agent needs at least $d\beta+1$ red agents in~$L_i$ to be under majority illusion, and at least $\frac{n}{2}$ center agents must be under majority illusion by $q|N|=|N|-\frac{n}{2}$, it follows that there must be exactly $\frac{n}{2}$ center agents under majority illusion, each of them having exactly $d\beta+1$ red neighbors. Moreover, using again $q|N|=|N|-\frac{n}{2}$, we obtain that all modulator agents must be under majority illusion as well. 

    Define $\calS'=\{\myvec{s}_i: i \in [n], c_i$ is under majority illusion$\}$; then we know $|\calS'|=|\calS|/2$. 
    Since each modulator agent~$m_j$ for some~$j \in [d]$ has one red neighbor in~$A$, it    
    needs at least $\myvec{t}[j]$ red neighbors from~$\bigcup_{i \in [n]}L_i^j$
    as otherwise it would not be under majority illusion by 
    \[\RedNo_\rblab(N_G(m_j)) \leq \myvec{t}[j] =  
        \ssum[j]-\myvec{t}[j] < \BlueNo_\rblab(N_G(m_j)).\]
    This yields 
    \begin{equation}
    \label{eq:halfsum-atleast-target}
        \sum_{\myvec{s}\in \calS'} \myvec{s}[j] \geq \myvec{t}[j] \qquad \text{for each $j \in [d]$}.
    \end{equation}
    
    Similarly, since each modulator agent~$\ol{m}_j$ for some~$j \in [d]$ has one red neighbor in~$A$, it    
    needs at least $\frac{n}{2}\beta-\myvec{t}[j]$ red neighbors from~$\bigcup_{i \in [n]}\ol{L}_i^j$
    as otherwise it would not be under majority illusion by 
    \begin{align*}
    \RedNo_\rblab(N_G(\ol{m}_j)) &< 
    \frac{n}{2}\beta-\myvec{t}[j] \\
    & =  
    \frac{n}{2}\beta -(\ssum[j]-\myvec{t}[j]) \leq  \BlueNo_\rblab(N_G(\ol{m}_j)).
    \end{align*}
    This yields 
     \[   \sum_{\myvec{s}\in \calS'} \myvec{s}[j] \leq \myvec{t}[j] \qquad \text{for each } j \in [d].
    \]
    which, together with~\eqref{eq:halfsum-atleast-target} yields that $\sum_{\myvec{s}\in \calS'} \myvec{s} = \myvec{t}$, proving that $\calS'$ is a solution for~$I$.
    
    \medskip
    For the other direction, assume now that $I$ admits a solution~$\calS'$.
    We create a red--blue labeling~$\rblab$ for~$G$ as follows. We create all agents in~$C \cup M \cup A$ red, as well as the agents in $L_i \setminus L'_i$ for each $i \in [n]$ where $\myvec{s}_i \in \calS'$. 
    The number of red agents thus satisfies \eqref{eq:totalreds-fvs}, so the majority color in~$\rblab$ is blue.
    It is clear that all agents in~$A \cup L \cup \{c_i:\myvec{s}_i \in \calS'\}$ are under majority illusion.
    We show that all modular agents are under majority illusion as well. First, each agent~$m_j$ for some $j \in [d]$ has 
    exactly $1+\sum_{\myvec{s} \in \calS'} \myvec{s}[j]=1+\myvec{t}[j]$ red neighbors and exactly $\sum_{\myvec{s} \in \calS \setminus \calS'} \myvec{s}[j]=\myvec{t}[j]$ blue neighbors.
    Second, each agent~$\ol{m}_j$ for some $j \in [d]$ has 
    exactly $1+\frac{n}{2}\beta - \sum_{\myvec{s} \in \calS'} \myvec{s}[j]=1+\frac{n}{2}\beta-\myvec{t}[j]$ red neighbors and exactly $\frac{n}{2}\beta-\sum_{\myvec{s} \in \calS \setminus \calS'} \myvec{s}[j]=\frac{n}{2}\beta-\myvec{t}[j]$ blue neighbors.
    Thus, all modulator agents are under majority illusion, implying that all agents except for the $\frac{n}{2}$ center agents in~$\{c_i:\myvec{s}_i \in \calS \setminus \calS'\}$ are under majority illusion. This proves that $J$ is a yes-instance of \probName{$q$-Majority Illusion}. 
\end{proof}

The strength of this intractability result is demonstrated by its consequences on other prominent graph parameters:

\begin{corollary}
\label{cor:MI:dds:strong}
    \probName{$q$-Majority Illusion} is \Wh  when parameterized by the combination of the following parameters%
        of the social network~$G$: 
    \begin{itemize}
        \item the feedback vertex set number $\fvs(G)$,
        \item the $4$-path vertex cover number $\operatorname{4pvc}(G)$, %
        and
        \item the treedepth~$\td(G)$.
    \end{itemize} 
\end{corollary}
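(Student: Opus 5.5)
The plan is to reuse the reduction from the proof of \Cref{thm:MI:W1hard:dds} verbatim and check that, on the graph~$G$ it constructs, each of the three parameters is bounded by a function of~$d$. Since that reduction is parameterized with $\dds(G)\le 2d$, and \probName{Multidimensional Halfset Sum} is \Wh with respect to~$d$ by \Cref{lem:halfset-sum}, bounding $\fvs(G)+\operatorname{4pvc}(G)+\td(G)$ by $\Oh{d}$ shows the same reduction is parameterized for the combined parameter. No new gadget is needed, only structural bookkeeping on~$G$.

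For the bounds we use the modulator $M$ with $|M|=2d$. After deleting~$M$, the remaining graph $G-M$ consists of the stars with centers $c_i$ and leaf sets $L_i$, together with the isolated edges $\{a_i,a'_i\}$ (and possibly isolated agents of~$A$). Every component is a tree, so $\fvs(G)\le 2d$. A simple path in a star has at most three vertices, and a path in an edge component has at most two, so $G-M$ contains no path on four vertices and $\operatorname{4pvc}(G)\le 2d$.

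For treedepth, we build a rooted tree whose top part is a path through the $2d$ modulator agents in some order. Below the last of these we hang, for each component of $G-M$, a tree of height at most~$2$: a star is rooted at its center with the leaves as children, and an edge component is a path of two vertices. Every edge of $G$ either lies inside a component (and so joins an ancestor--descendant pair in its small tree) or is incident to a modulator agent, which is an ancestor of everything below it. Hence $G$ is a subgraph of the span, and $\td(G)\le 2d+2$.

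The argument is essentially routine. The only point to verify carefully is that $G-M$ really is as described, i.e., that no edges other than those incident to~$M$ join different stars or the pairs in~$A$. This follows directly from the construction, since leaves and agents of~$A$ are adjacent only to their center or partner and to modulator agents.
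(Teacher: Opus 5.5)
Your proposal is correct and follows essentially the paper's route. The paper states the corollary without a separate proof; it rests on the fact, shown by the arrows from $\dds$ in \Cref{fig:summary}, that a modulator to disjoint stars also witnesses $\fvs(G)\le\dds(G)$, $\operatorname{4pvc}(G)\le\dds(G)$ and $\td(G)\le\dds(G)+2$, so the hardness of \Cref{thm:MI:W1hard:dds} transfers to the combined parameter. You carry out exactly this check directly on the constructed graph with its modulator $M$ of size $2d$, which is a valid and slightly more concrete way of making the same argument.
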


A reduction from the \probName{Capacitated Vertex Cover} problem, proved by~\citeauthor{DomLSV2008}~\shortcite{DomLSV2008} to be \Wh when parameterized by treedepth, shows that even the case $q=1$ remains intractable when parameterized by treedepth:

\begin{restatable}%
{theorem}{thmMIWhardtw}
\label{thm:MI:W1hard:td}
    \probName{$1$-Majority Illusion} is \Wh  parameterized by the treedepth $\td(G)$ of the social network~$G$.
\end{restatable}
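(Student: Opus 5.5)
We reduce from \probName{Capacitated Vertex Cover}, which is \Wh when parameterized by treedepth~\cite{DomLSV2008}. An instance consists of a graph $H=(V,E)$, a capacity $\mathrm{cap}(v)$ for every $v \in V$, and an integer~$k$. We ask for a vertex cover $X$ with $|X|\le k$ together with an assignment of every edge to an endpoint in~$X$ such that each $v\in X$ receives at most $\mathrm{cap}(v)$ edges. Since $q=1$, every agent must be under illusion. With blue the majority color, this means every agent needs a strict red majority in its neighborhood while fewer than half of all agents are red. This condition is very rigid, and we use it to encode the choices. All gadgets will be stars, paths of length two, or pendant structures attached to a constant number of ``original'' agents. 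Hence, given a treedepth decomposition of~$H$ of depth~$t$, the new graph has treedepth $t+\Oh{1}$: each gadget hangs below the deepest original vertex it touches, and gadgets have constant treedepth.

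\textbf{Construction.} For each vertex $v\in V$ we create a vertex agent $x_v$. For each edge $e=\{u,v\}$ we create an edge agent $y_e$ adjacent to $x_u$ and $x_v$, and we subdivide suitably so that $y_e$ is under illusion exactly when the endpoint it is ``assigned to'' is red. Concretely, $y_e$ has degree~$3$, with its third neighbor a forced-blue private leaf structure. Then $y_e$ needs both $x_u$ and $x_v$ red, or one of them red together with a further red neighbor, and we use a private agent $z_e$ whose color encodes the orientation. Red $x_v$ encodes $v\in X$. The capacity is enforced at $x_v$: we pad $x_v$ with pendant gadgets so that $x_v$ itself is under illusion only if the number of red agents among its edge-side neighbors does not exceed $\mathrm{cap}(v)$. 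Since illusion rewards red neighbors, this cannot be done directly. Instead, $x_v$'s edge-side neighbors will be the orientation agents $z_e$, and we let a \emph{blue} $z_e$ mean ``assigned to $v$''. Padding $x_v$ with $\deg(v)-\mathrm{cap}(v)$ plus a constant of forced-red pendant gadgets then yields illusion iff at most $\mathrm{cap}(v)$ of its $z$-neighbors are blue. The global budget $|X|\le k$ is enforced by the majority condition. We add many ``free'' agents whose color is forced through illusion of their neighbors, for instance pairs $a,a'$ (both must be red, as in the proof of \Cref{thm:MI:W1hard:dds}) and pendant leaves hanging off forced-red hubs. We then add blue padding so that the total red count is below $|N|/2$ exactly when at most $k$ vertex agents are red.

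\textbf{Forcing lemmas.} The key step is a small catalogue of gadgets whose colors are forced in every $1$-illusion labeling. A leaf is under illusion iff its unique neighbor is red, so each leaf forces its neighbor red. An agent adjacent to a forced-red hub and to nothing else can be either color, but its color contributes to the global count. We use these to make each $x_v$ and $z_e$ exactly ``free with the right weight''. We must also verify that every gadget agent can be simultaneously under illusion, which is local bookkeeping on degrees.

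\textbf{Correctness.} Forward: from a capacitated cover we color $x_v$ red iff $v\in X$, color $z_e$ according to the assignment, and color forced agents as forced; then we check the counts. Backward: every $1$-illusion labeling must color forced agents as forced. The edge agents then certify coverage, the vertex agents certify the capacities, and the global majority certifies $|X|\le k$.

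\textbf{Main obstacle.} The main obstacle is the interplay between the local rules and the global balance. Every agent, including the padding agents, must be under illusion, and yet red must stay a minority. Leaves force red neighbors, which consumes red budget. I would first design every gadget to have a fixed red/blue contribution independent of the choices, so that only the $x_v$ count varies (the $z_e$ contribute exactly one red per edge in either orientation, via a paired complement agent). The majority condition then becomes precisely $|X|\le k$ after adding a computed number of isolated blue-forced structures, for example blue leaves of red hubs.
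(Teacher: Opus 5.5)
Your choice of source problem and your treedepth argument are fine. Hanging constant-treedepth gadgets below the deeper endpoint of an edge, or below $x_v$, gives $\td(H)+\Oh{1}$, which is even better than the paper's bound. The gap is in correctness, and it starts with the notion of a ``forced-blue'' agent.

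With $q=1$ and blue the majority color, an agent is under illusion iff it has a majority-red neighborhood. This depends only on its neighbors' labels and can only improve when a neighbor turns from blue to red. So the labelings in which every agent has a majority-red neighborhood are closed under recoloring blue agents red: a leaf forces its neighbor red, but no local structure ever forces an agent blue. Hence the following are all unjustified:
\begin{itemize}
\item your forced-blue third neighbor of $y_e$;
\item the paired complement agent that is supposed to give exactly one red per edge;
\item the blue-forced padding;
\item the blue neighbors that $x_v$ needs.
\end{itemize}

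The last item is essential, and your capacity arithmetic is wrong without it. Suppose all neighbors of $x_v$ other than its $\deg(v)$ orientation agents are red, say $R$ of them, and $b$ orientation agents are blue. Then $x_v$ is under illusion whenever $2b<\deg(v)+R$. So if $2\mathrm{cap}(v)+2<\deg(v)$, it tolerates $b=\mathrm{cap}(v)+1$ whatever $R$ is. The paper instead gives $v$ exactly $\deg_H(v)-c(v)$ blue neighbors inside its cap.

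Every upper bound on red agents can only come from the single global inequality $\RedNo_\rblab(\agents)<\BlueNo_\rblab(\agents)$, via a tightness argument. You must show that every gadget has at least as many red as blue agents (\Cref{clm:cap-prop} and \Cref{clm:edge-gadget}). Then the global inequality forces every gadget to be exactly balanced, and the needed structure is read off from exact balance.

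Second, even with such lower bounds, padding so that blue is the majority iff at most $k$ vertex agents are red leaves slack. A labeling with $|X|<k$ can spend its $k-|X|$ spare red agents to unbalance gadgets, e.g., by coloring both orientation agents of an edge red so that the edge counts against no capacity. This is not harmless. Take $K_5$ with capacities $2,2,2,2,1$, padded with isolated vertices. It has no capacitated vertex cover at all, since its $10$ edges exceed the total capacity $9$. Yet its four capacity-$2$ vertices cover all edges and can absorb all but two of them. So with $k=6$, an encoding in which one spare red agent un-assigns one edge accepts a no-instance.

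The paper removes this slack with the counting agent $a^\star$, adjacent to every vertex agent and to an $(a^\star,|V|-k+1,k)$-cap. Let $S$ be the set of red vertex agents and $\Delta$ the excess red in the cap. The illusion of $a^\star$ gives $|S|+\Delta\ge k$, the global majority gives $|S|+\Delta\le k$, and so no spare red agent remains anywhere (\Cref{clm:vc-size}). Without some lower bound of this kind on the number of red vertex agents, the backward direction of your reduction fails.
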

\begin{proof}
    We present a parameterized reduction from \probName{Capacitated Vertex Cover} which is \Wh when parameterized by the treedepth of the input graph~\cite{DomLSV2008}. Let $(H,c,k)$ be our input instance;
    we assume w.l.o.g.\ that $|V| \geq 2k+2$. 
    We are going to construct an instance $(N,G,1)$ of \probName{$q$-Majority Illusion}.

    \proofsubparagraph{Construction}
    We are going to create a vertex gadget for each $v \in V$, an edge gadget for each $e \in E$, and a single counting gadget. Our social network will consist of the disjoint union of these gadgets, with some additional inter-gadget edges running between them.
    
    We start by introducing a general structure called an \emph{$(v,\ell _r,\ell_b)$-cap} that will be used in all vertex gadgets as well as in the counting gadget. Here, $v$ is an agent, and $\ell_r$ and~$\ell_b$ are non-negative integers. %
    An $(v,\ell_r,\ell_b)$-cap $Q$ is defined over an agent set $\{v\} \cup A_Q$  where 
    \begin{align*}
    A_Q &= \bigg(\bigcup_{i \in [\ell_r]} A_{Q,i}\bigg) \cup \bigg( \bigcup_{i \in [\ell_b]} \cup B_{Q,i}\bigg) \\
    A_{Q,i}& =\{a_{i,j},a'_{i,j}: j \in [4]\} \\
    B_{Q,i}& =\{b_{i,j},b'_{i,j}: j \in [4]\} 
    \end{align*}

    \begin{figure}
        \centering
        \includegraphics[width=\columnwidth]{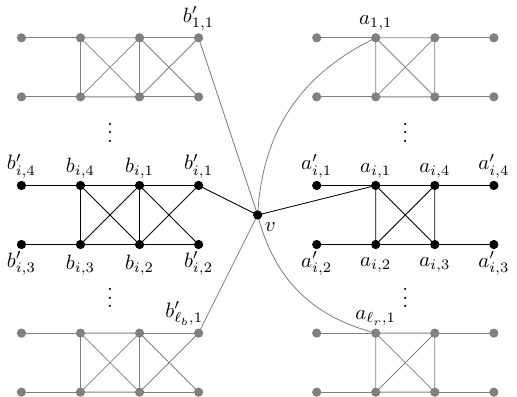}
        \caption{Illustration of a $(v,\ell_r,\ell_b)$-cap.}
        \label{fig:cap}
    \end{figure}

    The social networks formed by the agents in~$A_{Q,i}$ are the same for each $i \in [\ell_r]$, and similarly for the social networks formed by the agents in~$B_{Q,i}$ for all $i \in [\ell_b]$; see \Cref{fig:cap} for an illustration. 
    Formally,  for each $i \in [\ell_r]$, agents $a_{i,j}$ for $j \in [4]$ form a clique, and $a'_{i,j}$ is connected to~$a_{i,j}$ for each $j \in [4]$. 
    Similarly, for each $i \in [\ell_b]$, agents~$b_{i,j}$ for $j \in [4]$ form a clique, and $b'_{i,j}$ is connected to~$b_{i,j}$ for each $j \in [4]$. 
    We further connect agent~$b'_{i,1}$ to~$b_{i,2}$, and agent~$b'_{i,2}$ to~$b_{i,1}$ for each $i \in [\ell_r]$. 
    Finally, we connect agent~$a$ to~$a_{i,1}$ for each $i \in [\ell_r]$ and
    and to~$b'_{i,1}$ for each $i \in [\ell_b]$. 
    The key property of a $(v,\ell,\ell')$-cap is stated in the next claim.
    \begin{claim}
    \label{clm:cap-prop}
        Let $Q$ be an $(v,\ell_r,\ell_b)$-cap with a red--blue labeling~$\rblab$. 
        If all agents in~$Q$ other than~$v$ have more red than blue neighbors, then 
        \begin{align} 
        \label{eq:clm-cap-c1}
        \RedNo_\rblab(A(Q) \setminus \{v\})
        & \geq 
        \BlueNo_\rblab(A(Q) \setminus \{v\})
        \\
        \label{eq:clm-cap-c2}
        \RedNo_\rblab(N_Q(v))
        &\geq \ell_r; \\
        \label{eq:clm-cap-c3}
        \BlueNo_\rblab(N_Q (v))
        &\leq \ell_b. 
        \end{align}
        Equality for~\eqref{eq:clm-cap-c1} implies equality in~\eqref{eq:clm-cap-c2} and~\eqref{eq:clm-cap-c3}.
        Moreover, $B$ admits  
        red--blue labelings~$\rblab$ and~$\rblab'$ with $\rblab(v)=\blue$ but $\rblab'(v)=\red$
        that both satisfy~\eqref{eq:clm-cap-c1}--\eqref{eq:clm-cap-c3} with equalities.        
    \end{claim}
    \begin{claimproof}
        Consider some index~$i$. Since each agent $a'_{i,j}$ for $j \in \{2,3,4\}$ and each agent~$b'_i$ for $j \in [4]$ has at most two neighbors in~$Q$, those neighbors must be red. Thus, all agents in~$\{a_{i,j},b_{i,j}:j \in [4]\}$ are red under~$\rblab$. As the number of such agents is $4(\ell_r+\ell_b)$, and the total number of agents in~$Q$ not counting~$v$ is $8(\ell_r+\ell_b)$,  statements~\eqref{eq:clm-cap-c1}--\eqref{eq:clm-cap-c3} follow. Assuming that the number of red and blue agents is equal, we get that all remaining agents must be blue. In particular, 
        $v$ has $\ell_r$ red neighbors (namely, $a_{1,1},\dots,a_{\ell_r,1}$) and $\ell_b$ blue ones (namely, $b'_{1,1},\dots,b'_{\ell_b,1}$).
        Finally, observe that such a labeling indeed ensures that all agents have more red than blue neighbors, irrespective of the labeling of~$v$, implying the last statement of the claim.
    \end{claimproof}
    
    Now, the \emph{vertex gadget} $G_v$ for some vertex $v \in V$, which is 
    simply a $(v,c(v)+2,\deg_H(v)-c(v))$-cap over a set $A_v$ of agents where $\deg_H(v)$ is the degree of~$v$ in~$H$.

    The \emph{edge gadget}~$G_e$ for some edge $e=\{u,v\} \in E$ is defined over agent set~$A_e=Z_e \cup W_e$
    where 
    \begin{align*}
        Z_e &= \{z_i^{x,e},z_i^{v,e}:i \in [4],x \in \{u,v\}\} \qquad \textrm{ and} \\
        W_e &= \{w^{x,e}_4,w^{x,e}_{2,3},w^{x,e}_{3,4}:x \in \{u,v\}\}.
    \end{align*} 

    \begin{figure}
        \centering
        \includegraphics[width=\columnwidth]{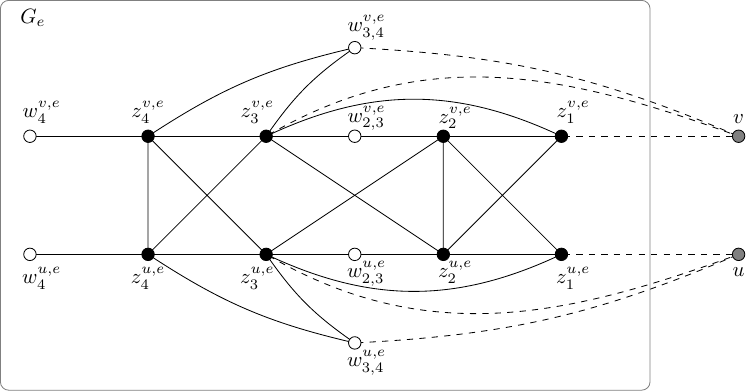}
        \caption{Illustration of an edge-gadget $G_e$ for $e=\{u,v\}$. Inter-gadget connections are shown with dashed lines. Agents in~$Z_e$, $W_e$, and those outside~$G_e$ are shown as black, white, and gray circles, respectively.}
        \label{fig:edge-gadget}
    \end{figure}
    
    The social gadget underlying~$G_e$ is depicted in \Cref{fig:edge-gadget}; the construction is symmetric for the two endpoint of~$e$.
    Formally, we define the connections within~$G_e$ by connecting~$z_2^{v,e}$ to agents $z_1^{v,e}$, $z_1^{u,e}$, $w^{v,e}_{2,3}$, $z_2^{u,e}$, and $z_3^{u,e}$, 
    connecting~$z_3^{v,e}$ to agents $z_1^{v,e}$, $z_4^{v,e}$, $w^{v,e}_{2,3}$, $w^{v,e}_{3,4}$, $z_2^{u,e}$, and $z_4^{u,e}$, 
    and additionally connecting~$z_4^{v,e}$ to agents~$w^{v,e}_{4}$, $w^{v,e}_{3,4}$ and~$z_4^{u,e}$.
    Finally, we make the construction symmetric by adding the analogous edges after switching the roles of~$u$ and~$v$.

    Next, we define the \emph{counting gadget} $G_{a^\star}$ that is based on an $(a^\star,|V|-k+1,k)$-cap for a newly introduced agent~$a^\star$ over agent set~$Q_{a^\star}$, 
    and additionally contains a set~$K_{a^\star}$ of $|V|-2k$ agents whose underlying social network is a collection of cliques of size~$2$ and~$3$.

    We finish the construction by defining all inter-gadget edges as follows: for each vertex~$v \in V$, we connect agent~$v$ with agents~$z_1^{v,e}$ ,$z_3^{v,e}$, and~$w^{v,e}_{3,4}$ for all edges~$e \in E$ incident to~$v$ in~$H$.
    Furthermore, we connect agent~$a^\star$ in the counting gadget to each agent~$v$ for $v \in V$.

    Note that $G$ can be computed in polynomial time. The next claim shows that the presented reduction is in fact a parameterized one in the sense that $\td(G)$ is a function of~$\td(H)$.

    \begin{claim}
        The constructed social network~$G$ has treedepth at most $9\td^2(H))$.
    \end{claim}
    \begin{claimproof}
        We show how to construct a treedepth decomposition for~$G$ of the claimed depth, using a treedepth decomposition~$\calT_H$ 
        of depth $\td(H)$ 
        for~$H$. 
        
        First, it is straightforward to check that for each~$v \in V$, deleting agent~$v$ from the vertex gadget~$G_v$ leaves us with a social network with treedepth at most~$5$. Hence, $G_v$ admits a treedepth decomposition of depth at most~$6$ with $v$ as its root.
        Similarly, the counting gadget also admits a treedepth decomposition~$\calT_c$ of depth~$6$ with~$a^\star$ as its root.
        It is also easy to verify that for each edge~$e \in E$ connecting vertices~$u$ and~$v$ in~$H$, the social network induced by the edge-gadget~$G_e$ and the two agents~$u$ and~$v$ admits a treedepth decomposition~$\calT_{e}$ of depth~$10$ whose root is~$u$ with~$v$ as its only child. 
        
        We now show how to compute a treedepth decomposition for~$G$ based on~$\calT_H$.
        We start by setting~$\calT:=\calT_H$, and proceed as follows.
        \begin{itemize}
            \item We add $\calT_c$ to~$\calT$ by setting $a^\star$ as the new root and adding the root of~$\calT$ as a child of~$a^\star$ (besides all its children in~$\calT_c$).
            \item For each $v \in V$, we add $\calT_v$ to $\calT$ by identifying the root of $\calT_v$ with the node~$v$ in~$\calT$.
            \item We iteratively apply the following operation for each  $e \in E$. Let $u$ and~$v$ be the endpoints of~$e$ such that $v$ is a descendant of~$u$. We delete~$u$ from~$\calT_e$ so that $v$ is the root of the obtained tree $\calT_e-u$, and then add~$\calT_e-u$ by identifying the vertex~$v$ within~$\calT$ with the root of~$\calT_e-u$, and re-position each child of~$v$ within~$\calT$ as the child of some (arbitrarily chosen) leaf of~$\calT_e-u$. 
            Note that $u$ becomes an ascendant of all vertices of~$G_e$ within~$\calT$, and moreover, all vertices that were descendants (ascendents) of~$v$ before the operation remain descendants (ascendents, respectively) of~$v$ afterwards.
        \end{itemize}
        
        It is easy to see that the above procedure yields a treedepth decomposition for~$H$.
        To see that it has the claimed depth of~$9\td^2(H)$, first observe that the operations involving the counting gadget and the vertex gadgets increase the depth of~$\calT$ by at most~$5$. 
        Second, note that the number of times when the agents within some edge-gadget need to be added right under some vertex~$v$ is not more than the number of its ascendants in~$H$, i.e., at most~$\td(H)$. Since each such operation adds $8$ to the depth of~$\calT$, the incurred total increase in depth is at most~$8\td(H)$ for these additions at~$v$. Summing such increases up for each vertex and recalling the depth~$\td(H)$ of~$\calT_H$, we obtain that the obtained treedepth decomposition indeed has depth not more than~$9\td^2(H)$.
    \end{claimproof}

    \proofsubparagraph{Correctness}
    It remains to show that $(H,c,k)$ is a yes-instance of \probName{Capacitated Vertex Cover} if and only if $(N,G,1)$ is a yes-instance of \probName{$q$-Majority Illusion}.
    
    \smallskip
    Suppose first that $G$ admits a red--blue labeling~$\rblab$ which ensures that all agents are under majority illusion.

    We use the following claim about 
    edge gadgets.
    \begin{claim}
    \label{clm:edge-gadget}
        Let~$\{u,v\}=e \in E$. If all agents in~$G_e$ have more red than blue neighbors under~$\rblab$, then 
        \begin{equation}
        \label{eq:edgegadget-balanced}
            \RedNo_\rblab(A_e)
            \geq 
            \BlueNo_\rblab(A_e)
        \end{equation}
     and equality may only happen if 
     \begin{enumerate}[label=(\roman*),left=2pt]
         \item  exactly one of~$z_1^{v,e}$ and~$z_1^{u,e}$ is blue and
         \item $\rblab(z_1^{v,e})=\blue$ implies $\rblab(v)=\red$
         whereas 
         $\rblab(z_1^{u,e})=\blue$ implies $\rblab(u)=\red$;
         \item both~$u$ and~$v$ have one red and one blue neighbor within~$G_e$ that is not in~$\{z_1^{v,e},z_1^{u,e}\}$.
     \end{enumerate}  
     Conversely, if we fix at least one of~$u$ and~$v$ as red, then agents in~$G_e$ can be labeled such that \eqref{eq:edgegadget-balanced} and (i)--(iii) hold, and each agent in~$G_e$ has more red than blue neighbors.
    \end{claim}

    \begin{claimproof}
        First note that 
        all four agents in~$W_e \setminus \{w^{u,e}_{3,4},w^{v,e}_{3,4}\}$ have at most two neighbors in~$G$; hence,
        all their neighbors must be red. 
        This means that agents $Z_e \setminus \{z_1^{u,e},z_1^{v,e}\}$
        are red. It is not possible that all remaining eight agents of~$G_e$ are blue, as that would yield more blue than red neighbors for $z_2^{v,e}$ and~$z_2^{u,e}$. Hence, there are at least $7$ red agents, implying~\eqref{eq:edgegadget-balanced}. 
        
        If \eqref{eq:edgegadget-balanced} holds with equality, then there must be exactly one additional red agent in~$G_e$, which means that exactly one common neighbor of $z_2^{v,e}$ and $z_2^{u,e}$ must be red: either~$z_1^{v,3}$ or $z_1^{u,e}$. Hence, (i) and (iii) hold. 
        If $z_1^{v,3}$ is red, then $z_1^{u,e}$ is blue, and thus inspecting the neighborhood of~$z_3^{u,e}$ we obtain that $u$ must be red. Similarly, if $z_1^{v,e}$ is blue, then $v$ must be red. This proves (ii).

        Finally, assume w.l.o.g.\ that we label~$v$ red. Then it is easy to verify that setting all six agents in~$W_e$ as well as $z_1^{v,e}$ blue and all other agents red yields a labeling with the required properties.
    \end{claimproof}
    
    Let $S=\{v \in V:\rblab(v)=\red\}$;  
    we are going to prove that $S$ is a capacitated vertex cover of~$H$. 
    First, we show the following.

    \begin{claim}
        \label{clm:vc-size}
        We have 
        $|S|=\RedNo_\rblab(V)\leq k$.
        Moreover,
        \begin{align}
        \label{eq:vertexbalance}
            \RedNo_\rblab(A_v \setminus \{v\})=
            \BlueNo_\rblab(A_v \setminus \{v\})
        \end{align}
        for each $v \in V$,
        and        
        \eqref{eq:edgegadget-balanced} holds with equality for each $e \in E$. 
    \end{claim}
    \begin{claimproof}
    First note that all agents in~$K_{a^\star}$ must be red, as each of them has at most two neighbors, and thus all these neighbors must be red. 
    Second, consider the set of neighbors of~$a^\star$ within the counting gadget~$G_{a^\star}$; note that these are all agents in $Q_{a^\star}$. 
    Applying the first statement of \Cref{clm:cap-prop} for the $(a^\star, |V|-k+1,k)$-cap over~$Q_{a^\star}$, we obtain that 
    $a^\star$ has at least $|V|-k+1$ red and at most $k$ blue neighbors within~$Q_{a^\star}$. 
    Hence, we can define a non-negative $\Delta \in \mathbb{N}$ for which 
    \begin{align}
    \label{eq:astar-neighbors}
        \begin{split}
        \RedNo_\rblab(N_G(a^\star)) \cap Q_{a^\star}
        & =|V|-k+1+\Delta \quad \textrm{ and} 
        \\
        \BlueNo_\rblab(N_G(a^\star)) \cap Q_{a^\star} &= k-\Delta.
        \end{split}
    \end{align}
    By the arguments in the proof of \Cref{clm:cap-prop}, we also get 
    \begin{align}
    \label{eq:astar-total}
        \begin{split}
        \RedNo_\rblab(Q_{a^\star}\setminus \{a^\star\}) 
        & \geq |Q_{a^\star}|-1)/2+\Delta \quad \textrm{and}\\
        \BlueNo_\rblab(Q_{a^\star} \setminus \{a^\star\}) 
        & \leq (|Q_{a^\star}|-1)/2-\Delta
        \end{split}
    \end{align}

    Recall now that the set of neighbors of~$a^\star$ outside the counting gadget is~$V$, and the red agents among these are those in~$S$. Thus, from~\eqref{eq:astar-neighbors} we obtain 
    \begin{align}
    \label{eq:astar-allneighbors}
        \begin{split}
        \RedNo_\rblab(N_G(a^\star))
        & =|V|-k+1+\Delta+|S| \qquad \textrm{ and} 
        \\
        \BlueNo_\rblab(N_G(a^\star)) &= k-\Delta+|V|-|S|.
        \end{split}
    \end{align}
    Since $a^\star$ is under majority illusion, we get
    \begin{align}
        \notag 
        \RedNo_\rblab(N_G(a^\star)) & > 
        \BlueNo_\rblab(N_G(a^\star)) \\\notag 
        |V|-k+1+\Delta+|S| & >k-\Delta+|V|-|S| \\\notag 
        2|S|+2\Delta +1 & > 2k \\
        |S|+|\Delta| &\geq k. \label{eq:SplusDelta}
    \end{align}

    Let $N'$ denote the set of agents contained in some vertex- or edge-gadget. 
    \Cref{clm:cap-prop} implies that 
    \begin{equation*}
    \RedNo_\rblab(A_v \setminus \{v\})  
    \geq \BlueNo_\rblab(A_v \setminus \{v\})  
    \end{equation*}
    holds for each $v \in V$.
    Summing up the above inequalities for each $v \in V$ and \eqref{eq:edgegadget-balanced} for each $e \in E$, we get that $N'$ contains at least as many red as blue agents, i.e., 
    \begin{equation}
    \label{eq:Nprime-balanced}
        \RedNo_\rblab(N') \geq 
        \BlueNo_\rblab(N').
    \end{equation}
    
    Counting now the number of red agents in~$K_{a^\star}$, $Q_{a^\star}$, $N'$, and~$V$ and using \eqref{eq:astar-total} and \eqref{eq:Nprime-balanced}, we get that 
    \begin{equation} 
    \label{eq:redtotal}
     \RedNo_\rblab(N) \geq  (|V|-2k)+\frac{|Q_{a^\star}|-1}{2}+\Delta+\frac{|N'|}{2}+ |S|.
    \end{equation}
    Similarly counting blue agents in~$Q_{a^\star}$, $N'$, and $V$, we obtain 
    \begin{equation} 
    \label{eq:bluetotal}
    \BlueNo_\rblab(N) \leq \frac{|Q_{a^\star}|-1}{2}-\Delta+1 + \frac{|N'|}{2}+(|V|-|S|).
    \end{equation}
    Since blue is the majority opinion, i.e., $\RedNo_\rblab(N) < \BlueNo_\rblab(N)$, it follows that 
    \begin{align*}
        -2k+\Delta+|S| & < -\Delta+1-|S| \\
        2|S|+2\Delta &< 2k+1 \\
        |S|+\Delta &\leq k
    \end{align*}
    Taking into account \eqref{eq:SplusDelta}, we obtain that $|S|+\Delta=k$ must hold. Hence, the first statement of the claim follows.
    Observe that $|S|+\Delta=k$ implies also that \eqref{eq:Nprime-balanced} must hold with equality, as otherwise $\RedNo_\rblab(N') \geq |N'|/2+1$, 
    which would imply
    \begin{align*}
    \RedNo_\rblab(N) & \geq  (|V|-2k)+\frac{|Q_{a^\star}|-1}{2}+\Delta+\frac{|N'|}{2}+1+ |S| \\
     & =  (|V|-k)+\frac{|Q_{a^\star}|-1}{2}+\frac{|N'|}{2}+1 \\
    & \geq \BlueNo_\rblab(N)
    \end{align*}
    contradicting our assumption that blue is the majority winner.
    In particular, this means that \eqref{eq:vertexbalance} for each $v \in V$ and~\eqref{eq:edgegadget-balanced} for each $e \in E$ hold with equality.
    \end{claimproof}
    
    \Cref{clm:vc-size} immediately yields that (i) and~(ii) in \Cref{clm:edge-gadget} holds for each $e \in E$, and thus $S$ is a vertex cover for~$H$.  We next show that it is a capacitated vertex cover for~$H$.

    Let us define a mapping~$\mapvc:E \to S$ such that $\mapvc$ maps $e \in E$ to its endpoint~$v$ if and only if $\rblab(z_1^{v,e})=\blue$.
    By \Cref{clm:vc-size} we know that  (i) in \Cref{clm:edge-gadget} holds for each $e \in E$, and thus $\mapvc$ is well-defined. To show that $S$ is a capacitated vertex cover for~$H$ with capacity function~$c$, it remains to show that $f$ maps at most~$c(v)$ vertices to each $v \in S$.

    To see this, consider some~$v \in S$. 
    By \eqref{eq:vertexbalance} in \Cref{clm:vc-size} and using \Cref{clm:cap-prop} we know that $v$ has exactly $c(v)+2$ red neighbors and $\deg_H(v)-c(v)$ blue ones within~$G_v$. By the definition of~$\mapvc$, we further know that $v$ has exactly $|\mapvc^{-1}(v)|$ blue neighbors within edge-gadgets, and thus exactly $\deg_H(v)-|\mapvc^{-1}(v)|$ red ones. Finally, taking into account $a^\star$, we obtain that
    \begin{align}
    \label{eq:total-vertex-neighbors}
        c(&v)+ 2  + \deg_H(v)-|\mapvc^{-1}(v)| =
        \RedNo_\rblab(N_G(v)) \\
         &> 
        \BlueNo_\rblab(N_G(v)) =
        \deg_H(v)-c(v)+|\mapvc^{-1}(v)| +1
        \notag
    \end{align}
    which implies
    \begin{align*}
        2c(v)+1 & > 2|\mapvc^{-1}(v)|\\
        c(v)  &\geq |\mapvc^{-1}(v)|
    \end{align*}
    as required. Hence, $S$ is a capacitated vertex cover for~$H$ of size at most~$k$, and thus $(H,c,k)$ is a yes-instance.

    \medskip
    For the other direction, suppose that $S$ is a capacitated vertex cover for~$H$ of size~$k$ with mapping $\mapvc\colon E \to S$. We define a red--blue labeling~$\rblab$ as follows.
    Using \Cref{clm:cap-prop}, we label the agents in each vertex gadget~$G_v$, $v \in V$, so that 
    \begin{itemize}
        \item $\rblab(v)=\red$ if and only if $v \in S$,
        \item $v$ has $c(v)+2$ red and $\deg_H(v)-c(v)$ blue neighbors within the gadget~$G_v$,
        \item \eqref{eq:vertexbalance} holds for $v \in V$, and 
        \item each agent in~$G_v$ other than~$v$ has more red than blue neighbors.    
    \end{itemize} 
    Using \Cref{clm:edge-gadget}, we then label the agents in~$G_e$ for each $e \in E$ so that 
    \begin{itemize}
        \item for each endpoint~$v$ of $e$, we have $\rblab(z_1^{v})=\blue$ if and only if $\mapvc(e)=v$,
        \item properties (i)--(iii) in \Cref{clm:edge-gadget} hold,
        \item \eqref{eq:edgegadget-balanced} holds, and
        \item each agent in~$G_e$ other than~$v$ has more red than blue neighbors.
    \end{itemize}
    Finally, using \Cref{clm:cap-prop} again, we label the agents in the counting gadget~$G_{a^\star}$ so that 
    \begin{itemize}
        \item $\rblab(a)=\red$ for each agent $a \in K_{a^\star}$,
        \item $\rblab(a^\star)=\blue$,
        \item $a^\star$ has $|V|-k+1$ red and $k$ blue neighbors within~$G_{a^\star}$,
        \item the number of red and blue agents in~$Q_{a^\star} \setminus \{a^\star\}$ is equal, and 
        \item each agent in~$G_{a^\star}$ other than~$a^\star$ has more red than blue neighbors.    
    \end{itemize} 
    
    Note further that for each $v \in V$,
    by $|\mapvc^{-1}(v)|\leq c(v)$ we get that \eqref{eq:total-vertex-neighbors} holds. 
    Moreover, \eqref{eq:astar-allneighbors} holds with $\Delta=0$, and hence, $|S|=k$ yields 
    \[\RedNo_\rblab(N_G(a^\star))=|V|+1>|V|=\BlueNo_\rblab(N_G(a^\star)).
    \]
    This proves that each agent has more red than blue neighbors in~$G$ under~$\rblab$.
    
    In order to show that all agents are under majority illusion, it now suffices to show that blue is the majority winner.
    To see this, note that \eqref{eq:redtotal} and \eqref{eq:bluetotal} both hold with $\Delta=0$ and $|S|=k$, implying $\RedNo_\rblab(N)< \BlueNo_\rblab(N)$.
    Therefore, $(N,G,1)$ is a yes-instance of \probName{$q$-Majority Illusion}.

    This proves the correctness of our reduction.
\end{proof}

\section{Bounded Distance to Disjoint Paths}
\label{sec:ddp}

Note that the \XP algorithm for treewidth by~\citet{GrandiKLRT2025}, as well as our simpler (and faster) dynamic programming for \ListMI, presented in \Cref{lem:LMRMI:poly:tree} and running in $\Oh{|\agents|^5}$ time, implies that \probName{$q$-Majority Illusion} can be solved in polynomial time on trees. 
If the social network is a path~$P$, then the problem gets even easier: %
simply labeling every other agent on~$P$ as red maximizes the number of agents under majority illusion.  

\begin{theorem}
\label{thm:MI:path:linear}
    If the social network~$G$ is a path, \qMI can be solved in linear time.
    In fact, the maximum number of agents under majority illusion in~$G$, denoted by~$\umi[](G)$, depends only on the number~$|\agents|$ of agents:
    \begin{equation*}
        \umi[](G) = \begin{cases}
            0 & \text{if } |\agents| \leq 2, \\
            \left\lceil 
            |\agents|/2
            \right\rceil & \text{if } |\agents| \text{ is odd,}\\
            |\agents|/2 
            - 1 & \text{if } |\agents| \text{ is even.}
        \end{cases}
    \end{equation*}
\end{theorem}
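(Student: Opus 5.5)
We will prove matching upper and lower bounds on the number of agents under majority illusion. We assume without loss of generality that blue is the majority color; otherwise we swap the colors. Then an agent is under illusion exactly when it has a majority-red neighborhood. Let the path be $v_1,\dots,v_n$ with $n=|\agents|$. Let $R$ be the set of red agents and $r=|R|$; then $r\le\lfloor (n-1)/2\rfloor$. An endpoint is under illusion iff its unique neighbor is red. An internal agent is under illusion iff both of its neighbors are red. Writing $I$ for the set of agents under illusion, we therefore have $\neighb(I)\subseteq R$. The cases $n\le 2$ are immediate: for $n=1$ the single agent has no neighbors, and for $n=2$ we have $r=0$. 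Once the formula is established, linear time is clear, since the labelings below can be written down directly.

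For the upper bound we split by the parity of indices. Agents of one parity have all their neighbors among agents of the other parity. Let $I_p$ be the agents of $I$ with index parity $p$, and let $R_{\bar p}$ be the red agents of the opposite parity, so $\neighb(I_p)\subseteq R_{\bar p}$. Consider the agents of parity $p$ and of parity $\bar p$ as interleaved sequences along the path. If a nonempty $I_p$ splits into $\rho\ge 1$ maximal runs of consecutive parity-$p$ agents, then
\[ |\neighb(I_p)|\ge |I_p|+\rho-e_p, \]
where $e_p$ is the number of endpoints of the whole path contained in $I_p$. The reason is that a run of $k$ internal agents has $k+1$ distinct neighbors of the other parity, and each path endpoint in the run removes one of them. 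Runs are separated, so their neighborhoods overlap in at most nothing, except in the degenerate case where one run is the whole parity class. That case I would check separately.

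Next we use where the path endpoints lie. If $n$ is even, $v_1$ and $v_n$ have different parities, so $e_p\le 1\le\rho$ for each $p$. This gives $|I_p|\le|R_{\bar p}|$, hence $|I|\le r\le n/2-1$. If $n$ is odd, both endpoints are odd-indexed. For the odd class this gives $|I_{\mathrm{odd}}|\le|R_{\mathrm{even}}|+1$. If $I_{\mathrm{even}}\neq\emptyset$, it contains no endpoint, so $|I_{\mathrm{even}}|\le|R_{\mathrm{odd}}|-1$, and in total $|I|\le r$. Otherwise $|I|=|I_{\mathrm{odd}}|\le r+1$. Either way $|I|\le r+1\le (n-1)/2+1=\lceil n/2\rceil$.

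For the matching constructions, let $n=2m+1$ and color $v_2,v_4,\dots,v_{2m}$ red. This uses $m$ red agents, so blue is the majority, and all $m+1$ odd-indexed agents have only red neighbors. For $n=2m\ge 4$, color $v_2,\dots,v_{2m-2}$ red. This uses $m-1$ red agents, and $v_1,v_3,\dots,v_{2m-3}$ are under illusion, giving $m-1$ agents. The agent $v_{2m-1}$ is not under illusion, because its neighbor $v_{2m}$ is blue.

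The main obstacle is the neighborhood-counting inequality $|\neighb(I_p)|\ge|I_p|+\rho-e_p$. I would state it as a small claim and check it carefully, especially the degenerate case where a single run covers an entire parity class containing both path endpoints. That case occurs only for odd $n$ and gives exactly $|I_{\mathrm{odd}}|=|R_{\mathrm{even}}|+1$, which is consistent with the bound.
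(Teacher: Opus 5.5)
Your proof is correct and follows essentially the same route as the paper. Both split the path by index parity, bound the illusioned agents of one parity class by the red agents of the other class with endpoint corrections (giving $|I|\le r$, or $|I|\le r+1$ for odd paths), and match this with the alternating labeling. Your run count $|\mathcal{N}(I_p)|\ge|I_p|+\rho-e_p$ is a slightly sharper form of the paper's double counting of incidences. The ``degenerate case'' you flag needs no separate check: distinct maximal runs always have disjoint neighborhoods, and a single run containing both endpoints is already covered by the formula.
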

\begin{proof}
    If $|\agents|\leq 2$, then all agents have to be labeled with the majority color. Thus, no agent is under the illusion. For $|\agents|\geq 3$, 
    let $f^*$ be a labeling of~$G$ for which $\umi[G]=\umi[f^\star](G)$.

    For a path~$P$, let $\mrn[\opt|x](P)$ denote the maximum number of agents with a majority-red neighborhood under any labeling~$f$ of~$P$ that satisfies $\RedNo_f(P)=x$. 
    The following claims establish the value of~$\mrn[\opt|x](P)$ depending on the parity of~$|P|$.

        For a path~$P=(a_1,\dots,a_\ell)$, we define the agent sets $P_1=\{a_i:i \in [\ell], i \text{ odd}\}$ and 
            $P_2=\{a_i:i \in [\ell], i  \text{ even}\}$.
        
        \begin{claim}
            \label{clm:MI:paths:even}
            For a path~$P=(a_1,\dots,a_\ell)$ with $\ell$ even, we have $\mrn[\opt|x](P)=x$ for each $x \in [\ell]$.    
        \end{claim}
        \begin{claimproof}
            Let $A_i$ be the agents with a majority-red neighborhood in~$P_i$ for each $i \in [2]$. Note that all neighbors of any agent in~$A_1 \cup A_2$ must be red. At most one agent (an endpoint of~$P$) has only one neighbor in each~$A_i$, $i \in [2]$, with every other agent having two neighbors, all of them red and in~$P_{3-i}$. Thus, summing up the neighborhood sizes of all agents in~$A_i$ yields at least $2|A_i|-1$. As each red agent in~$P_{3-i}$ can contribute to the neighborhood of at most two agents in~$A_i$, there must be at least $|A_i|$ red agents in~$P_{3-i}$. Hence, labeling $x$ agents red in~$P$ yields at most~$x$ agents with a majority-red neighborhood. Moreover, such a labeling exists for each $x \in [\ell]$: it can be verified easily that it suffices to label the first~$x$ the agents according to the order $a_2,a_4,\dots,a_{\ell}, a_{\ell-1},a_{\ell-3},\dots,a_1$ as red.   
         \end{claimproof}
            
        Now, we show the analogous result for the case when the length of $P$ is odd.
        
        \begin{claim}
            \label{clm:MI:paths:odd}
            For a path~$P=(a_1,\dots,a_\ell)$ with $\ell$ odd, we have 
            \begin{equation}
                \label{eq:MI:path:opt-for-odd}   
                \mrn[\opt|x](P)=
                \begin{cases}
                    x+1 &  \text{ if } x = \frac{\ell-1}{2}; \\
                    x &  \text{ if } x  \in [\ell], x \neq \frac{\ell-1}{2}.
                \end{cases}
            \end{equation}
        \end{claim}
        \begin{claimproof}
            Let $A_i$ be the agents with a majority-red neighborhood in~$P_i$ for each $i \in [2]$. Again, all neighbors of any agent in~$A_1 \cup A_2$ must be red. Note that at most two agents in~$A_1$ (the endpoints of~$P$) have only one neighbor, with every other agent in~$A_1 \cup A_2$ having two neighbors, all of them red. 
            
            Consider first $A_1$. Summing up the neighborhood sizes of all agents in~$A_1$ yields at least $2|A_1|-2$. Since  each red agent in~$P_2$ can contribute to the neighborhood of at most two agents in~$A_1$, there must be at least $|A_1|-1$ red agents in~$P_2$; moreover, equality is only possible if $A_1=P_1$ (as otherwise there is at least one red agent in~$P_2$ with only one neighbor in~$A_1$).   Hence, labeling $x$ agents red in~$P_2$ yields at most~$x$  agents with a majority-red neighborhood, unless $x=|P_2|$ in which case all $x+1$ agents in~$A_1$ have majority-red neighborhood. 
            
            Consider now $A_2$. %
            Summing up the neighborhood sizes of all agents in~$A_2$ yields at least $2|A_2|$. Since  each red agent in~$P_1$ can contribute to the neighborhood of at most two agents in~$A_1$, but the ones with the smallest and the largest index can contribute to the neighborhood of at most one agent in~$A_2$, we get that there must be at least $|A_2|+1$ red agents in~$P_1$. Hence, labeling $x$ agents red in~$P_1$ yields at most~$x-1$ agents with a majority-red neighborhood in~$P_2$. 
    
            Altogether, it follows that the value of~$\mrn[\opt|x]$ is at most the right-hand side of~\eqref{eq:MI:path:opt-for-odd}. 
            Moreover, a suitable labeling reaching the claimed value exists for each $x \in [\ell]$: as in \Cref{clm:MI:paths:even}, it suffices to label the first~$x$ the agents according to the order $a_2,a_4,\dots,a_{\ell}, a_{\ell-1},a_{\ell-3},\dots,a_1$ as red.   
         \end{claimproof}

     As the majority color must be blue, the maximum number of red agents is $|\agents|/2 -1$ if $|\agents|$ is even, and $\lfloor|\agents|/2| \rfloor$ otherwise. I.e., \Cref{clm:MI:paths:even,clm:MI:paths:odd} imply the statement.
\end{proof}

The observations in the proof of \Cref{thm:MI:path:linear} can be used to solve the case when the social network consists of several disjoint paths, leading to a linear-time algorithm. 

\begin{restatable}%
{theorem}{thmMIpolylinearForest}    
    \label{thm:MI:poly:linearForest}
    If the social network is the union of disjoint paths, \probName{$q$-Majority Illusion} is solvable in linear time.
\end{restatable}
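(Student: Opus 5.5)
Having fixed blue as the majority colour, I would reduce the problem to distributing a budget of red agents over the paths. By symmetry, the answer is the maximum of $\sum_i \mrn[f](P_i)$ over labelings~$f$ with $\RedNo_f(\agents)\le R$, where $R=\lceil|\agents|/2\rceil-1$ is the largest number of red agents that keeps blue the strict majority. Let the paths be $P_1,\dots,P_m$ with $\ell_i=|P_i|$. For each path, \Cref{clm:MI:paths:even,clm:MI:paths:odd} give $\mrn[\opt|x](P_i)$ as a function of the red budget~$x$ spent on it: it is $x$ in general, and $x+1$ at $x=(\ell_i-1)/2$ when $\ell_i$ is odd. These claims should be rechecked for the degenerate cases $\ell_i\le 2$. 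An isolated agent has no neighbours, so it never has majority-red neighbourhood and its value is $0$ for every~$x$. For $\ell_i=2$ the value is $x$ for $x\le 2$. The value cannot exceed $\ell_i$, and $\ell_i$ is attained at $x=\ell_i$ only when $\ell_i$ is even; for odd $\ell_i$ the maximum $\ell_i$ is already reached at $x=(\ell_i-1)/2+\ldots$, which I will handle in the greedy below. The instance is then a knapsack-like allocation: choose $x_i\ge0$ with $\sum x_i\le R$ so as to maximise $\sum_i \mrn[\opt|x_i](P_i)$.

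I would solve this allocation greedily, using the fact that the marginal gain of a red agent is essentially always $1$. The only exceptions are a \emph{bonus} of $+1$ for each odd path $P_i$ with $\ell_i\ge3$ that receives exactly $(\ell_i-1)/2$ red agents, and wasted budget on isolated agents. The objective therefore equals $\sum_i x_i$ plus the number of bonused odd paths, provided no budget is wasted. It is also capped by $\sum_i \ell_i$ minus the isolated agents. To maximise, I would first sort the odd paths (length $\ge3$) by increasing cost $(\ell_i-1)/2$ and take as many bonuses as possible, say $b$ of them, within budget~$R$. The remaining budget is then spent at rate $1$ on even paths and, without destroying a bonus, on odd paths. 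Here lies the subtlety. Adding reds to a bonused odd path beyond $(\ell_i-1)/2$ keeps the value at $x_i$, which loses the bonus. I would verify directly from \Cref{clm:MI:paths:odd} that the value is $x$ for $x>(\ell-1)/2$, so the bonus is lost. Hence every extra red spent there nets $0$ until it overtakes.

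To settle the remaining budget, I would show the total capacity suffices: $R<|\agents|/2$. The reds not used for bonuses can always be placed on even paths or on non-bonused odd paths at rate $1$. If even these run out, the budget is large relative to the graph, and I would argue directly that the optimum is $\min$ of the greedy value and $|\agents|$ minus the number of isolated agents. The exchange argument is the main obstacle: proving that taking the cheapest bonuses first is optimal. An optimal allocation using a bonus on a longer odd path instead of an available shorter one can be swapped without loss, since both bonuses are worth~$1$ and the cheaper one frees budget that is spent at rate $\le1$.

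For the running time, the path lengths are bounded by $|\agents|$, so counting sort keeps the sorting linear. The greedy is then a single pass, giving linear time overall.
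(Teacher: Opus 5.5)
\paragraph{A gap in choosing the number of bonuses}
Your model is right: the reds are allocated over the paths, each path earns $x$, and an odd path with $\ell\ge 3$ agents holding exactly $(\ell-1)/2$ reds earns an extra $1$. Treating isolated agents as zero-value sinks is also correct; the paper glosses over this, since its odd-path claim would credit an isolated agent with a bonus.

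The gap is in deciding how many bonuses to keep. The budget never limits them. The inequality $\sum_P\lfloor |P|/2\rfloor\le R$, summed over all odd paths with at least three agents, always holds. So ``as many bonuses as possible within budget~$R$'' means all of them. The binding constraint is the opposite one: a bonused path cannot absorb another red without losing its bonus. Hence your claim that the leftover reds ``can always be placed on even paths or on non-bonused odd paths at rate~$1$'' is false.

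Take five disjoint copies of $P_3$. Then $|N|=15$ and $R=7$, and a $P_3$ with $x=0,1,2,3$ reds has value $0,2,2,3$.
\begin{itemize}
\item Your greedy bonuses all five paths and is left with two surplus reds, so it achieves $10$.
\item The value $R+b=12$ is not achievable.
\item The optimum is $11$, attained by $(3,1,1,1,1)$, that is, by \emph{sacrificing} one bonus.
\end{itemize}
Only the first extra red on a bonused path nets $0$; every further one nets $+1$, which is why sacrificing pays. Neither $10$ nor $\min\{12,15\}$ equals $11$, so your fallback formula does not close the gap.

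\paragraph{The missing capacity condition}
Turning a blue agent red never destroys a majority-red neighborhood, so you may fix exactly $R$ reds. Let $\mathcal{S}_f$ be the set of odd paths carrying exactly $\lfloor |P|/2\rfloor$ reds. Then $\mu_f(N)\le R+|\mathcal{S}_f|$, and $\mathcal{S}_f$ must satisfy $R\le\sum_{P\in\mathcal{S}_f}\lfloor |P|/2\rfloor+\sum_{P\notin\mathcal{S}_f}|P|$.

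Bonusing a path of length~$\ell$ removes $\lceil\ell/2\rceil$ units of capacity. So a maximum feasible family is a prefix of the odd paths sorted by length: add them while the inequality holds. The optimum is $R$ plus the size of that family. In the example, $k+3(5-k)\ge 7$ gives $k=4$, hence $11$.

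Count capacity only on paths with at least two agents. If even the empty family violates the inequality, the answer is the number of non-isolated agents.

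This also gives the correct rationale for your exchange step. The shorter path is preferable because it blocks less capacity, not because it frees budget, which is never scarce.

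Separately, your remark on maximal values is wrong. For odd $\ell\ge 3$ the value $\ell$ is attained only at $x=\ell$, not near $x=(\ell-1)/2$, where the value is $(\ell+1)/2$.
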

\begin{proof}
    We show that there is a simple, linear-time method to create an optimal labeling~$\rblab$ of the set~$\agents$ of agents in~$G$ optimally, i.e., one that maximizes the number of agents under illusion. Then comparing $\umi[\rblab](\agents)$ with $q \cdot |\agents|$ suffices to decide whether our instance is a yes-instance.

    Our method relies on \Cref{clm:MI:paths:even,clm:MI:paths:odd}. 
    Let $\Peven$ and $\Podd$ denote the set of disjoint paths in~$G$ with an even and odd number of agents, respectively, and let $\calP=\Peven \cup \Podd$. 
    Furthermore, let us adopt the notation $P_1$ and~$P_2$ as defined in the proof of \Cref{thm:MI:path:linear} for each path $P$ in~$G$.

    Let $f$ be a labeling of~$G$ that maximizes $\umi[\rblab](G)$. We can clearly assume that $\RedNo_\rblab(\agents)=\lfloor |(\agents|-1)/2 \rfloor$.
    Let us call a path~$|P|$ in~$G$ \emph{$f$-superb} if it is odd and has $\lfloor |P|/2 \rfloor$ red agents;
    then $\umi(P)=\lceil |P|/2 \rceil$.
    Note that for the set~$\calS_f$ of $f$-superb paths, it must hold that 
    \begin{equation}
    \label{eq:paths:calS-bound}    
    \left\lfloor\frac{|\agents|-1}{2}\right\rfloor 
    =
    \RedNo_\rblab(\agents)
    \leq 
    \sum_{P \in \calS_f}\left\lfloor \frac{|P|}{2} \right\rfloor +
        \sum_{P \in \calP \setminus \calS_f} |P|.
    \end{equation}
    Note also that 
    \begin{align*}
        \umi[f](\agents) &=
        \umi[f]\left(\bigcup_{P \in \calS_f} P\right) + 
        \umi[f]\left(\bigcup_{P \in \calP \setminus \calS_f} P\right)
        \\
        & \leq 
        \sum_{P \in \calS_f} \left\lceil \frac{|P|}{2} \right\rceil
        + 
        \sum_{P \in \calP \setminus \calS_f}  \RedNo_f(P) \\
        & = \sum_{P \in \calS_f} 
        (\RedNo_f(P)+1)
        + 
        \sum_{P \in \calP \setminus \calS_f}  \RedNo_f(P)
        \\
        & = \RedNo_f(\agents)+|\calS_f|
    \end{align*} 
    where the inequality is implies by \Cref{clm:MI:paths:even,clm:MI:paths:odd}.

    We use the following algorithm:
    \begin{enumerate}
        \item We create add to a path family~$\Pi$ paths which will be $f'$-superb for the labeling~$f'$ we construct.
        Initially, $\Pi=\emptyset$, and we add paths from~$\Podd$ to~$\Pi$ one by one, in increasing order of their length, as long as the  condition 
        \begin{equation}
            \label{eq:paths:Pi-bound}
        \left\lfloor\frac{|\agents|-1}{2}\right\rfloor
        \leq
        \sum_{P \in \Pi}\left\lfloor \frac{|P|}{2} \right\rfloor +
        \sum_{P \in \calP \setminus \Pi} |P| 
        \end{equation}
        holds.
        \\
        Note that due to the ordering of the odd paths in which we add them to~$\Pi$, we know that $|\Pi| \geq |\calS_f|$, because $\calS_f$ satisfies~\eqref{eq:paths:calS-bound}.
        \item We label each agent in~$\{P_2:P \in \Pi\}$ as red; let $R$ be number of these agents.
        \item We label $\lfloor (|\agents|-1)/2 \rfloor - R$ agents on paths in~$\calP \setminus \Pi$ as red in a way that yields $\lfloor (|\agents|-1)/2 \rfloor - R$ agents with a majority-red neighborhood in these paths; this can be achieved as shown in \Cref{clm:MI:paths:even,clm:MI:paths:odd}. 
    \end{enumerate} 
    Notice that this way, we can label exactly $\lfloor (|\agents|-1)/2 \rfloor$ agents red, due to~\eqref{eq:paths:Pi-bound}; in particular, the majority color will be blue. Moreover, \Cref{clm:MI:paths:even,clm:MI:paths:odd} guarantee that the number of agents under majority illusion in the obtained labeling~$f'$ is exactly $\RedNo_{f'}(\agents)+|\Pi| \geq \RedNo_{f}(\agents)+|\calS_f|=\umi(G)$. Hence, the labeling~$f'$ is optimal. 
    To decide whether our instance is a yes-instance, it suffices to compare $\umi[f'](\agents)$ with $q \cdot |\agents|$.

    Clearly, the running time of our algorithm is $\Oh{|N|}$.
\end{proof}

Our final result shows that the method we apply to solve \qMI on disjoint paths is not robust with respect to the addition of agents: the presence of only a few additional agents leads to intractability, since the problem is \Wh when parameterized by the distance to disjoint paths of the social network.  

\begin{restatable}%
{theorem}{thmMIWhardddp}
    \label{thm:MI:W1hard:ddp}
    \probName{$q$-Majority Illusion} is \Wh  when parameterized by the distance to disjoint paths $\ddp(G)$ of the social network~$G$.
\end{restatable}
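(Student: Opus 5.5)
We will mimic the reduction from \probName{Multidimensional Halfset Sum} used in the proof of \Cref{thm:MI:W1hard:dds}, replacing every star by a path. The parameter is the dimension~$d$, and by \Cref{lem:halfset-sum} we may assume the numbers are encoded in unary. Given $I=(d,\calS)$ with $\calS=\{\myvec{s}_1,\dots,\myvec{s}_n\}$, we keep the modulator $M=\{m_1,\dots,m_d,\ol{m}_1,\dots,\ol{m}_d\}$, so that $\ddp(G)\le 2d$. For each vector~$\myvec{s}_i$ we create a long path~$P_i$ that plays the role of the star with center~$c_i$. The vertices of $P_i$ are designated to be adjacent either to~$m_j$ (there are $\myvec{s}_i[j]$ of these) or to~$\ol{m}_j$ (there are $\beta-\myvec{s}_i[j]$ of these), and the rest are free.

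The key ingredient is \Cref{clm:MI:paths:odd}: an odd path with exactly $\lfloor |P|/2\rfloor$ red agents (all of $P_2$) gains one extra agent under illusion. This ``superb'' bonus replaces the illusion of the center agent in the star reduction. Let all $P_i$ be odd and of equal length $2L+1$, and attach the modulator neighbors to the agents in $P_2$, that is, the agents that are red in the superb labeling. We then choose the total red budget and the threshold $q|N|$ so that three things hold:
\begin{enumerate*}[label=(\roman*)]
\item almost all agents must be under illusion;
\item the red budget is exactly enough to make precisely $n/2$ paths superb, with every other path labeled optimally otherwise;
\item the $\pm 1$ padding agents in~$A$ (isolated edges attached to~$m_j$ and~$\ol{m}_j$, as before) are forced red.
\end{enumerate*}
Since the modulator agents should be red, the agents of $P_1$ adjacent to them would see extra red neighbors. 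To avoid this, we attach the modulator edges to $P_2$-agents and add a pendant-like spacing so that these extra neighbors do not change the counts. We prefer to keep degree-$3$ path agents only on $P_2$, where the red count does not matter for illusion of $P_1$-agents.

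Correctness then follows the same pattern as in \Cref{thm:MI:W1hard:dds}. First, by exchange arguments, an optimal labeling makes all modulator agents and all agents of~$A$ red; the modulator agents have large degree sums because the sums $\ssum[j]$ and $n\beta-\ssum[j]$ are at least~$3$. Next, a counting argument in the spirit of the proof of \Cref{thm:MI:poly:linearForest}, namely $\umi[f]\le \RedNo_f+|\calS_f|$ with the path contributions adjusted for modulator neighbors, shows that reaching the threshold requires exactly $n/2$ superb paths, each colored in the unique alternating pattern. The set~$\calS'$ of vectors whose paths are superb then makes each $m_j$ have at least $\myvec{t}[j]+1$ red neighbors and each $\ol{m}_j$ have at least $\frac n2\beta-\myvec{t}[j]+1$. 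As before, these two bounds force $\sum_{\myvec{s}\in\calS'}\myvec{s}=\myvec{t}$. The converse direction is a direct construction: the paths of $\calS'$ get the alternating labeling and the remaining budget goes to non-superb paths via the orderings in \Cref{clm:MI:paths:even,clm:MI:paths:odd}.

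The main obstacle is the second step. In the star construction, red leaves on a non-selected star were useless. On a path, however, red agents always yield roughly one illusion each, and a non-superb path could spend red agents on $P_2$-agents adjacent to modulators, giving partial contributions to some $m_j$. That would let the modulators be satisfied without a genuine halfset. To rule this out, we will make the free part of each path long (length polynomial in $d\beta n$) and set the red budget so that it is exactly $\frac n2 L$ plus the forced reds. Then any labeling with fewer than $n/2$ superb paths loses at least one illusion per missing superb path, and the slack of $q$ is exactly~$0$ beyond the intended non-illusioned agents. We also intend to tune which $P_1$-agents can be under illusion at all, so that the bonus and the modulator contributions cannot be obtained partially.
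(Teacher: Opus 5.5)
Your proposal has a genuine gap in the second step, and it is not the one you flag.

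\textbf{Where the counting breaks.} The bound $\mu_f\le \RedNo_f(N)+|\mathcal{S}_f|$ from the proof of \Cref{thm:MI:poly:linearForest} holds only for bare paths. Once a $P_2$-agent is adjacent to a red modulator, it has a majority-red neighborhood as soon as \emph{one} of its two path neighbours is red.

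In your intended labeling, no modulator-attached $P_2$-agent is under illusion: its two $P_1$-neighbours are blue, so it sees $1$ red out of $3$. No free $P_2$-agent is under illusion either. So your item~(i) already fails for this design, and there is a lot of slack.

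Concretely, suppose two attached agents share a $P_1$-neighbour, as happens when a path's $d\beta$ attached agents are consecutive. On a superb path, recolor $k$ consecutive free $P_2$-agents blue; this loses $k+1$ illusions. Then recolor $k$ such shared $P_1$-neighbours red; this gains $2k$. No modulator is affected.

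Spacing the attached agents apart only changes the rate, not the phenomenon. Take two attached agents with $g$ free $P_2$-agents between them, and color the $g+1$ $P_1$-agents between them red. This yields $g+2$ illusions, which beats the rate $(L+1)/L$ of a superb path since $g+1<L$.

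\textbf{Why no-instances pass.} The conditions at $m_j$ and $\overline{m}_j$ are monotone: at least $\myvec{t}[j]$, resp.\ at least $\frac n2\beta-\myvec{t}[j]$, red attached neighbours. So a no-instance can proceed as follows. Take any $n/2$ paths. Repair all modulator deficits with at most $|\myvec{t}|\le d\beta/2$ extra red attached agents, taken from the free reds of one of these paths. Pay for this with $d\beta/2+1$ of the two-for-one swaps. With your zero-slack threshold, the no-instance then reaches $q|N|$.

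Making the free part long does not help. The free part is exactly what keeps the intended labeling from using this effect. Partial contributions are excluded, if at all, by tightness of the count, not by length. Also, a ``pendant-like spacing'' attached to internal path agents would destroy the disjoint-paths structure of $G-M$.

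\textbf{The missing idea: no free part.} The paper lets the intended labeling itself exploit the cheap illusions.
\begin{itemize}
\item All vectors are normalized to a common norm $s^\star$.
\item The long path $Y_i=(p_i^1,q_i^1,\dots,q_i^{s^\star},p_i^{s^\star+1})$ has \emph{every} $q$-agent adjacent to exactly one of $m_1,\dots,m_d$.
\item A hub $m_0$ is adjacent to all other modulators and to the two even-position agents of each of $\alpha=n-d-2$ five-agent padding paths.
\end{itemize}
The intended labeling colors red all modulators, every other $p$-agent, and $a_i^2,b_i^1,b_i^2$ of each padding path. This puts every $q$- and $b$-agent under illusion at two per red agent. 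It additionally colors $Q_i$ red for the $n/2$ selected vectors. Only the $p$-agents of the unselected paths miss the illusion, which matches $q|N|=|N|-\frac n2(s^\star+1)$.

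Under the fixed red budget, the upper bound $\frac32\RedNo(B)+\frac{s^\star+1}{s^\star}\RedNo(Q)+2\RedNo(A\cup P)+|M|$ on the number of agents under illusion reaches $q|N|$ with zero slack. Equality in the middle term forces each $Q_i$ to be entirely red or entirely blue, with exactly $n/2$ red ones. This all-or-nothing equality case is what excludes partial contributions to the modulators. Each $m_j$ then needs at least half of its $Q$-neighbours red. Summing over $j$ with equal norms turns these inequalities into equalities, which replaces your $m_j/\overline{m}_j$ pairing.
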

\begin{proof}
    We are going to present a parameterized reduction from \probName{Multidimensional Halfset Sum} when parameterized by the dimension~$d$. Let $(d,\calS)$ be our input instance with $\calS=\{\myvec{s}_1,\dots,\myvec{s}_n\}$. 
    We may assume without that $|\myvec{s}_i|=\ssize$ for some integer $\ssize \in \mathbb{N}$, as otherwise we can create an equivalent instance by appending to each vector~$\myvec{s}_i \in \calS$ a new coordinate with value~$\ssize-|\myvec{s}_i|$ for some large enough integer~$\ssize$, say $\ssize=\max_{i \in [n],j \in [d]} \myvec{s}_i[j]$. 
    Observe that this transformation indeed yields an equivalent instance, since if a set~$\calS'$ of $\frac{n}{2}$ vectors sum up to %
    $\frac{\ssum}{2}$ where $\ssum=\sum_{i \in [n]} \myvec{s}_i$, then the newly appended last coordinates of the vectors in~$\calS'$, as well as in~$\calS \setminus \calS'$, automatically sum up to $\frac{n}{2}\ssize-|\ssum|$.
    We may further assume that $n>d+2$ and that $\ssum[j]>4d$ for each~$j \in [d]$.
    Moreover, by multiplying all vectors in~$\calS$ by~$4$, we ensure that each coordinate of every vector in~$\calS$ (and, hence, $\ssize$) is divisible by~$4$.

    \proofsubparagraph{Construction}
    Let us now define an instance~$I=(N,G,q)$ of \probName{$q$-Majority Illusion} as follows.
    We define the integer $\alpha=n-d-2$. 

    To define the social network~$G$ of our instance~$I$, we create a \emph{short path} for each $i \in [\alpha]$ which is a path $X_i=(a_i^1,b_i^1,a_i^2,b_i^2,a_i^3)$,
    and a \emph{long path}  for each $i \in [n]$ defined as a path~$Y_i=(p_i^1,q_i^1,p_i^2,q_i^2,\dots,p_i^{\ssize},q_i^{\ssize},p_i^{\ssize+1})$.
    We will use the agent sets 
    \begin{equation*}
    \begin{array}{r@{\hspace{2pt}}l@{\hspace{12pt}}r@{\hspace{2pt}}l}
    A_i&=\{a_i^j:j \in [3]\},   & 
    P_i& =\{p_i^j:j \in [\ssize+1]\},\\[3pt]
    B_i&=\{b_i^j:j \in [2]\},  &
    Q_i&=\{q_i^j:j \in [\ssize]\}. \\[3pt]
    \end{array}
    \end{equation*}
    We further let $A=\bigcup_{i \in [\alpha]} A_i$, and we define the sets~$B$, $P$, and~$Q$ analogously.
    
    Additionally, we create a set~$M=\{m_0,m_1,\dots,m_d\}$ of \emph{modulator agents}.
    We connect~$m_0$ with all agents in~$B$ as well as with every other modulator agent. 
    Moreover, we connect~$m_1$ to the first $\myvec{s}_i[1]$ agents of~$Q_i$ on~$Y_i$ (i.e., $q_i^1,\dots,q_i^{\myvec{s}_i[1]}$) for each $i \in [n]$, then connect $m_2$ to the next $\myvec{s}_i[2]$ agents of~$Q_i$ on~$Y_i$, and so on, with each modulator~$m_j$, $j \in [d]$, connected to exactly~$\myvec{s}[j]$ agents from~$Q_i$ (placed ``consecutively'' along~$Y_i$). This way, every agent in~$Q_i$ is connected to exactly one modulator agent, by $|Q_i|=\ssize$. This defines the social network~$G$ over agent set~$N =\Aodd \cup \Aeven \cup M$. Note that the number of agents is $|N|=5\alpha+n(2\ssize+1)+d+1$.
    
    Finally, we set $q$ such that $q|N|=|N|-\frac{n}{2}(\ssize+1)$.

    \proofsubparagraph{Correctness}
    Assume first that there is a red--blue labeling~$\rblab$ of~$G$ with majority color blue that leaves $q|N|$ agents under majority illusion; let us choose~$\rblab$ so as to maximize the number of agents under majority illusion.
    We may assume that 
    \begin{equation}
    \label{eq:ddp:redstotal}
    \RedNo_\rblab(N)=\frac{|N|-1}{2}=3\alpha+n\ssize+d+1    
    \end{equation}
    due to our choice of~$\alpha$.

    Notice that the number of agents under majority illusion in~$A_i$ (or in~$P_i$) solely depends on the labeling of agents in~$B_i$ (or in~$Q_i$, respectively).
    labeling one agent in~$\{b_i^1,b_i^2\}$  red for some $i \in [\alpha]$ yields exactly one agent in~$A_i$ under majority illusion, while labeling both of them achieves this for all three agents in~$A_i$.
    Setting 
    \begin{equation}
    \label{eq:ddp:kBdef}
        \RedNo_\rblab(B)=2\alpha-k_B,
    \end{equation}
    we know that the number of agents under majority illusion in~$A$ is at most \begin{equation}
    \label{eq:ddp:umi-in-A}
        \frac{3\RedNo_\rblab(B)}{2} \leq 3\alpha-\frac{3k_B}{2}.
    \end{equation}
    
    Similarly, if there are $\ell$ red agents in~$Q_i$, then the maximum number of agents under illusion in~$P_i$ is $\ell$ if $\ell< \ssize$ and $\ell+1=\ssize+1$ if $\ell=\ssize$.
    Hence, the number of agents under majority illusion in~$P$ is at most
    \begin{equation}
    \label{eq:ddp:umi-in-P}
        \frac{\ssize+1}{\ssize} \RedNo_\rblab(Q) .
    \end{equation}
        
    Similarly, the number of agents under majority illusion in~$B_i$ (or in~$Q_i$) solely depends on the labeling of agents in~$A_i \cup \{m_0\}$ (or in~$P_i \cup M \setminus \{m_0\}$, respectively).
    In order to ensure that all agents in~$\bigcup_{i \in \alpha} B_i$ are under majority illusion, it is sufficient to label every agent in~$\{a_i^2:i \in [\alpha]\} \cup \{m_0\}$ red.
    Similarly, to ensure that all agents in~$\bigcup_{i \in [n]} Q_i$ are under majority illusion, it is sufficient to label all agents in~$M \setminus \{m_0\} \cup \{p_i^{2j}:j \in [\ssize/2]\}$ as red.
    Thus, we need not more than~$\alpha+\frac{n\ssize}{2}+d+1$ red agents in~$A \cup P \cup M$ to ensure that all agents in~$B \cup Q$ are under majority illusion.  
    In particular, by \eqref{eq:ddp:redstotal}, there are at least $2\alpha+\frac{n\ssize}{2}$ red agents in~$B \cup Q$, implying $\RedNo_\rblab(Q) \geq \frac{n\ssize}{2}$. 

    \begin{claim}
    \label{clm:ddp:modulators-red}
        All modulator agents are red under~$\rblab$.
    \end{claim}
    \begin{claimproof}
        Clearly, we if $m_0$ is not red but $\RedNo_\rblab(A)>0$, then either we can re-label any red agent in~$A$ as blue and set~$m_0$ as red without decreasing the number of agents under majority illusion. If $\RedNo_\rblab(A)=0$, then we can take the first $d+4$ red agents from~$Q$ in the order where $q_i^j$ precedes $q_{i'}^{j'}$ if $i<i'$ or $i=i'$ and $j<j'$ 
        (recall that $\RedNo_\rblab(Q) \geq \frac{n\ssize}{2} > d+4$), 
        re-label them as blue and label~$m_0$ as well as $d+3$ agents of the form~$a_i^2$ as red: this way, at most~$d+4$ agents in~$P$ and possibly $d$ modulators cease to be under illusion, while at least $2(d+3)$ agents in~$B$ will get (newly) under illusion. By our choice of~$\rblab$, this proves that $m_0$ is red.

        To see that each $m_j \in M \setminus \{m_0\}$ is red too, suppose the contrary.
        Let $\hat{P}_j$ agents of the form~$p_i^{2h}$ with two neighbors in~$N_G(m_j)$. If some agent in~$\hat{P}_j$ is red, then we can re-label it as blue and set~$m_j$ as blue instead, without decreasing the number of agents under majority illusion. 
        Otherwise, 
        let $x+1$ be the odd integer in~$\{d+4,d+5\}$. 
        Let us re-label as blue the first $x+1$ red agents from~$Q$ as in the previous paragraph, causing at most~$x+2+d$ agents from~$P \cup M$ to lose their majority illusion, and re-label~$m_j$ as well as $x$ agents of the form~$\hat{P}_j$ as red, increasing the number of agents under majority illusion in~$Q$ by $2x$. 
        Due to $x+2+d < 2x$, this increases the number of agents under majority illusion, contradicting our choice of~$\rblab$ and proving $\rblab(m_j)=\red$.
    \end{claimproof}

    By \Cref{clm:ddp:modulators-red}, we know that all modulators are red. In addition, there are at most $\alpha+\frac{n\ssize}{2}$ red agents in~$A \cup P$ (as that many red agents suffice to make all agents in~$B \cup Q$ to be under majority illusion). 
    Hence, there exists a non-negative integer $k_{A \cup P}$ such that 
    \begin{equation}
    \label{eq:ddp:kAPdef}
        \RedNo_\rblab(A \cup P)= \alpha+\frac{n\ssize}{2}-k_{A \cup P}.
    \end{equation}
    It is clear that every agent in~$B \cup Q$ needs at least one red neighbor in~$A \cup P$, and each agent in~$A \cup P$ is connected to at most two agents in~$B \cup Q$. Thus, the number of agents under majority illusion in~$B \cup Q$ is at most
    \begin{equation}
    \label{eq:ddp:umi-inBQ}
        2\RedNo_\rblab(A \cup P) \leq 2\alpha+n\ssize-2k_{A \cup P}.
    \end{equation}

    Note that by \Cref{clm:ddp:modulators-red} and~\eqref{eq:ddp:redstotal}, we know 
    \begin{equation}
    \label{eq:ddp:reds-per-category}
    \begin{split}
        \RedNo_\rblab(Q) &= \RedNo_\rblab(N)-|M|-
        \RedNo_\rblab(A \cup P) -\RedNo_\rblab(B)\\  
        & =3\alpha+n\ssize
        -
        (\alpha+\frac{n\ssize}{2}-k_{A \cup P})
        -(2\alpha-k_B)        
        \\
        &=\frac{n\ssize}{2}+k_{A \cup P}+k_B
    \end{split}
    \end{equation}
    where we used the definitions of~$k_{A \cup P}$ and of~$k_B$ as given in~\eqref{eq:ddp:kBdef} and~\eqref{eq:ddp:kAPdef}.
    
    Summing up~\eqref{eq:ddp:umi-in-A}, \eqref{eq:ddp:umi-in-P}, and~\eqref{eq:ddp:umi-inBQ} while taking into account the modulator agents and using~\eqref{eq:ddp:reds-per-category}, the total number of agents under majority illusion is at most
    \begin{align}
        \notag
        3\alpha &-\frac{3k_B}{2}+\frac{\ssize+1}{\ssize} \RedNo_\rblab(Q)+         2\alpha+n\ssize-2k_{A \cup P} +|M| \\
        \notag
        &= 5\alpha +n\ssize+\frac{\ssize+1}{\ssize}\left(\frac{n\ssize}{2}+k_{A \cup P}+k_B\right)-\frac{3k_B}{2}\\
        \notag
        & \phantom{= 5\alpha} - 2k_{A \cup P}+d+1 \\
        \notag
        &\leq  5\alpha+\frac{3n\ssize}{2}-k_{A \cup B}-\frac{k_B}{2}+\frac{n}{2}+d+1
        \\
        \label{eq:ddp:no-slack}
        &= q|N|-k_{A \cup B}-\frac{k_B}{2} \leq q|N|.
    \end{align}
    Since there are at least $q|N|$ agents under illusion in~$\rblab$, the following must hold: 
    \begin{enumerate}[label=(\alph*),left=2pt]
        \item \eqref{eq:ddp:no-slack} holds with equality, which  implies~$k_B=k_{A \cup P}=0$ and, hence $\RedNo_\rblab(Q)=\frac{n\ssize}{2}$;
        \item the number of agents under majority illusion in~$P$ must be exactly $\frac{\ssize+1}{\ssize} \RedNo_\rblab(Q)=\frac{n}{2}(\ssize+1)$;
        \item all modulators must be under majority illusion. 
    \end{enumerate}
    Observe that (a) and~(b) together imply that the set of red agents in~$Q$ must be of the form $\bigcup_{i \in S}{Q_i}$ for some ${I^\star \subseteq [n]}$ with $|I^\star|=\frac{n}{2}$. Hence, for each $j \in [d]$, the modulator agent~$m_j$ has exactly $\sum_{i \in I^\star}\myvec{s}_i[j]$ red neighbors. Therefore, (c) requires that 
    \begin{equation}
    \label{eq:ddp:vectorsum}
    \sum_{i \in I^\star}\myvec{s}_i[j] \geq \sum_{i \in [n] \setminus I^\star}\myvec{s}_i[j],    
    \end{equation}
    since $m_j$ has only one red neighbor not in~$Q$ (namely, $m_0$).
    Summing this up for $j \in [d]$, we get
    \[
    \frac{n}{2}\ssize=
    \sum_{j \in [d]} \sum_{i \in I^\star}\myvec{s}_i[j] 
    \geq \sum_{j \in [d]} 
    \sum_{i \in [n] \setminus I^\star}\myvec{s}_i[j]
    = \frac{n}{2}\ssize
    \]
    by $|\myvec{s}|=\ssize$ for each $\myvec{s} \in \calS$. Hence, equality must hold for~\eqref{eq:ddp:vectorsum} for each $j \in [d]$, implying that $\{\myvec{s}_i:i \in I^\star\}$ is a solution for our instance~$I$of \probName{Multidimensional Halfset Sum}.

    \medskip
    For the other direction, suppose $I$ is  yes-instance, i.e., there exists a subset~$\calS'$ of $\calS$ of size~$\frac{n}{2}$ for which $\sum_{\myvec{s} \in \calS'} \myvec{s}=\frac{\ssum}{2}$.
    Let $I^\star=\{i: i \in [n],\myvec{s_i} \in \calS'\}$.
    We define a red--blue labelling~$\rblab$ for~$J$ under which exactly $q|N|$ agents are under majority illusion. 

    First, let us color all modulator agents, all agents in \[M \cup \{a_i^2,b_i^1,b_i^2:i \in [\alpha]\} \cup \{p_i^{2j}:i \in [n],j \in [\ssize/2]\}\] red, as well 
    as all agents in~$\bigcup_{i \in I^\star} Q_i$. By our previous arguments, this means that all agents in~$A \cup B \cup Q$, as well as all agents in~$\bigcup_{i \in I^\star} P_i$ are under majority illusion.
    Moreover, since~$\calS'$ is a solution, we know that \eqref{eq:ddp:vectorsum} holds for each $j \in [d]$ with equality; since $m_0$ is red and is the only neighbor of~$m_j$ not in~$Q$, we get that $m_j$ is also under majority illusion.
    This clearly holds for $m_0$ as well, who has only red neighbors. Hence, the set of agents under majority illusion is exactly $N \setminus \bigcup_{i \in [n] \setminus I^\star}P_i$, a set of size exactly~$q|N|$. 
    This shows the correctness of our reduction.
\end{proof}

\section{Bounded Distance to Cluster Graph}

Finally, let us turn to the parameterization by the distance of the social network to cluster graphs; that is, we assume that removing a few agents results in a collection of cliques; however, these cliques can have unbounded size. Again, we are not the first considering this parameter in the context of social networks; see, e.g., \cite{EibenGO2018,FioravantesKKMO2025a,FioravantesGMS2026}.

\begin{theorem}
\label{thm:MI:XP:dcg}
    \probName{$q$-Majority Illusion} is in \XP when parameterized by~$\operatorname{dcg}(G)$, the distance to cluster graphs of the social network~$G$.
\end{theorem}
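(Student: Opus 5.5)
Let $M$ be a set of $k=\dcg(G)$ agents such that $G-M$ is a disjoint union of cliques $K_1,\dots,K_m$; such a set can be found in $\Oh{3^k}\cdot|\agents|^{\Oh{1}}$ time. We start the same way as in the proof of \Cref{thm:MI:FPT:vi}. We guess the labeling $f_M$ of~$M$ and the set $M'\subseteq M$ of modulator agents that are under majority illusion, which gives $4^k$ choices. We assume blue is the majority color.

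The key observation is that within a clique $K$ of $G-M$, agents with the same neighborhood in~$M$ are twins. So $K$ splits into at most $2^k$ \emph{types}, indexed by $T\subseteq M$. Given $f_M$, an agent $v\in K$ of type~$T$ has majority-red neighborhood if and only if
\[
\RedNo(K)-\llbracket f(v)=\red\rrbracket+\RedNo_{f_M}(T) > \BlueNo(K)-\llbracket f(v)=\blue\rrbracket+\BlueNo_{f_M}(T).
\]
This depends only on $f(v)$, its type, and the number of red agents in~$K$. Hence the labeling of~$K$ is described up to symmetry by a vector giving, for each type, how many agents of that type are red. The effect of $K$ on the rest of the instance is captured by three quantities: the number of red agents in~$K$, the number of agents of~$K$ under illusion, and, for each $u\in M$, the number of red neighbors of~$u$ in~$K$. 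The last is the sum of the red counts over types containing~$u$.

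The XP algorithm is a dynamic program over the cliques $K_1,\dots,K_m$ in order. A state records, for each $u\in M$, the number $r_u\le|\agents|$ of red neighbors of~$u$ in the cliques processed so far, together with the total number $R$ of red agents processed so far. The stored value is the maximum number of agents in processed cliques under illusion. There are $|\agents|^{\Oh{k}}$ states. The transition for clique $K_i$ maximizes, for each target change of $(r_u)_u$ and~$R$, the number of illusioned agents in~$K_i$. To do this, enumerate all per-type red-count vectors, of which there are at most $|\agents|^{2^k}$; this is still XP, since $2^k$ is a function of~$k$. At the end we keep only states where blue is the majority, $r_u+\RedNo_{f_M}(\neighb_M(u))>\deg(u)/2$ holds exactly for $u\in M'$, and the value plus $|M'|$ is at least $q|\agents|$.

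The main obstacle is checking that the per-clique computation is correct and fits in the XP bound. For a fixed red count $\rho$ in~$K$, the illusion status of each agent depends only on its type and color. So instead of enumerating count vectors, we can guess~$\rho$ and then split $\rho$ among types with a small knapsack DP over the $\le 2^k$ types, tracking the contribution to each $r_u$. This takes $|\agents|^{\Oh{k}}$ time per type, giving overall running time $|\agents|^{\Oh{k}}\cdot 2^{\Oh{k}}$. Correctness follows the pattern of \Cref{clm:nfold1,clm:nfold2}: any solution restricted to the cliques gives a DP path whose value is at least its illusion count, and any accepting path can be expanded into a labeling by choosing, within each type, arbitrary agents to be red, which is valid because they are twins.
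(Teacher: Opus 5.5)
Your proposal is correct and follows the paper's proof. You guess $f_M$, run a DP over the cliques of $G-M$ indexed by the total red count and the per-modulator red-neighbor counts $(r_u)_{u\in M}$, and reduce the labelings of a clique to per-type red counts via the twin classes. Your extra guess of $M'$ is redundant, since $(r_u)_{u\in M}$ already determines which modulator agents are under illusion, but it is harmless. Your direct counting formula for an agent's status, which depends only on its type, its color and the clique's red count, does the job of the paper's twin-swap claim.

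Your final refinement is also valid. You guess the clique's red count $\rho$ and then distribute $\rho$ over the types with a knapsack that tracks $(r_u)_{u\in M}$. This improves the paper's bound of $2^{\Oh{k}}\cdot|\agents|^{\Oh{2^k}}$, which comes from enumerating all per-type red-count vectors, to $2^{\Oh{k}}\cdot|\agents|^{\Oh{k}}$.
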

\begin{proof}
	Let $M$ be a set of agents of size $k=\operatorname{dcg}(G)$ such that $G - M$ is a disjoint union of cliques $K_1,\ldots,K_\ell$. This time, our algorithm first guesses the labeling~$f_M$ of~$M$ and then extends~$f_M$ using a dynamic-programming approach over the cliques $K_1,\ldots,K_\ell$. 
    Let $K_{j \to \ell}={\vset{K_j}\cup \cdots \cup \vset{K_\ell}}$.
	
	For every clique $j\in[\ell]$, we have a dynamic programming table~$\DP_j$ containing values $\DP_j[R,\myvec{r}]$, where
    \begin{itemize}
        \item $R\in[|\agents|/2]$ is the number of red agents in~$K_{j \to \ell}$, and
        \item $\myvec{r} = (r_i)_{i\in M}$, where $r_i \in [|\agents|]_0$ is the number of red neighbors of agent $i\in M$ in~$K_{j \to \ell}$.
    \end{itemize}
    The value $\DP_j[R,\myvec{r}]$ stores the maximum number of agents in $K_{j \to \ell}$ 
    with a majority-red neighborhood
    over all possible extensions~$f$ of~$f_M$ such that $\RedNo_f(K_{j \to \ell}) = R$ and for every $i\in M$ it holds that $\RedNo_f(\neighb_{K_{j \to \ell}}(i) ) = r_i$.

    Before we formally define the computation, we show an auxiliary lemma proving that, for each clique $K_j$, the number of relevant labelings can be bounded in terms of our parameter. Specifically, we show that if we swap the labels of two twins, the number of agents with majority-red neighborhood remains the same.

    \begin{claim}%
        \label{clm:MI:XP:dcg:twinsExchange}
        Let $f$ be an extension of $f_M$ and $u,v$ be a pair of agents in~$K_j$ for some $j \in [\ell]$ such that $f(u) = \red$, $f(v) = \blue$, and $u$ and~$v$ are twins, i.e., $\neighb(u)\setminus\{v\} = \neighb(v)\setminus\{u\}$. Then, 
        $\mrn[f](N)=\mrn[f'](N)$
        where
        \[
            f'(w) = \begin{cases}
                \red  & \text{if } w = v,\\
                \blue & \text{if } w = u\text{, and}\\
                f(w)  & \text{otherwise.}
            \end{cases}
        \]
    \end{claim}
    \begin{claimproof}
        Let $N_{uv}=(\neighb(u)\cup \neighb(v)) \setminus \{u,v\}$. Each agent that is not in~$N_{uv}$ is not affected by the swap of labels. For each agent in $N_{uv}$, the number of red neighbors also remains the same, since $\neighb(u)\setminus\{v\} = \neighb(v)\setminus\{u\} = N_{uv}$. Finally, it may happen that $v$ has a majority-red neighborhood in $f$, but not in $f'$. 
        In this case, $\RedNo_f(N_{uv})=\BlueNo_f(N_{uv})$, and thus
        $u$ does not have majority-red neighborhood in~$f$, but it does so in~$f'$. This proves the claim.
    \end{claimproof}

    Let $K_j$, $j\in[\ell]$, be a clique in~$G$. Observe that the agents in $K_j$ can be partitioned into $\Oh{2^k}$ classes of twins called \emph{types}. Moreover, by \Cref{clm:MI:XP:dcg:twinsExchange}, swapping the labels of two twin agents does not change the number of agents with a majority-red neighborhood. Hence, we say that two labelings~$f$ and~$f'$ are \emph{equivalent} if they have the same number of red agents of each type. Consider the resulting equivalence classes of all labelings of~$K_j$, and let $\mathcal{F}_j$ contain exactly one labeling from each such class. By the previous arguments, it holds that $|\mathcal{F}_j| \in |\agents|^\Oh{2^k}$. Moreover, we use $\mathcal{F}_j^R$ to denote the set of all  labelings in $\mathcal{F}_j$ with exactly $R$ red agents.

    The base case of the computation is when $j = \ell$. To compute $\DP_\ell[R,\myvec{r}]$, we check if we can satisfy the requirement prescribed by the pair $(R,\myvec{r})$. Specifically, we iterate over all labelings of~$K_j$ in~$\mathcal{F}_j$ that label exactly~$R$ agents red, and for each of them we check  
    that they satisfy that each modulator agent $i\in M$ has exactly $r_i$ red neighbors. For all such labelings, we compute the number of agents with majority-red neighborhoods, and store the maximum value in $\DP_j[R,\myvec{r}]$.  Formally, the computation is as follows:\footnote{Here and elsewhere, we let the maximum of some value taken over an empty set be defined as $-\infty$.}
    \[
        \DP_\ell[R,\myvec{r}] = 
        \max\limits_{\substack{
                f\colon \mathcal{F}_\ell^{R}
                \text{ s.t. }
                \forall i\in M\colon \\\RedNo_f(
                 \neighb_{\vset{K_\ell}}(i) )
                =  r_i
            }} \!\!\!\!\!\! \mrn[f \cup f_M](\vset{K_\ell})
    \]

    \def\fopt{f^{\textup{opt}}}
    For $j \in [1,\ell-1]$, we compute $\DP_j[R,\myvec{r}]$ as follows. %
    We iterate over all possible values for the number~$R'$ of red agents in~$K_j$ (from among all $R$ red agents in~$K_{j \to \ell}$),
    and over all labelings of~$K_j$ in~$\mathcal{F}_j$ that label exactly~$R'$ agents in~$K_j$ red. 
    For all values of~$R'$ and for all labelings in~$f \in  \mathcal{F}^{R'}_j$, we compute the maximum number of agents with a majority-red neighborhood in~$K_{j \to \ell}$ under some suitable labeling respecting our choices for~$R'$, $f$, and~$(R,\myvec{r})$. 
    This can be done by using previously computed values from  table~$\DP_{j+1}$. More precisely, we need to ensure that 
    (i) there are $R-R'$ red agents in~$K_{j+1 \to \ell}$, and that 
    (ii) each modulator agent~$i \in M$ has exactly $r_i$ red neighbors in~$K_{j\to \ell}$, that is, 
    $i$ has $r_i-\RedNo_f(\neighb_{\vset{K_j}}(i))$ red neighbors in~$K_{j+1 \to \ell}$.
    The maximum number of agents in~$K_{j+1 \to \ell}$ with a majority-red neighborhood under such a labeling is stored in~$\DP_j[R-R',\myvec{r'}]$ where $\myveccomp{r'}_i=r_i-\RedNo_{f}(\neighb_{\vset{K_j}}(i))$ for $i \in M$.
    Formally, the computation is defined as follows.
    \begin{align*}
        & \DP_j[R,\myvec{r}] = \max_{R'\in[\min\{|\vset{K_j}|,R\}]_0} \max_{f\in\mathcal{F}_j^{R'}} \\ 
        & \phantom{a} \mrn[f \cup f_M](\vset{K_j}) + \DP_{j+1}[R-R',(r_i - \RedNo_f(\neighb_{\vset{K_j}}(i)))_{i\in M}]\,.
    \end{align*}

    Once the dynamic programming table for clique $K_1$ is correctly computed, we check whether there exists a pair $(R,\myvec{r})$ for some $R\in[\lfloor\frac{|\agents|-1}{2}\rfloor - \RedNo_{f_M}(M)]_0$ and $\myvec{r} \in [|\agents|]_0^k$ so that 
    \begin{multline*}
    \label{eq:DP:final-umi}
        \DP_1[R,\myvec{r}]  + |\{ i \in M \colon r_i + \RedNo_{f_M}(\neighb_M(i)) > \frac{\deg(i)}{2}\}|\\
        > q \cdot|\agents|.
    \end{multline*}
    If so, we return \Yes; otherwise, we return \No. 
    Note that our choice for~$R$ ensures that blue is the majority color, while our second condition guarantees that the number of agents under majority illusion is at least~$q|\agents|$. Thus, the correctness of our algorithm is implied by the following claim for $j=1$.

    \begin{restatable}%
    {claim}{clmMIdcgDPcorrect}
    \label{clm:MI:dcg:DP-correct}
        For all values of $j \in [\ell]$, $R \in [\lfloor|\agents|/2\rfloor]_0$, and $\myvec{r}=(r_i)_{i \in M}$ with $r_i \in [|\agents|]_0$, we have
        $\DP_j[R,\myvec{r}]=w$ (where $w \in \mathbb{N}$)  
        if and only if
        there exists a labeling~$f$ of~$G$ that extends~$f_M$ and satisfies the following conditions:
        \begin{enumerate}[label=(\alph*),left=2pt]
            \item $\RedNo_f(K_{j \to \ell})=R$,
            \item $\RedNo_f(\neighb_{K_{j \to \ell}}(i))=r_i\,$ for each $i \in M$, and
            \item $\mrn[f](K_{j \to \ell})=w$.
        \end{enumerate}
    \end{restatable}
    \begin{claimproof}
        We prove the statement by induction on~$j$, for decreasing values of~$j$ from~$\ell$ to~$1$.
    
        Consider the case for $j=\ell$, and recall that 
        \begin{equation}
            \label{eq:MI:dcg:DP:correctness:last}
            \DP_\ell[R,\myvec{r}] = 
                \max\limits_{\substack{
                    f\colon \mathcal{F}_\ell^{R}
                    \text{ s.t. }
                    \forall i\in M\colon \\\RedNo_f(
                     \neighb_{\vset{K_\ell}}(i) )
                    =  r_i
                }} \!\!\!\!\!\! \mrn[f \cup f_M](\vset{K_\ell}), 
        \end{equation}
    
        First, assume that $f$ is a labeling of~$G$ satisfying conditions (a)--(c) of the claim for some $R$ and $\myvec{r}$. 
        By the definition of equivalent labelings and of~$\mathcal{F}_{\ell}^R$, it is clear that there must be a labeling~$f'$ in~$\mathcal{F}^R_{\ell}$ that is equivalent to the restriction of~$f$ to~$K_\ell$ and satisfies $\RedNo_{f'}(\neighb_{K_\ell}(i))=\myveccomp{r}_i$ for each modulator~$i \in M$. Then in the expression~\eqref{eq:MI:dcg:DP:correctness:last} for computing~$\DP_\ell[R,\myvec{r}]$, the maximum is taken over a non-empty subset, and in particular, the value of~$\DP_\ell[R,\myvec{r}]$ will be set to at least the number of agents with a majority-red neighborhood in~$K_\ell$ under~$f'$, that is, to $\mrn[f'](K_\ell)$ which, in turn, equals $\mrn[f](K_\ell)$ by the definition of equivalence of labelings. 
        
        Suppose now that $\DP_\ell[R,\myvec{r}]=w$; 
        we show that there exist a labeling~$f$ of~$G$ satisfying  conditions (a)--(c) of the claim. Let $f \in \mathcal{F}^R_{\ell}$ for which $\mrn[f \cup f_M](K_\ell)=w$ and also 
        $\RedNo_f(\neighb_{K_\ell} (i))=\myveccomp{r}_i$ for each $i \in M$ holds. Let $f'$ be an arbitrary extension of~$f \cup f_M$ over~$\agents$. 
        Then conditions~(a)--(c) clearly hold for~$f'$ by our choice of~$f'$.
    
        \smallskip
        Consider now the case $j \in [\ell-1]$, 
        and recall that 
        \begin{equation}
            \label{eq:MI:dcg:DP:correctness:gen}
            \begin{split}
                & \DP_j[R,\myvec{r}] = \max_{R'\in[\min\{|\vset{K_j}|,R\}]_0} \max_{f\in\mathcal{F}_j^{R'}} \\ 
                & \phantom{a} \mrn[f \cup f_M](\vset{K_j}) + \DP_{j+1}[R-R',(r_i - \RedNo_f(\neighb_{\vset{K_j}}(i)))_{i\in M}]\,.
            \end{split}
        \end{equation}
    
        Assume first that $f$ is a labeling of~$G$ satisfying conditions (a)--(c) for some $R$ and~$\myvec{r}$. We aim to show that $\DP_j[R,\myvec{r}] \geq \mrn[f](K_{j \to \ell})$. 
        Let $R'=\RedNo_f(K_j)$; then we have $R' \leq \min\{|K_j|,R\}$. 
        Moreover, let $f'$ %
        be the restriction of~$f$ to~$K_j$.
        Observe that by its definition, $\mathcal{F}_j^{R'}$ must contain some labeling $\hat{f}$ of~$K_j$ that is equivalent to~$f'$.
        Therefore, the maximum in the expression~\eqref{eq:MI:dcg:DP:correctness:gen} for~$\DP_j[R,\myvec{r}]$ is taken over a non-empty set. To lower-bound the maximum value obtained in~\eqref{eq:MI:dcg:DP:correctness:gen}, 
        first note that  
        $\mrn[\hat{f} \cup f_M](K_j)=\mrn[f](K_j)$. Second, note that
        \begin{itemize}
            \item $\RedNo_f(K_{j+1 \to \ell})=R-R'$ and 
            \item $\RedNo_f(\neighb_{K_{j+1 \to \ell}}(i))=r_i - \RedNo_f(\neighb_{K_j}(i))$ for each $i \in M$. 
        \end{itemize}
        Hence, if $w'$ denotes the number of agents with majority-red neighborhood in~$K_{j+1 \to \ell}$ under~$f$, then the induction hypothesis yields that $\DP_{j+1}[R-R',(r_i-\RedNo_f(\neighb_{K_j}(i)))_{i \in M}]$ has value at least~$w'$.
        Thus, the right-hand side of~\eqref{eq:MI:dcg:DP:correctness:gen} is at least $\mrn[f](K_j)+w'=\mrn[f](K_{j \to \ell})=w$, as desired.
    
        Suppose now that $\DP_j[R,\myvec{r}]=w$ for some integer~$w$. We are going to construct a labeling of~$G$ satisfying conditions (a)--(c). Let $R' \in [\min\{|K_j|,R\}]$ and $f \in \mathcal{F}_j^{R'}$ be the values under which the right-hand side of~\eqref{eq:MI:dcg:DP:correctness:gen} takes its maximum (namely, $w$) in the computation for~$\DP_j[R,\myvec{r}]$.
        Let $w'$ denote the value of~$\DP_{j+1}[R-R',(r_i-\RedNo_f(\neighb_{K_j}(i)))_{i \in M}]$;
        then we must have 
        $w=\mrn[f \cup f_M](K_j)+w'$, due to the computation in~\eqref{eq:MI:dcg:DP:correctness:gen}.
        By induction, we know that there exists a labeling~$f'$ of~$G$ extending~$f_M$ for which 
        \begin{enumerate}[label=(\alph*'),left=2pt]
            \item $\RedNo_{f'}(K_{j+1 \to \ell}) = R-R'$,
            \item $\RedNo_{f'}(\neighb_{K_{j+1 \to \ell}}(i))=r_i-\RedNo_f(\neighb_{K_j}(i))$ for all $i \in M$, and
            \item $\mrn[f'](K_{j+1 \to \ell})= w'$.
        \end{enumerate}
        Define $\hat{f}$ as an arbitrary labeling of~$G$ that coincides with~$f$ over~$K_j$, with $f'$ over~$K_{j+1 \to \ell}$, and with~$f_M$ over~$M$.
        First, there are $(R-R')$ red agents in~$K_{j+1 \to \ell}$ by~(a') and $R'$ red agents in~$K_j$ under~$\hat{f}$ (because $f \in \mathcal{F}_j^{R'}$), so (a) holds for~$\hat{f}$.
        Second, each modulator agent~$i \in M$ has exactly $r_i$ red neighbors in~$K_{j \to \ell}$ due to~(b'). 
        Third, the number of agents with majority-red neighborhood under~$\hat{f}$ is $w'$ in~$K_{j+1 \to \ell}$ due to~(c'),
        and thus $\mrn[\hat{f}](K_{j \to \ell})=w'+\mrn[f \cup f_M](K_j)=w$. Hence, $\hat{f}$ fulfills conditions (a)--(c), as required. This shows the claim for $j \in [\ell-1]$.
    \end{claimproof}
    
    \proofsubparagraph{Running Time} Each cell in each table $\DP_j$ can be computed in $n^\Oh{2^k}$ time where $n=|\agents|$, as we exhaustively try all labelings in $\mathcal{F}_j^R$. %
    There are $n\cdot n^k \in n^\Oh{k}$ different cells for every table, and overall $\Oh{n}$ different tables. That is, the dynamic programming procedure runs in $n^\Oh{2^k}$ time, and requires $n^\Oh{k}$ space, as it is enough to store tables only for two consecutive cliques~$K_j$ and $K_{j+1}$. As we compute the table for each of the $2^\Oh{k}$ possible labelings~$f_M$ for~$M$, the running time of the algorithm is $2^\Oh{k}\cdot n^\Oh{2^k}$.
\end{proof}

The next result shows that the running time of the algorithm in \Cref{thm:MI:XP:dcg} cannot be improved to obtain an FPT algorithm with parameter~$\dcg(G)$, since the problem is \Wh. 
Its proof relies on a reduction from the \probName{Multidimensional Subset Sum} problem whose input consists of an integer~$d$, a set $\calS=\{\myvec{s}_1,\dots,\myvec{s}_n\}$ of $d$-dimensional vectors from $\mathbb{N}^d$, and a target vector $\myvec{t} \in \mathbb{N}^d$; the task is to decide whether there exists a subset $\calS' \subseteq \calS$ for which $\sum_{\myvec{s} \in \calS'} \myvec{s} = \myvec{t}$. This problem is known to be %
\Wh with parameter~$d$ even if its input is encoded in unary~\cite{GanianKO2021}.

\begin{restatable}%
{theorem}{thmMIWharddcg}
\label{thm:MI:W1hard:dcg}
    \probName{$q$-Majority Illusion} is \Wh when parameterized by the distance to cluster graph~$\operatorname{dcg}(G)$ of the social network~$G$.
\end{restatable}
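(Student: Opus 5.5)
We give a parameterized reduction from \probName{Multidimensional Subset Sum} parameterized by the dimension~$d$, with unary input~\cite{GanianKO2021}. Given $(d,\calS,\myvec{t})$ with $\calS=\{\myvec{s}_1,\dots,\myvec{s}_n\}$, we build a social network whose modulator~$M$ has $\Oh{d}$ agents, namely $m_j$ and $\ol{m}_j$ for each $j\in[d]$ plus a constant number of auxiliary agents. After deleting~$M$, the graph is a disjoint union of cliques. The plan mirrors the structure of the proof of \Cref{thm:MI:W1hard:dds}, which was in fact modelled on this reduction.

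\textbf{Vector gadgets.} For each vector~$\myvec{s}_i$ we create a clique~$K_i$ of a fixed large size. Inside~$K_i$ we designate groups $L_i^j$ of size $\myvec{s}_i[j]$ adjacent to~$m_j$, and groups $\ol{L}_i^j$ of size $\beta-\myvec{s}_i[j]$ adjacent to~$\ol{m}_j$, where $\beta$ is an upper bound on the coordinates. We also add padding agents. The intended behaviour is all-or-nothing: in an optimal labeling each clique is either entirely red among its designated agents (the vector is ``selected'') or entirely blue. To enforce this we use the fact that inside a clique every agent sees almost the same neighborhood. The agents of $K_i$ are under illusion exactly when more than roughly half of $K_i$ is red. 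So a clique contributes $|K_i|$ agents under illusion if it has just over half red agents, and essentially none otherwise. By \Cref{clm:MI:XP:dcg:twinsExchange}, only the number of red agents per twin type matters. We set the global red budget (via padding pairs, as with the set~$A$ in \Cref{thm:MI:W1hard:dds}) and the threshold~$q$ so that the optimum requires each clique to be either ``activated'' (at least half red, with the red agents being the designated ones) or left blue. To encode a target~$\myvec{t}$ rather than a half-sum, we take the count of activated cliques to be free. Alternatively we reduce from \probName{Multidimensional Halfset Sum} via \Cref{lem:halfset-sum}, which fixes the number of activated cliques at $n/2$ and makes the budget argument identical to the one for \dds.

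\textbf{Modulators.} The agents $m_j$ and $\ol{m}_j$ receive a few extra red or blue neighbors from small auxiliary cliques. These are calibrated so that $m_j$ is under illusion iff $\sum_{\text{selected}}\myvec{s}_i[j]\ge \myvec{t}[j]$, and $\ol{m}_j$ is under illusion iff $\sum_{\text{selected}}(\beta-\myvec{s}_i[j])\ge \frac{n}{2}\beta-\myvec{t}[j]$. Requiring all $2d$ modulators to be under illusion (via the choice of~$q$) then forces equality in every coordinate, exactly as in the proof of \Cref{thm:MI:W1hard:dds}.

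\textbf{Correctness and main obstacle.} The forward direction is a direct construction: color the designated agents of the selected cliques red and check the counts. For the backward direction we take an optimal labeling, maximize auxiliary red agents as a tiebreak, and perform exchange arguments. We show that modulators and auxiliary agents are red, that each clique is either fully activated or contributes nothing, and that exactly the prescribed number of cliques is activated. The main obstacle is the exchange argument for partially red cliques. Red agents can be moved between twin types within a clique without harm, but we must exclude a labeling that spreads red agents thinly across cliques, or activates a clique using non-designated agents, in order to gain modulator illusions. I would handle this by making each clique's illusion gain (its size) much larger than $2d$, so that losing even one clique's activation costs more than all modulators together. A tight budget equation like~\eqref{eq:red-leaves-fvs} then forces every activated clique to use exactly its designated red agents.
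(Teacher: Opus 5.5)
\textbf{Gap: the red budget is not actually fixed.} You borrow the padding pairs from \Cref{thm:MI:W1hard:dds}, but that mechanism fails once stars become cliques. In the stars reduction, activating a star costs $d\beta+1$ red leaves yet gains only one illusion, the center, because every leaf is under illusion anyway once the center is red. So red agents always do better in the pairs of~$A$, at one illusion per red agent, and the exchange argument showing that $A$ is all red goes through. In your gadget this is reversed. Suppose each agent of a clique of size~$2\sigma$ has at most one (red) modulator neighbor. Then the clique is entirely under illusion once it has $\sigma+1$ red agents, i.e., roughly two illusions per red agent, while a padding pair yields at most one. Starting from an intended labeling, recolor $\sigma+1$ red padding agents blue, losing at most $\sigma+1$ illusions. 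Use these red agents to activate one more vector clique, gaining $2\sigma$. You explicitly choose $\sigma$ much larger than $2d$ plus the auxiliary agents, so this gain exceeds losing every modulator. Consequences:
\begin{itemize}
\item An optimal labeling does not have all padding red, and your exchange step fails: un-redding one agent of a minimally activated clique costs about $\sigma$ illusions, not one.
\item The number of activated cliques is not pinned to $n/2$, so no tight budget equation like~\eqref{eq:red-leaves-fvs} follows.
\item Every instance, including no-instances of \probName{Multidimensional Halfset Sum}, would reach your threshold: activate $n/2+1$ arbitrary cliques and ignore the modulators.
\end{itemize}
Enlarging the cliques, which you propose as the remedy for your ``main obstacle'', only widens this gap.

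\textbf{The missing idea: padding too valuable to raid.} The paper places the padding in $n_\alpha$ cliques $A_i$ of size $2\alpha$, with $\alpha=2|\ssum|+d+1$ exceeding the entire vector-and-modulator part. A padding clique with at most $\alpha$ red agents contributes at most $\alpha$ illusions, which makes the threshold unreachable. Hence every $A_i$ has at least $\alpha+1$ red agents, which caps the red agents in $K\cup M$ at $|\myvec{t}|+d+1$.

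\textbf{Smaller points.}
\begin{itemize}
\item A clique with exactly half its agents red is not ``essentially'' illusion-free: all its blue agents are under illusion. The paper handles this with an extra modulator $m_0$ adjacent to every designated agent, so a clique of size $2|\myvec{s}_i|$ is fully under illusion when exactly its designated half is red. It also scales coordinates to multiples of $2d+6$, which rules out half-selected cliques despite the slack $d+1$.
\item Your free-count \probName{Multidimensional Subset Sum} variant is inconsistent with your thresholds $\frac{n}{2}\beta-\myvec{t}[j]$ for the agents $\ol{m}_j$; only the Halfset variant fits them.
\item The paper needs no $\ol{m}_j$ at all. Clique sizes $2|\myvec{s}_i|$ let the budget bound the total $|\myvec{s}_i|$ over selected vectors by $|\myvec{t}|$. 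The lower bounds $\sum_{\text{selected}}\myvec{s}_i[j]\ge\myvec{t}[j]$ enforced by the agents $m_j$ then force equality in every coordinate.
\end{itemize}
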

\begin{proof}
    We present a parameterized reduction from the unary-encoded version of \probName{Multidimensional Subset Sum}; let $I=(d,\calS,\myvec{t})$ be our input instance with $\calS=\{\myvec{s}_1,\dots,\myvec{s}_n\}$. 
    Observe first that we may assume w.l.o.g.\ that $d$ is odd, as otherwise we can increase~$d$ by one, appending a coordinate with value~$0$ to each vector in $\calS \cup \{\myvec{t}\}$. 
    Second, we may also assume that each coordinate of $\myvec{t}$ is positive:
    this can be achieved by adding a sufficiently large vector $\myvec{v} \in \mathbb{N}^d$ satisfying $\myvec{v} > \sum_{\myvec{s} \in \calS} \myvec{s}$) to the set~$\calS$ and also to the target vector~$\myvec{t}$, this way ensuring that any solution will contain $\myvec{v}$.
    Third, we can also assume that each coordinate of~$\myvec{v}$ is divisible by~$2d+6$ (in particular,
    $|\myvec{v}|\geq 2d+6$) for each $\myvec{v} \in \calS \cup \{t\}$, as otherwise we can simply multiply each vector in $\calS \cup \{\myvec{t}\}$ by~$2d+6$.
    Fourth, we may additionally assume that not only $\myvec{t} \leq  \sum_{\myvec{s} \in \calS} \myvec{s}$ but also $2\myvec{t} < \sum_{\myvec{s} \in \calS} \myvec{s}$ holds, as otherwise we can add a vector $\myvec{t}'>\myvec{t}$ to~$\calS$ which can never be contained in a solution.
    We create an instance $J=(N,G,q)$ of \probName{$q$-Majority Illusion} as follows.

    \proofsubparagraph{Construction}
    Let us introduce some notation. We will denote by $\ssum =\sum_{i=1}^n \myvec{s}_i$ the sum of all input vectors, and we define integers $\alpha=2|\ssum|+d+1$ and $n_\alpha=|\ssum|-|\myvec{t}|-\frac{d+2}{2}$.
    
    For each $i \in [n_\alpha]$, we define a clique $A_i$ over exactly $2\alpha$ agents. The agents in~$A_i$ will not be connected to any agent outside~$A_i$ except for $i=1$.
    Next, for each $i \in [n]$, we define a clique $K_i$ of size~$2|\myvec{s}_i|$ over agent set $\bigcupdot_{j=0}^d K_i^j$, where $|K_i^0|=|\myvec{s}_i|$ and $|K_i^j|=\myvec{s}_i[j]$ for each $j \in [d]$. Finally, we add a set $M=\{m_0,m_1,\dots,m_d\}$ of \emph{modulator agents}. We connect $m_0$ to each agent in~$K:=\bigcup_{i \in [n],j \in [d]} K_i^j$. 
    Next, for each $j \in [d]$ we connect $m_j$ to each agent in~$\bigcup_{i \in [n]} K_i^j$, as well as to exactly $\ssum[j]-2\myvec{t}[j]+1$ agents from~$A_1$ in a way that no two agents in~$A_1$ are connected to more than one modulator agent. 
    Note that this is possible, because 
    $\sum_{j \in [d]}(\ssum[j]-2\myvec{t}[d]+1)=|\ssum|-2|\myvec{t}|+d>0$.
    This finishes the definition of the social network~$G$ over the set~$N$ of agents;
    note that $|N|=2\alpha \cdot n_{\alpha} + 2|\ssum|+d$.

    We set $q$ such that $q|N|=2\alpha \cdot  n_\alpha + 2|\myvec{t}|+d$.

    Observe that the constructed social network has cluster deletion number at most~$d+1$, because deleting all modulator agents from~$G$ leaves a collection of disjoint cliques. Thus, the presented reduction is a parameterized one.

    \proofsubparagraph{Correctness}
    We are going to show that $I$ is a yes-instance of \probName{Multidimensional Subset Sum} if and only if $J$ is a yes-instance of \probName{$q$-Majority Illusion}.

    Assume first that there are at least $q|N|$ agents under majority illusion for some red--blue labeling~$\rblab$ (with blue being the majority color). We need the following observation, obtained by simple counting:
    \begin{claim}\label{clm:even-clique}
        Let $C$ be a clique of even size whose each agent has at most one red neighbor outside~$C$. If it contains fewer than~$|C|/2$ red agents, then no agent in~$C$ is under majority illusion; if $C$ contains exactly $|C|/2$ red agents, then no red agent is under majority illusion.    
    \end{claim}
    By \Cref{clm:even-clique} applied to the cliques $A_i$, $i \in [n_\alpha]$,  we know that either each of~$A_i$ contains at least $\alpha+1$ red agents, or the number of agents under majority illusion is at most 
    \begin{align*}
        2\alpha \cdot(n_\alpha-1)+\alpha+2|\ssum|+d=2\alpha \cdot n_\alpha -1 < q|N|.
    \end{align*}
    Thus, there must be at least $\alpha+1$ red agents in each clique~$A_i$, ensuring that all agents in these cliques are under majority illusion.

    Next, we show that 
    \begin{equation}
        \label{eq:reds-in-cliques}
        \RedNo_\rblab(K \cup M)\leq |\myvec{t}|+d+1
    \end{equation}
    as otherwise 
    \begin{align*}
        \RedNo_\rblab(N) &\geq (\alpha+1) n_{\alpha}+|\myvec{t}|+d+2 \\
        &= 
    (\alpha-1)n_\alpha + 2|\ssum|-|\myvec{t}| \geq \BlueNo_\rblab(N)
    \end{align*}
    where the equality follows from our choice of~$n_\alpha$, and the last inequality from $|K \cup M|=2|\ssum|+d+1$.

    Let us say that some clique $K_i$, $i \in [n]$, is \emph{selected}, \emph{half-selected}, or \emph{rejected} by~$\rblab$ if $\RedNo_\rblab(K_i)$ is at least~$|K_i|/2=|\myvec{s}_i|$, exactly $|\myvec{s}_i|-1$, or less, respectively. By an analog of \Cref{clm:even-clique} for cliques where some agents can have two neighbors outside the clique, we get that if $K_i$ is rejected, then no agent in it is under majority illusion; if $K_i$ is half-selected, then at most~$|\myvec{s}_i|$ agents in~$K_i$ can be under majority illusion, whereas this is possible for all agents in a selected clique. 
    
    \begin{claim}
    \label{clm:clique-selection}
        There are exactly~$2|\myvec{t}|$ agents under majority illusion in~$K$, all in selected cliques of total size~$2|\myvec{t}|$.
    \end{claim}
    \begin{claimproof}
    We first show that no clique in~$K$ is half-selected by~$\rblab$. 
    Suppose for the sake of contradiction that $K_i$ is such a clique. 
    Then $\RedNo_\rblab(K_i)=|\myvec{s}_i|-1$ and the number of agents in~$K_i$ under majority illusion is $|\myvec{s}_i| \leq 2|\myvec{s}_i| - (2d-6)$ by our assumption on~$|\myvec{s}_i|$. 
    Let $\beta$ denote the number of red vertices in~$K \setminus K_i$; recall that by \eqref{eq:reds-in-cliques} we have $\beta+|\myvec{s}_i|-1 \leq |\myvec{t}|+d+1$. Then there are at most $2\beta$ agents under majority illusion in~$K \setminus K_i$.
    Thus,
    the number of agents in $K$ under majority illusion is at most
    \begin{align*}
        2\beta+|\myvec{s}_i| &\leq
        2\beta+2|\myvec{s}_i|-(2d+6)  \\
        &\leq 
        2(|\myvec{t}|+d+2)-2d-6 
        =2|\myvec{t}|-2
    \end{align*}    
    which yields that the total number of agents under majority illusion is $2\alpha \cdot n_\alpha+2|\myvec{t}|-2+(d+1)<q|N|$, a contradiction. 
    
    Thus, all cliques in~$K$ are either selected or rejected. Hence, the set of agents under majority illusion in~$K$ is exactly the agent set of all selected cliques. Since there are at most~$|\myvec{t}|+d+1$ agents in~$K$ and each $|\myvec{s}_i|$, $i \in [n]$, as well as $|\myvec{t}|$ are divisible by $2d+6$, we know that the maximum total size of selected cliques is $2|\myvec{t}|$.   
    Note that if the the total size these cliques is less than~$2|\myvec{t}|$ or not all agents in them are under majority illusion, then the number of such agents in~$K$ is at most $2\myvec{t}-(2d+6)$, and we again reach a contradiction because this yields less than~$q|N|$ such agents in total.
    This proves %
    the claim.
    \end{claimproof}

    Let $\calS'$ denote the set of vectors~$\myvec{s}_i \in \calS$ for which $K_i$ is selected by~$\rblab$. By \Cref{clm:clique-selection}, we know that 
    \begin{equation}
    \label{eq:tsum-reached}
        \sum_{\myvec{s}_i \in \calS} |\myvec{s}_i|=|\myvec{t}|.
    \end{equation}

    Observe that agent~$m_0$ is not under majority illusion, as it has $2|\ssum|$ neighbors but at most $|\myvec{t}|+d+1<2|\myvec{t}|<|\ssum|$ holds by our assumptions.
    Hence, all other modulator vertices must be under majority illusion, resulting in exactly $2\alpha \cdot n_\alpha+2|\myvec{t}|+d=q|N|$ such agents in total. This means that each modulator agent~$m_j$, $j \in [d]$ needs at least~$\myvec{t}[j]$ red neighbors as otherwise $m_i$ has at most 
    \[
    \ssum[j]-2\myvec{t}[j]+1+(\myvec{t}[j]-1)=\ssum[j]-\myvec{t}[j]
    \]
    red neighbors but at least $\ssum-\myvec{t}[j]+1$ blue neighbors. 
    Using that $\myvec{t}[j]$ as well as $|K_i^j|$ for each $i \in [n]$  is divisible by $2d+6$, we get that $m_j$ must have at least $\myvec{t}[j]$ neighbors in selected cliques. 
    This means that 
    \begin{equation}
    \label{eq:solution-bound}
        \sum_{\myvec{s}_i \in \calS'} \myvec{s}_i[j] \geq \myvec{t}[j]
    \end{equation}
    holds for each $j \in [d]$. Taking into account \eqref{eq:tsum-reached} we get that 
    equality must hold for each $j \in [d]$ in~\eqref{eq:solution-bound}. Hence, we obtain $\sum_{\myvec{s} \in \calS'} \myvec{s}=\myvec{t}$, proving that $\calS'$ is a solution to our instance~$I$
    of \probName{Multidimensional Subset Sum}. 

    \medskip
    For the other direction, let us now assume that $\calS' \subseteq \calS$ is a solution for~$I$. We create a red--blue labeling~$\rblab$ with majority color blue as follows. 
    We color exactly $\alpha+1$ agents red in each~$A_i$, $i \in [n_\alpha]$; we take care to color red all agents in~$A_i$ connected to any modulator agent (it is easy to check that there are not more than $|A_i|/2$ such agents).
    This yields ${2\alpha \cdot n_\alpha}$ agents under majority illusion in these cliques. 
    Next, we color all modulator vertices as well as all agents in~$\bigcup\{K_i^j:\myvec{s}_i \in \calS',j \in [d]\}$ red. This way, each modulator agent~$m_j$, $j \in [d]$, has exactly $\myvec{t}[j]$ red neighbors in~$K$ and is therefore under majority illusion. 
    Finally, all agents in~$K_i$ with $\myvec{s}_i \in \calS'$ are under majority illusion, because in each such clique exactly half of the agents are red, and moreover, each red agent has two red neighbors (namely, two modulator agents)    
    outside the clique. This yields in total  exactly $2\alpha \cdot n_\alpha + 2|\myvec{t}|+d= q|N|$ agents under majority illusion. 

    It remains to check that blue is indeed the majority color under~$\rblab$. Note that 
    \begin{align*}
        \RedNo_\rblab(N) &= (\alpha+1) n_{\alpha}+|\myvec{t}|+d+1 \\
        &= 
    (\alpha-1)n_\alpha + 2|\ssum|-|\myvec{t}|-1 = \BlueNo_\rblab(N)-1
    \end{align*}
    which proves the correctness of the reduction.
\end{proof}

We conclude with an \FPT algorithm for the parameterization by the cluster edge deletion number, that is, the minimum number of edges we need to remove to obtain a disjoint union of cliques. %

\begin{theorem}\label{thm:MI:FPT:cdn}
    \probName{$q$-Majority Illusion} is in \FPT when parameterized by the cluster edge deletion number $\cdn(G)$ of the social network $G$.
\end{theorem}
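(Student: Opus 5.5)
Let $F$ be a set of $k=\cdn(G)$ edges whose removal turns $G$ into a disjoint union of cliques $K_1,\dots,K_\ell$, and let $P$ be the set of agents incident to an edge of~$F$, so $|P|\le 2k$. The plan mirrors \Cref{thm:MI:XP:dcg}, but we exploit that only the agents of~$P$ have neighbours outside their own clique. First, we guess the labeling $f_P$ of~$P$ ($2^{\Oh{k}}$ choices). Every agent outside~$P$ in a clique~$K_j$ has neighbourhood exactly $K_j\setminus\{v\}$, so all such agents in the same clique are twins. By \Cref{clm:MI:XP:dcg:twinsExchange}, for a fixed labeling of~$P$ only the number $x_j$ of red agents in $K_j\setminus P$ matters. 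Cliques are then of two kinds: the at most $2k$ \emph{special} cliques that contain an agent of~$P$, and the remaining \emph{plain} cliques, which are isolated complete graphs.

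Second, we handle the plain cliques. For a plain clique $K$ of size $s$ with $x$ red agents, a red agent sees $x-1$ red neighbours out of $s-1$, and a blue agent sees $x$. Hence the number of agents with a majority-red neighbourhood is a simple step function $g_s(x)$: all $s$ agents if $x-1>(s-1)/2$, only the blue ones if $x>(s-1)/2\ge x-1$, and none otherwise. The minimal useful red counts are therefore the values $x$ with $x-1>(s-1)/2$, i.e. $x=\lfloor s/2\rfloor+1$, giving $s$ illusioned agents, or possibly the threshold where only the blue agents are fooled. Together with the global budget of at most $\lfloor(|\agents|-1)/2\rfloor$ red agents, this is a knapsack-like problem over plain cliques with unary weights. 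It can be solved in polynomial time by a DP over plain cliques indexed by the number of red agents used, as in \Cref{lem:LMRMI:poly:tree}. This yields a function $\Phi(R)=$ the maximum number of illusioned agents in plain cliques using exactly $R$ red agents.

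Third, we handle the special cliques. For each special clique $K_j$ we must choose $x_j\in[0,|K_j\setminus P|]$. Whether an agent of $P\cap K_j$ is under illusion depends on $x_j$ together with $f_P$ restricted to its clique and its $F$-neighbours, and the twins in $K_j\setminus P$ depend only on $x_j$ and $f_P$. Thus each special clique contributes a function $h_j(x_j)$, and the only coupling among special cliques and plain cliques is the total red count. We then run a DP over the $\le 2k$ special cliques and finish by combining with $\Phi$: maximize $\sum_j h_j(x_j)+\Phi(R)$ subject to $\sum_j x_j+R+\RedNo_{f_P}(P)\le\lfloor(|\agents|-1)/2\rfloor$. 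All of these are one-dimensional DPs over a budget bounded by~$|\agents|$, so each guess costs polynomial time, and the total running time is $2^{\Oh{k}}|\agents|^{\Oh{1}}$. We answer \Yes iff the optimum reaches $q|\agents|$; by symmetry we may assume blue is the majority.

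The step I expect to be the main obstacle is making the decoupling rigorous. There are two points. First, we must check that agents of~$P$ are affected only through $x_j$ of their own clique and $f_P$; this holds because $F$-edges join only agents of~$P$, whose labels are fixed. Second, we must handle the exact-majority constraint correctly, in particular the fact that adding more red agents can never hurt within a clique but is limited by the global budget. I would state an explicit claim showing that the optimum equals the combined DP value, proving both directions by splitting any labeling into its clique-wise red counts.
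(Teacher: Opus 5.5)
Your proposal is correct and takes essentially the same route as the paper: guess the labels of the at most $2k$ endpoints of $F$, use the twin-swap claim so that each clique is described only by its number of red agents, and run a one-dimensional DP over the cliques indexed by the total red budget $\le\lfloor(|\agents|-1)/2\rfloor$. Your split into plain and special cliques is only a reorganisation of the paper's single DP over $K_1,\dots,K_\ell$. One harmless slip: for odd $s$, the least $x$ with $x-1>(s-1)/2$ is $\lfloor s/2\rfloor+2$, not $\lfloor s/2\rfloor+1$; this does not matter, because your DP ranges over all exact red counts.
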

\begin{proof}
    Let $F$ be a set of $k=\cdn(G)$ edges such that $G-F$ is a disjoint union of cliques $K_1,\ldots,K_\ell$, and let $W$ be the set of all endpoints of the edges in~$F$; note that $|W|\le 2k$. %
    Since each $K_j$ is a clique that contains no edge of~$F$, every edge of~$F$ joins two \emph{distinct} cliques. We record two immediate consequences:
    \begin{enumerate}[label=(\roman*),left=2pt]
        \item every agent in $\agents\setminus W$ has all of its neighbors inside its own clique; and
        \item for every agent $w\in W$, each neighbor of~$w$ lying outside its own clique is an endpoint of an edge of~$F$, and hence belongs to~$W$.
    \end{enumerate}

    As in the proof of \Cref{thm:MI:XP:dcg}, our algorithm first guesses the labeling~$f_W$ of~$W$ (there are at most~$2^{2k}$ of them) and then extends~$f_W$ by dynamic programming over the cliques $K_1,\ldots,K_\ell$. The decisive difference lies in property~(ii): once~$f_W$ is fixed, the number of red and blue neighbors that any agent $w\in W$ has \emph{outside} its own clique is a constant determined by~$f_W$. Thus, the status of every agent---including those in~$W$---is determined locally, within a single clique, and we no longer need the vector~$\myvec r$ recording the red neighbors of the modulator agents across cliques. A single counter for the total number of red agents suffices; this is exactly what turns the \XP{} algorithm into an \FPT{} one.

    Let $K_{j \to \ell}=\vset{K_j}\cup \cdots \cup \vset{K_\ell}$. For every clique $j\in[\ell]$ we maintain a table~$\DP_j$  which stores a value  $\DP_j[R]$ for each   $R\in[\lfloor|\agents|/2\rfloor]_0$, interpreted as the number of red agents in~$K_{j \to \ell}$. The value $\DP_j[R]$ stores the maximum number of agents in $K_{j \to \ell}$ with a majority-red neighborhood over all extensions~$f$ of~$f_W$ with $\RedNo_f(K_{j \to \ell})=R$.

    \proofsubparagraph{Relevant Labelings of a Clique} Fix a clique $K_j$. By property~(i), all agents in $\vset{K_j}\setminus W$ %
    are pairwise twins. %
    By \Cref{clm:MI:XP:dcg:twinsExchange}, swapping the labels of two such %
    twins preserves the number of agents with a majority-red neighborhood; moreover, since the swap affects only agents in $\vset{K_j}$, it preserves this number \emph{within} $K_j$ as well. Hence, the number of agents in~$K_j$ with a majority-red neighborhood depends only on~$f_W$ and on the number of red agents in~$K_j$. Accordingly, for each feasible red count~$R'$ we let $\mathcal{F}_j^{R'}$ contain a single representative labeling of the %
    agents $\vset{K_j}\setminus W$ obtained by coloring any $R'-\RedNo_{f_W}(W\cap\vset{K_j})$ of them red; this set is empty (and the corresponding maximum is $-\infty$) unless
    \[
        \RedNo_{f_W}(W\cap\vset{K_j}) \;\le\; R' \;\le\; \RedNo_{f_W}(W\cap\vset{K_j}) + |\vset{K_j}\setminus W|.
    \]
    In particular $|\mathcal{F}_j^{R'}|\le 1$, so $K_j$ has only $\Oh{|\vset{K_j}|}$ relevant labelings in total---in contrast to the bound $|\agents|^{\Oh{2^k}}$ of \Cref{thm:MI:XP:dcg}.

    \proofsubparagraph{Computation} The base case $j=\ell$ is computed as
    \[
        \DP_\ell[R] = \max_{f\in\mathcal{F}_\ell^{R}} \mrn[f \cup f_W](\vset{K_\ell}),
    \]
    where each $\mrn[f\cup f_W](\vset{K_\ell})$ is evaluated directly: an %
    agent~$a$ in~$K_{\ell} \setminus W$ sees only $\vset{K_\ell}\setminus\{a\}$, and an agent $w\in W\cap\vset{K_\ell}$ sees, in addition, its red and blue neighbors outside~$K_\ell$, whose number is fixed by~$f_W$ thanks to property~(ii). 
    
    For $j\in[\ell-1]$, we have
    \begin{multline}
        \label{eq:cdn-DP}
        \DP_j[R]  = %
        \\
        \max_{R'\in[\min\{|\vset{K_j}|,R\}]_0} \ 
        \max_{f\in\mathcal{F}_j^{R'}} \Big( \mrn[f\cup f_W](\vset{K_j}) + \DP_{j+1}[R-R'] \Big).
    \end{multline}

    Once $\DP_1$ is computed, we return \Yes{} if and only if there is some $R\in\big[\lfloor\tfrac{|\agents|-1}{2}\rfloor\big]_0$ with
    \[
        \DP_1[R] \;>\; q\cdot|\agents|.
    \]
    Our choice of~$R$ guarantees that blue is the (strict) majority color, so an agent is under majority illusion exactly when it has a majority-red neighborhood; since $K_{1\to\ell}=\agents$, the value $\DP_1[R]$ then equals the number of agents under majority illusion. Correctness thus follows from the next claim for $j=1$.

    \begin{claim}
    \label{clm:MI:ced:DP-correct}
        For all $j \in [\ell]$ and $R \in [\lfloor|\agents|/2\rfloor]_0$, we have $\DP_j[R]=w$ (with $w\in\mathbb{N}$) if and only if there exists a labeling~$f$ of~$G$ extending~$f_W$ such that
        \begin{enumerate}[label=(\alph*),left=2pt]
            \item $\RedNo_f(K_{j \to \ell})=R$, and
            \item $\mrn[f](K_{j \to \ell})=w$.
        \end{enumerate}
    \end{claim}
    \begin{claimproof}
        The proof is by induction on decreasing~$j$ and is identical in structure to that of \Cref{clm:MI:dcg:DP-correct}, with two simplifications: there is no condition on the vector~$\myvec r$, and---by property~(ii)---the contribution $\mrn[f\cup f_W](\vset{K_j})$ of a single clique is fully determined by~$f_W$ and by $\RedNo_f(\vset{K_j})$, the status of every agent of $W\cap\vset{K_j}$ included.

        For $j=\ell$, given a labeling~$f$ satisfying (a)--(b), the definition of $\mathcal{F}_\ell^{R}$ provides a representative whose restriction to the agents in~$K_\ell \setminus W$ has the same number of reds as~$f$; by \Cref{clm:MI:XP:dcg:twinsExchange}, it also yields the same number of agents with a majority-red neighborhood.
        This implies $\DP_\ell[R]\ge w$, and the converse direction is immediate by extending any maximizing representative arbitrarily over~$\agents$.

        For $j\in[\ell-1]$, suppose that $f$ satisfies (a)--(b), and let $R'=\RedNo_f(\vset{K_j})$. The restriction of~$f$ to~$\vset{K_j}$ is equivalent to some $f'\in\mathcal{F}_j^{R'}$ with $\mrn[f'\cup f_W](\vset{K_j})=\mrn[f](\vset{K_j})$, and $\RedNo_f(K_{j+1\to\ell})=R-R'$; by the induction hypothesis $\DP_{j+1}[R-R']\ge \mrn[f](K_{j+1\to\ell})$, so the right-hand side of (\ref{eq:cdn-DP}) is at least $\mrn[f](\vset{K_j})+\mrn[f](K_{j+1\to\ell})=w$. Conversely, if~$R'$ and $f\in\mathcal{F}_j^{R'}$ attain the maximum~$w$ in the recurrence, the induction hypothesis supplies a labeling~$f'$ of~$G$ witnessing $\DP_{j+1}[R-R']$; combining $f'$ on $K_{j+1\to\ell}$, $f$ on the agents of~$K_j \setminus W$, and~$f_W$ on~$W$ gives a labeling of~$G$ satisfying (a)--(b), because every clique's contribution is computed independently of the others' once~$f_W$ is fixed.
    \end{claimproof}

    \proofsubparagraph{Running Time} For a fixed~$f_W$ and clique~$K_j$, the value $\mrn[f\cup f_W](\vset{K_j})$ of each of the $\Oh{|\vset{K_j}|}$ relevant labelings is computed in $\Oh{|\agents|}$ time, so all per-clique contributions are tabulated in $\Oh{|\agents|^2}$ time overall. Each of the $\Oh{|\agents|}$ tables has $\Oh{|\agents|}$ cells, and each cell is evaluated by taking a maximum over $\Oh{|\agents|}$ values of~$R'$; thus the dynamic program runs in $\Oh{|\agents|^3}$ time per choice of~$f_W$ and uses $\Oh{|\agents|}$ space, as only two consecutive tables need to be stored. As there are at most $2^{2k}$ labelings~$f_W$ of~$W$, the total running time is $2^{\Oh{k}}\cdot|\agents|^{\Oh{1}}$.
\end{proof}

\section{Conclusions}

We investigated how the complexity of \qMI changes with various parameters of the underlying social network~$G$; our results outline the limits of tractability in terms of such graph parameters, providing us with efficient algorithms and, in many cases, matching lower bounds. Our results show that to render the problem tractable, one of three conditions must be satisfied: 
a) $G$ is within bounded \emph{edge}-distance from some graph class where the problem is polynomial-time solvable (as in feedback-edge set or cluster edge deletion number), 
b) $G$ is within a bounded \emph{vertex}-distance from components of bounded size (as in vertex integrity), or 
c) $G$ consists of a bounded number of well-structured blocks (as in neighborhood diversity).

We left open the complexity of \qMI when parameterized by the \emph{twin-cover number} of~$G$: by \Cref{thm:MI:XP:dcg}, the problem is in \XP when parameterized by the number of agents whose removal yields a cluster graph, and by \Cref{thm:MI:FPT:cdn}, the problem is in \FPT when parameterized by the number of edges whose removal yields a cluster graph. As the twin-cover number is between these two parameterizations, it would be interesting to see if this additional twin structure of the cliques allows for an \FPT algorithm. More generally, what kind of other structural properties allow for \FPT algorithms for our problem?

\section*{Acknowledgments}

This project was co-funded by the European Union under the project Robotics and Advanced Industrial Production (reg. no. CZ.02.01.01/00/22\_008/0004590). The second author is supported by the Hungarian Academy of Sciences under its Momentum Programme (LP2021-2) and its J\'anos Bolyai Research Scholarship.
Most of this work was done while the first author was at AGH University of Krakow. 

\bibliography{references}

\begin{thebibliography}{42}
\providecommand{\natexlab}[1]{#1}

\bibitem[{Asch(1951)}]{Asch1951}
Asch, S. 1951.
\newblock Effects of Group Pressure Upon the Modification and Distortion of
  Judgments.
\newblock In \emph{Groups, leadership and men; research in human relations},
  177--190.

\bibitem[{Auletta, Ferraioli, and Greco(2020)}]{AulettaFG2020}
Auletta, V.; Ferraioli, D.; and Greco, G. 2020.
\newblock On the Complexity of Reasoning About Opinion Diffusion Under Majority
  Dynamics.
\newblock \emph{Artificial Intelligence}, 284: 103288.

\bibitem[{Banerjee(1992)}]{Banerjee1992}
Banerjee, A.~V. 1992.
\newblock A Simple Model of Herd Behavior.
\newblock \emph{The Quarterly Journal of Economics}, 107(3): 797--817.

\bibitem[{Bodlaender et~al.(2020)Bodlaender, Hanaka, Jaffke, Ono, Otachi, and
  van~der Zanden}]{BodlaenderHJOOZ2020}
Bodlaender, H.~L.; Hanaka, T.; Jaffke, L.; Ono, H.; Otachi, Y.; and van~der
  Zanden, T.~C. 2020.
\newblock Hedonic Seat Arrangement Problems.
\newblock In Seghrouchni, A. E.~F.; Sukthankar, G.; An, B.; and Yorke{-}Smith,
  N., eds., \emph{Proceedings of the 19th International Conference on
  Autonomous Agents and Multiagent Systems, {AAMAS}~'20}, 1777--1779. IFAAMAS.

\bibitem[{Bredereck, Chen, and Woeginger(2013)}]{BredereckCW2013}
Bredereck, R.; Chen, J.; and Woeginger, G.~J. 2013.
\newblock Are There Any Nicely Structured Preference Profiles Nearby?
\newblock In Rossi, F., ed., \emph{Proceedings of the 23rd International Joint
  Conference on Artificial Intelligence, {IJCAI}~'13}, 62--68. {IJCAI/AAAI}.

\bibitem[{Bredereck and Elkind(2017)}]{BredereckE2017}
Bredereck, R.; and Elkind, E. 2017.
\newblock Manipulating Opinion Diffusion in Social Networks.
\newblock In Sierra, C., ed., \emph{Proceedings of the 20th International Joint
  Conference on Artificial Intelligence, {IJCAI}~'17}, 894--900. ijcai.org.

\bibitem[{Bredereck, Jacobs, and Kellerhals(2020)}]{BredereckJK2020}
Bredereck, R.; Jacobs, L.; and Kellerhals, L. 2020.
\newblock Maximizing the Spread of an Opinion in Few Steps: Opinion Diffusion
  in Non-Binary Networks.
\newblock In Bessiere, C., ed., \emph{Proceedings of the 29th International
  Joint Conference on Artificial Intelligence, {IJCAI}~'20}, 1622--1628.
  ijcai.org.

\bibitem[{Castiglioni et~al.(2021)Castiglioni, Ferraioli, Gatti, and
  Landriani}]{CastiglioniFGL2021}
Castiglioni, M.; Ferraioli, D.; Gatti, N.; and Landriani, G. 2021.
\newblock Election Manipulation on Social Networks: Seeding, Edge Removal, Edge
  Addition.
\newblock \emph{Journal of Artificial Intelligence Research}, 71: 1049--1090.

\bibitem[{Cialdini and Goldstein(2004)}]{CialdiniG2004}
Cialdini, R.~B.; and Goldstein, N.~J. 2004.
\newblock Social Influence: {C}ompliance and Conformity.
\newblock \emph{Annual Review of Psychology}, 55: 591--621.

\bibitem[{Cinelli et~al.(2021)Cinelli, Morales, Galeazzi, Quattrociocchi, and
  Starnini}]{CinelliMGQS2021}
Cinelli, M.; Morales, G. D.~F.; Galeazzi, A.; Quattrociocchi, W.; and Starnini,
  M. 2021.
\newblock The Echo Chamber Effect on Social Media.
\newblock \emph{Proceedings of the National Academy of Sciences of the USA},
  118(9): e2023301118.

\bibitem[{Cygan et~al.(2015)Cygan, Fomin, Kowalik, Lokshtanov, Marx, Pilipczuk,
  Pilipczuk, and Saurabh}]{cygan2015parameterized}
Cygan, M.; Fomin, F.~V.; Kowalik, {\L}.; Lokshtanov, D.; Marx, D.; Pilipczuk,
  M.; Pilipczuk, M.; and Saurabh, S. 2015.
\newblock \emph{Parameterized Algorithms}.
\newblock Springer.

\bibitem[{Dippel et~al.(2025)Dippel, la~Tour, Niu, Roy, and
  Vetta}]{DippelTNRV2025}
Dippel, J.; la~Tour, M.~D.; Niu, A.; Roy, S.; and Vetta, A. 2025.
\newblock Eliminating Majority Illusion Is Easy.
\newblock In Walsh, T.; Shah, J.; and Kolter, Z., eds., \emph{Proceedings of
  the 39th AAAI Conference on Artificial Intelligenc {AAAI}~'25}, 13763--13770.
  {AAAI} Press.

\bibitem[{Dom et~al.(2008)Dom, Lokshtanov, Saurabh, and Villanger}]{DomLSV2008}
Dom, M.; Lokshtanov, D.; Saurabh, S.; and Villanger, Y. 2008.
\newblock Capacitated Domination and Covering: {A} Parameterized Perspective.
\newblock In Grohe, M.; and Niedermeier, R., eds., \emph{Proceedings of the 3rd
  International Workshop on Parameterized and Exact Computation, {IWPEC}~'08},
  volume 5018 of \emph{Lecture Notes in Computer Science}, 78--90. Berlin,
  Heidelberg: Springer.

\bibitem[{Doucette et~al.(2019)Doucette, Tsang, Hosseini, Larson, and
  Cohen}]{DoucetteTHLC2019}
Doucette, J.~A.; Tsang, A.; Hosseini, H.; Larson, K.; and Cohen, R. 2019.
\newblock Inferring True Voting Outcomes in Homophilic Social Networks.
\newblock \emph{Autonomous Agents and Multiagent Systems}, 33(3): 298--329.

\bibitem[{Dvoř{\'{a}}k et~al.(2017)Dvoř{\'{a}}k, Eiben, Ganian, Knop, and
  Ordyniak}]{DvorakEGKO2017}
Dvoř{\'{a}}k, P.; Eiben, E.; Ganian, R.; Knop, D.; and Ordyniak, S. 2017.
\newblock Solving Integer Linear Programs With a Small Number of Global
  Variables and Constraints.
\newblock In Sierra, C., ed., \emph{Proceedings of the 26th International Joint
  Conference on Artificial Intelligence, {IJCAI}~'17}, 607--613. ijcai.org.

\bibitem[{Eiben, Ganian, and Ordyniak(2018)}]{EibenGO2018}
Eiben, E.; Ganian, R.; and Ordyniak, S. 2018.
\newblock A Structural Approach to Activity Selection.
\newblock In Lang, J., ed., \emph{Proceedings of the 27th International Joint
  Conference on Artificial Intelligence, {IJCAI}~'18}, 203--209. ijcai.org.

\bibitem[{Eisenbrand et~al.(2025)Eisenbrand, Hunkenschr{\"{o}}der, Klein,
  Kouteck{\'{y}}, Levin, and Onn}]{EisenbrandHKKLO2025}
Eisenbrand, F.; Hunkenschr{\"{o}}der, C.; Klein, K.; Kouteck{\'{y}}, M.; Levin,
  A.; and Onn, S. 2025.
\newblock Sparse Integer Programming Is Fixed-Parameter Tractable.
\newblock \emph{Mathematics of Operations Research}, 50(3): 2141--2156.

\bibitem[{Faliszewski et~al.(2018)Faliszewski, Gonen, Kouteck{\'{y}}, and
  Talmon}]{FaliszewskiGKT2018}
Faliszewski, P.; Gonen, R.; Kouteck{\'{y}}, M.; and Talmon, N. 2018.
\newblock Opinion Diffusion and Campaigning on Society Graphs.
\newblock In Lang, J., ed., \emph{Proceedings of the 27th International Joint
  Conference on Artificial Intelligence, {IJCAI}~'18}, 219--225. ijcai.org.

\bibitem[{Ferrara et~al.(2016)Ferrara, Varol, Davis, Menczer, and
  Flammini}]{FerraraVDMF2016}
Ferrara, E.; Varol, O.; Davis, C.~A.; Menczer, F.; and Flammini, A. 2016.
\newblock The Rise of Social Bots.
\newblock \emph{Communications of the ACM}, 59(7): 96--104.

\bibitem[{Fioravantes, Gahlawat, and Melissinos(2025)}]{FioravantesGM2025}
Fioravantes, F.; Gahlawat, H.; and Melissinos, N. 2025.
\newblock Exact Algorithms and Lower Bounds for Forming Coalitions of
  Constrained Maximum Size.
\newblock In Walsh, T.; Shah, J.; and Kolter, Z., eds., \emph{Proceedings of
  the 39th AAAI Conference on Artificial Intelligence, {AAAI}~'25'},
  13847--13855. {AAAI} Press.

\bibitem[{Fioravantes et~al.(2026)Fioravantes, Gahlawat, Melissinos, and
  Schierreich}]{FioravantesGMS2026}
Fioravantes, F.; Gahlawat, H.; Melissinos, N.; and Schierreich, {\v{S}}. 2026.
\newblock Individual Rationality in Constrained Hedonic Games: Additively
  Separable and Fractional Preferences.
\newblock In \emph{Proceedings of the 25th International Conference on
  Autonomous Agents and Multiagent Systems, {AAMAS}~'26}, 2848--2857. Richland,
  SC: IFAAMAS.

\bibitem[{Fioravantes et~al.(2025{\natexlab{a}})Fioravantes, Knop, Křišťan,
  Melissinos, Opler, and Vu}]{FioravantesKKMO2025a}
Fioravantes, F.; Knop, D.; Křišťan, J.~M.; Melissinos, N.; Opler, M.; and
  Vu, T.~A. 2025{\natexlab{a}}.
\newblock Solving Multiagent Path Finding on Highly Centralized Networks.
\newblock In Walsh, T.; Shah, J.; and Kolter, Z., eds., \emph{Proceedings of
  the 39th AAAI Conference on Artificial Intelligence, {AAAI}~'25'},
  23186--23193. {AAAI} Press.

\bibitem[{Fioravantes et~al.(2025{\natexlab{b}})Fioravantes, Lahiri, Lauerbach,
  Sabater, Sieper, and Wolf}]{FioravantesLLSSW2025}
Fioravantes, F.; Lahiri, A.; Lauerbach, A.; Sabater, L.; Sieper, M.~D.; and
  Wolf, S. 2025{\natexlab{b}}.
\newblock Eliminating Majority Illusion.
\newblock In Das, S.; Now{\'{e}}, A.; and Vorobeychik, Y., eds.,
  \emph{Proceedings of the 24th International Conference on Autonomous Agents
  and Multiagent Systems, {AAMAS}~'25}, 749--757. IFAAMAS.

\bibitem[{Ganian, Klute, and Ordyniak(2021)}]{GanianKO2021}
Ganian, R.; Klute, F.; and Ordyniak, S. 2021.
\newblock On Structural Parameterizations of the Bounded-Degree Vertex Deletion
  Problem.
\newblock \emph{Algorithmica}, 83: 297--336.

\bibitem[{Ganian and Korchemna(2021)}]{GanianK2021}
Ganian, R.; and Korchemna, V. 2021.
\newblock The Complexity of Bayesian Network Learning: Revisiting the
  Superstructure.
\newblock In Ranzato, M.; Beygelzimer, A.; Dauphin, Y.~N.; Liang, P.; and
  Vaughan, J.~W., eds., \emph{Proceedings of the 34th Annual Conference on
  Neural Information Processing Systems 2021, NeurIPS~'21}, 430--442.

\bibitem[{Garimella et~al.(2018)Garimella, Morales, Gionis, and
  Mathioudakis}]{GarimellaMGM2018}
Garimella, K.; Morales, G. D.~F.; Gionis, A.; and Mathioudakis, M. 2018.
\newblock Political Discourse on Social Media: Echo Chambers, Gatekeepers, and
  the Price of Bipartisanship.
\newblock In Champin, P.; Gandon, F.; Lalmas, M.; and Ipeirotis, P.~G., eds.,
  \emph{Proceedings of the 2018 World Wide Web Conference, {WWW}~'18},
  913--922. {ACM}.

\bibitem[{Grandi(2017)}]{Grandi2017}
Grandi, U. 2017.
\newblock Social Choice and Social Networks.
\newblock In Endriss, U., ed., \emph{Trends in Computational Social Choice},
  169--184. AI Access.

\bibitem[{Grandi et~al.(2025)Grandi, Kanesh, Lisowski, Ramanujan, and
  Turrini}]{GrandiKLRT2025}
Grandi, U.; Kanesh, L.; Lisowski, G.; Ramanujan, M.~S.; and Turrini, P. 2025.
\newblock A Complexity-Theoretic Analysis of Majority Illusion in Social
  Networks.
\newblock \emph{Journal of Artificial Intelligence Research}, 83: 26.

\bibitem[{Gr{\"{u}}ttemeier and Komusiewicz(2020)}]{GruttemeierK2020}
Gr{\"{u}}ttemeier, N.; and Komusiewicz, C. 2020.
\newblock Learning Bayesian Networks Under Sparsity Constraints: {A}
  Parameterized Complexity Analysis.
\newblock In Bessiere, C., ed., \emph{Proceedings of the 29th International
  Joint Conference on Artificial Intelligence, {IJCAI}~'20}, 4245--4251.
  ijcai.org.

\bibitem[{Guo et~al.(2022)Guo, Li, Neumann, Neumann, and Nguyen}]{GuoLN0N2022}
Guo, M.; Li, J.; Neumann, A.; Neumann, F.; and Nguyen, H. 2022.
\newblock Practical Fixed-Parameter Algorithms for Defending Active Directory
  Style Attack Graphs.
\newblock In \emph{Proceedings of the 36th {AAAI} Conference on Artificial
  Intelligence, {AAAI}~'22}, 9360--9367. {AAAI} Press.

\bibitem[{Jana and Roy(2026)}]{JanaR2026}
Jana, S.; and Roy, S. 2026.
\newblock Eliminating Illusion in Directed Networks.
\newblock arXiv:2604.02395.

\bibitem[{Kempe, Kleinberg, and Tardos(2015)}]{KempeKT2015}
Kempe, D.; Kleinberg, J.~M.; and Tardos, {\'{E}}. 2015.
\newblock Maximizing the Spread of Influence Through a Social Network.
\newblock \emph{Theory of Computing}, 11: 105--147.

\bibitem[{Knop, Schierreich, and Suchý(2026)}]{KnopSS2026}
Knop, D.; Schierreich, {\v{S}}.; and Suchý, O. 2026.
\newblock Balancing the Spread of Two Opinions in Sparse Social Networks.
\newblock \emph{Artificial Intelligence}, 357: 104563.

\bibitem[{Lerman, Yan, and Wu(2016)}]{LermanYW2016}
Lerman, K.; Yan, X.; and Wu, X.-Z. 2016.
\newblock The "Majority Illusion" in Social Networks.
\newblock \emph{PLoS ONE}, 11(2): 1--13.

\bibitem[{Musco, Musco, and Tsourakakis(2018)}]{MuscoMT2018}
Musco, C.; Musco, C.; and Tsourakakis, C.~E. 2018.
\newblock Minimizing Polarization and Disagreement in Social Networks.
\newblock In Champin, P.; Gandon, F.; Lalmas, M.; and Ipeirotis, P.~G., eds.,
  \emph{Proceedings of the 2018 World Wide Web Conference, {WWW}~'18},
  369--378. {ACM}.

\bibitem[{Schierreich(2023)}]{Schierreich2023}
Schierreich, {\v{S}}. 2023.
\newblock Maximizing Influence Spread Through a Dynamic Social Network (Student
  Abstract).
\newblock In Williams, B.; Chen, Y.; and Neville, J., eds., \emph{Proceedings
  of the 37th {AAAI} Conference on Artificial Intelligence, {AAAI}~'23},
  16316--16317. {AAAI} Press.

\bibitem[{UNESCO(2023)}]{UNESCO2023}
UNESCO. 2023.
\newblock Survey on the Impact of Online Disinformation and Hate Speech.
\newblock Technical report, United Nations Educational, Scientific and Cultural
  Organization.

\bibitem[{Venema{-}Los, Christoff, and Grossi(2025)}]{VenemaLosCG2025}
Venema{-}Los, M.; Christoff, Z.; and Grossi, D. 2025.
\newblock On the Graph Theory of Majority Illusions: Theoretical Results and
  Computational Experiments.
\newblock \emph{Autonomous Agents and Multi-Agent Systems}, 39(2): 39.

\bibitem[{Vosoughi, Roy, and Aral(2018)}]{VosoughiRA2018}
Vosoughi, S.; Roy, D.; and Aral, S. 2018.
\newblock The Spread of True and False News Online.
\newblock \emph{Science}, 359(6380): 1146--1151.

\bibitem[{Wilder and Vorobeychik(2018)}]{WilderV2018}
Wilder, B.; and Vorobeychik, Y. 2018.
\newblock Controlling Elections Through Social Influence.
\newblock In Andr{\'{e}}, E.; Koenig, S.; Dastani, M.; and Sukthankar, G.,
  eds., \emph{Proceedings of the 17th International Conference on Autonomous
  Agents and MultiAgent Systems, {AAMAS}~'18}, 265--273. IFAAMAS.

\bibitem[{Woolley and Howard(2018)}]{WoolleyH2018}
Woolley, S.~C.; and Howard, P.~N. 2018.
\newblock \emph{Computational Propaganda: Political Parties, Politicians, and
  Political Manipulation on Social Media}.
\newblock Oxford University Press.
\newblock ISBN 9780190931407.

\bibitem[{Yang et~al.(2020)Yang, Varol, Hui, and Menczer}]{YangVHM2020}
Yang, K.; Varol, O.; Hui, P.; and Menczer, F. 2020.
\newblock Scalable and Generalizable Social Bot Detection Through Data
  Selection.
\newblock In \emph{Proceedings of the 34th {AAAI} Conference on Artificial
  Intelligence, {AAAI}~'20}, 1096--1103. {AAAI} Press.

\end{thebibliography}

\end{document}